\newtheorem{theorem}{Theorem}[section]
\newtheorem{proposition}[theorem]{Proposition}
\newtheorem{fact}[theorem]{Fact}
\newtheorem{question}[theorem]{Question}
\newtheorem{lemma}[theorem]{Lemma}
\theoremstyle{definition}
\newtheorem{Definition}[theorem]{Definition}
\newenvironment{fminipage}%
  {\begin{Sbox}\begin{minipage}}%
  {\end{minipage}\end{Sbox}\fbox{\TheSbox}}
\def\prob#1#2{\mbox{\normalfont{Pr}}_{#1}\left[ \normalfont{#2} \right]}
\def\defeq{\stackrel{\mathrm{def}}{=}}
\def\abs#1{\left|#1  \right|}
\newcommand\ppi{\boldsymbol{\pi}}
\newcommand\cchi{\boldsymbol{\chi}}
\newcommand\uu{\boldsymbol{\mathit{u}}}
\newcommand\vv{\boldsymbol{\mathit{v}}}
\newcommand\ww{\boldsymbol{\mathit{w}}}
\newcommand\xx{\boldsymbol{\mathit{x}}}
\newcommand\LL{\boldsymbol{\mathit{L}}}
\newcommand\WW{\boldsymbol{\mathit{W}}}
\newcommand\sink{v_\textnormal{sink}}
\newcommand\Path[1]{\textsc{Path}_{#1}}
\renewcommand\line{\textsc{Line}}
\renewcommand\square[1]{\textsc{Square}_{#1}}
\newcommand\cube[1]{\text{$d$-}\textsc{Cube}_{#1}}
\renewcommand{\deg}[1]{\text{deg}(#1)}
\newcommand{\er}{\mathcal{R}_{\textnormal{eff}}}
\renewcommand\ww{{\mathit{w}}}
\renewcommand\WW{{\mathit{w}}}
\newcommand{\miP}[1]{{\min_{\leq #1}({\WW})}}
\newcommand{\maP}[1]{{\max_{\leq #1}({\WW})}}
\newcommand{\distr}[1]{\WW \sim \mathcal{W}^{\line}(i)}
\newcommand{\Z}{\mathbb{Z}}
\newcommand{\R}{\mathbb{R}}
\newcommand{\tcl}{\textnormal{tcl}}
\let\@@pmod\pmod
\DeclareRobustCommand{\pmod}{\@ifstar\@pmods\@@pmod}
\def\@pmods#1{\mkern4mu({\operator@font mod}\mkern 6mu#1)}
\begin{document}

\title{
Nearly Tight Bounds for Sandpile Transience on the Grid
}

\author{
David Durfee
\\
Georgia Institute of Technology
\\
\href{mailto:ddurfee@gatech.edu}{\texttt{ddurfee@gatech.edu}}
\and
Matthew Fahrbach\thanks{
  Supported in part by a National Science Foundation Graduate Research
  Fellowship under grant DGE-1650044.
}
\\
Georgia Institute of Technology
\\
\href{mailto:matthew.fahrbach@gatech.edu}{\texttt{matthew.fahrbach@gatech.edu}}
\and
Yu Gao\thanks{
  A substantial portion of this work was completed while the author visited the
  Institute for Theoretical Computer Science at Shanghai University of Finance
  and Economics.
}
\\
Georgia Institute of Technology
\\
\href{mailto:ygao380@gatech.edu}{\texttt{ygao380@gatech.edu}}
\and
Tao Xiao\footnotemark[2]
\\
Shanghai Jiao Tong University
\\
\href{mailto:xt\_1992@sjtu.edu.cn}{\texttt{xt\_1992@sjtu.edu.cn}}
}

\date{}
\maketitle


\begin{abstract}
We use techniques from the theory of electrical networks to give nearly tight
bounds for the transience class of the Abelian sandpile model on the
two-dimensional grid up to polylogarithmic factors.
The Abelian sandpile model is a discrete process on graphs that is
intimately related to the phenomenon of self-organized criticality.
In this process, vertices receive grains of sand, and once the number of grains
exceeds their degree, they topple by sending grains to their neighbors. 
The transience class of a model is the maximum number of grains
that can be added to the system before it necessarily reaches its steady-state
behavior or, equivalently, a recurrent state.
Through a more refined and global analysis of electrical potentials and random
walks,
we give an $O(n^4\log^4{n})$ upper bound and an $\Omega(n^4)$ lower bound
for the transience class of the $n \times n$ grid.
Our methods naturally extend to $n^d$-sized $d$-dimensional grids to give 
$O(n^{3d - 2}\log^{d+2}{n})$ upper bounds and
$\Omega(n^{3d -2})$ lower bounds.
\end{abstract}

\pagenumbering{gobble}
\newpage

\pagenumbering{arabic}
\section{Introduction}
\label{sec:intro}
The Abelian sandpile model is the canonical dynamical system used to
study \emph{self-organized criticality}.
In their seminal paper, Bak, Tang, and Wiesenfeld~\cite{BakTW87} proposed the
idea of self-organized criticality to explain several ubiquitous patterns in
nature typically viewed as complex phenomena, such as catastrophic events
occurring without any triggering mechanism, the fractal behavior of mountain
landscapes and coastal lines, and the presence of pink noise in electrical
networks and stellar luminosity.
Since their discovery, self-organized criticality has been observed in an
abundance of disparate scientific fields~\cite{Bak96, WatkinsPCCJ16}, including
condensed matter theory~\cite{WijngaardenWAM06},
economics~\cite{BiondoPR15, ScheinkmanW94},
epidemiology~\cite{SabaMM14},
evolutionary biology~\cite{Phillips14},
high-energy astrophysics~\cite{Aschwanden11, Mineshige94},
materials science~\cite{RamosAM09},
neuroscience~\cite{BrochiniAAASK16, LevinaHG07}, 
statistical physics~\cite{Dhar06, Manna91},
seismology~\cite{SornetteS89}, and
sociology~\cite{KronG09}.
A stochastic process is a self-organized critical system if it naturally evolves
to highly imbalanced critical states where slight local disturbances can 
completely alter the current state.
For example, when pouring grains of sand onto a table,
the pile initially grows in a predictable way, but as it becomes steeper and
more unstable, dropping a single grain can spontaneously cause an avalanche
that affects the entire pile.
Self-organized criticality differs from the critical point of a phase
transition in statistical physics, because a self-organizing system does not
rely on tuning an external parameter. Instead, it is insensitive to all
parameters of the model and simply requires time to reach criticality, which is
known as the transient period.
Natural events empirically operate at a critical point between order and chaos,
thus justifying our study of self-organized criticality.

Dhar~\cite{Dhar90} developed the \emph{Abelian sandpile model} on finite directed graphs with
a sink vertex to further understand self-organized criticality.
The Abelian sandpile model, also known as a chip-firing game~\cite{BjornerLS91chip},
on a graph with a sink is defined as follows.
In each iteration a grain of sand is added to a non-sink vertex of the graph.
While any non-sink vertex $v$ contains at least $\deg{v}$ grains of sand, a
grain is transferred from $v$ to each of its neighbors.
This is known as a \emph{toppling}.
When no vertex can be toppled, the state is \emph{stable} and the iteration
ends.
The sink absorbs and destroys grains, and
the presence of a sink guarantees that every toppling procedure eventually
stabilizes.
An important property of the Abelian sandpile model is that the order
in which vertices topple does not affect the stable state.
Therefore, as the process evolves it produces a sequence of stable states.
From the theory of Markov chains, 
we say that a stable state is \emph{recurrent} if it can be revisited;
otherwise it is \emph{transient}.

In the self-organized critical state of the Abelian sandpile model on a graph
with a sink, 
transient states have zero probability and recurrent states occur
with equal probability~\cite{Dhar90}.
As a result, recurrent configurations model the
steady-state behavior of the system.
Thus, the natural algorithmic question to ask about self-organized
criticality for the Abelian sandpile model is:
\begin{question}
  How long in the worst case does it take for the process to reach its
  steady-state behavior or, equivalently, a recurrent state?
\end{question}
\noindent
Starting with an empty configuration,
if the vertex that receives the grain of sand is chosen uniformly at
random in each step, Babai and Gorodezky~\cite{BabaiG07}
give a simple solution that is polynomial in the number of edges of the graph
using a coupon collector argument.
In the worst case, however,
an adversary can choose where to place the grain of sand in each iteration.
Babai and Gorodezky analyze the \emph{transience class} of the model to
understand its worst-case behavior, which is defined as the maximum number of
grains that can be added to the empty configuration before the configuration
necessarily becomes recurrent.
An upper bound for the transience class of a model is an upper bound for the
time needed to enter self-organized criticality.

\subsection{Results}
We give the first nearly tight bounds (up to polylogarithmic factors)
for the transience class of the Abelian sandpile model on the
$n \times n$ grid with all boundary vertices connected to the sink.
This model was first studied in depth by
Dhar, Ruelle, Sen, and Verma~\cite{DharRSV95}, and it has since been the
most extensively studied Abelian sandpile model due to its role in
algebraic graph theory, theoretical computer science, and statistical physics.
Babai and Gorodezky~\cite{BabaiG07} initially established that the transience
class of the grid is polynomially bounded by $O(n^{30})$, which was unexpected
because there are graphs akin to the grid with exponential transience classes.
Choure and Vishwanathan~\cite{ChoureV12} improved the upper bound for the
transience class of the grid to $O(n^7)$ and gave a lower bound
of~$\Omega(n^{3})$ by viewing the graph as an electrical network and relating
the Abelian sandpile model to random walks on the underlying graph.
Moreover, they conjectured that the transience class of the grid is $O(n^4)$,
which we answer nearly affirmatively.

\begin{restatable}[]{theorem}{mainGrid}\label{thm:mainGrid}
The transience class of the Abelian sandpile model on the $n \times n$ grid is
$O(n^{4} \log^4{n})$.
\end{restatable}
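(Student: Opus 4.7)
The plan is to bound the transience class by a global electrical-potential analysis of the odometer function, refining the pointwise random-walk estimates of Choure and Vishwanathan. Let $u \colon V \to \Z_{\geq 0}$ be the \emph{odometer} that records the number of topplings at each vertex during a complete stabilization from the empty configuration after $k$ adversarial grain additions. Writing $\LL$ for the grid Laplacian (with boundary grounded to $\sink$), $g$ for the vector of total added grains, and $f$ for the final stable configuration, the conservation identity $\LL u = g - f$ exhibits $u$ as the discrete electrical potential driven by the charge $g - f$. The task then reduces to upper-bounding $k = \sum_v g(v)$ over all triples $(g, u, f)$ for which the stable configuration $f$ is transient.

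The first step is to invoke the standard burning-test characterization of recurrence: $f$ is recurrent if and only if stabilizing $f$ together with a single probe configuration topples every non-sink vertex at least once. Contrapositively, a transient $f$ admits a nonempty \emph{frozen set} $S \subseteq V$ on which the probe odometer vanishes. Combining $u \equiv 0$ on $S$ with the stability bound $f(v) \leq \deg{v} - 1$ for $v \in S$ forces the tight linear constraint $\sum_{v \in S,\, w \sim v,\, w \notin S} u(w) = O(|S|)$. This turns $u|_{V \setminus S}$ into the solution of a Dirichlet problem with strongly controlled boundary data on $\partial_{\mathrm{out}} S$, zero data at $\sink$, and interior source $g - f$, and so $u$ and hence $\sum g(v)$ are bounded by the Green's function and effective resistances of the random walk on the grid killed on $\partial_{\mathrm{out}} S \cup \{\sink\}$.

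The third step converts the per-frozen-set estimate into a uniform bound on $k$. Combining sharp grid estimates --- namely, $\Theta(\log n)$ local effective resistance between neighboring interior points and $\Theta(n^2)$ hitting times to the boundary --- with a dyadic decomposition over candidate frozen sets and the $n^2$ count of vertices yields the target $k = O(n^4 \log^4 n)$. The four logarithmic factors absorb $O(\log n)$ dyadic scales for $|S|$, one $\Theta(\log n)$ factor from the local resistance, and two further log factors for amortizing across successive intermediate stabilizations and for covering arguments on the frozen set. The same template handles the $n^d$-sized $d$-dimensional grid after replacing $n^2$ by $n^d$ and substituting the analogous resistance estimates, yielding the companion $O(n^{3d-2} \log^{d+2} n)$ bound.

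The main obstacle is the second step. Frozen sets can be geometrically irregular with long boundaries, so a naive Dirichlet or maximum-principle argument wastes a full factor of $n$ on top of the intended $n^4$ bound. Extracting the polylogarithmic overhead requires carefully accounting for the Dirichlet energy of $u$ on $V \setminus S$ and exploiting effective-resistance monotonicity on the grid in a fine-grained way --- precisely the \emph{more refined and global analysis of electrical potentials and random walks} promised in the introduction. Once this potential-theoretic estimate is in place, aggregating over scales and over stabilizations is a relatively routine multi-scale argument.
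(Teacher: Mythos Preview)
Your proposal is not the paper's approach, and it has a genuine gap.

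The paper never touches the odometer or frozen sets directly. It invokes the Choure--Vishwanathan reduction (Theorem~\ref{thm:reduction}) as a black box, so that the whole task becomes bounding
\[
\max_{\uu,\vv}\ \Bigl(\textstyle\sum_{\xx}\ppi_{\uu}(\xx)\Bigr)\,\ppi_{\uu}(\vv)^{-1}.
\]
For the numerator it swaps $\uu\leftrightarrow\vv$ via potential reciprocity (one $\log n$), decouples the resulting $2$D walk into two independent $1$D walks on $\Path{n}$, and uses the closed-form path potentials to get $\sum_{\vv}\ppi_{\uu}(\vv)=O(\uu_1\uu_2\log^3 n)$. For the denominator it shows by a reflection argument that the far corner $(n,n)$ is the minimizer up to a constant, swaps again (another $\log n$), and then lower-bounds $\ppi_{(n,n)}(\uu)$ by exhibiting a $\Theta(n^2)$ window of walk lengths in which each $1$D component reaches $n$ for the first time with probability $\Omega(\uu_i/n^3)$. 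The $\uu_1\uu_2$ factors cancel, leaving $O(n^4\log^4 n)$. There is no Dirichlet problem on an irregular domain, no dyadic decomposition, and no multi-scale aggregation.

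The gap in your outline is the sentence ``Combining $u\equiv 0$ on $S$\ldots''. You defined $S$ as the frozen set of the \emph{burning test}, i.e.\ the zero set of the \emph{probe} odometer; but $u$ is the odometer of the \emph{original} process, and there is no reason for $u$ to vanish on that $S$. Concretely: on a large grid, drop four grains at the centre. The centre topples once ($u(\text{centre})=1$), the resulting $f$ is transient, and the burning test started from $f$ leaves essentially the whole interior---including the centre---frozen. So $u\not\equiv 0$ on $S$. What \emph{is} true (and is what Choure--Vishwanathan actually use) is that the set $\{v:u(v)=0\}$ is nonempty whenever $f$ is transient; taking that as your $S$ does give the boundary constraint you wrote, but it is a different set and you have not argued with it.

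Even with that repair, the remainder is an outline rather than a proof. The ``third step'' asserts that combining $\Theta(\log n)$ resistances, $\Theta(n^2)$ hitting times, $n^2$ vertices, and a dyadic decomposition yields $O(n^4\log^4 n)$, but no inequality is actually derived, and you yourself flag the controlling Dirichlet/energy estimate as the unresolved ``main obstacle''. One of the quoted inputs is also wrong: the effective resistance between \emph{neighbouring} interior vertices of the grid is $\Theta(1)$ (there is an edge between them), not $\Theta(\log n)$; the $\Theta(\log n)$ bound is for a vertex to the sink or between far-apart vertices. As written, this is a plausible-sounding plan whose hard step is neither the paper's argument nor carried out.
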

\begin{restatable}[]{theorem}{mainLowerGrid}\label{thm:mainLowerGrid}
The transience class of the Abelian sandpile model on the $n \times n$ grid is
$\Omega(n^{4})$.
\end{restatable}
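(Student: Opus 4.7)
The plan is to exhibit an explicit sand-addition sequence of length $\Omega(n^{4})$ whose intermediate stable configurations are all transient. The natural first candidate is to add a single grain at a fixed interior vertex $v^{\star}$ (say, the center of the grid) in each step, producing stable configurations $\sigma_k$ with odometer functions $u_k$ satisfying the toppling identity $\sigma_k = k\,e_{v^{\star}} - L' u_k$, where $L'$ is the reduced Laplacian of the grid.

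Transience of each $\sigma_k$ will be certified via Dhar's burning criterion: it suffices to exhibit a non-empty set $S \subseteq V \setminus \{\mathrm{sink}\}$ such that $\sigma_k(w) < d_S(w)$ for every $w \in S$, where $d_S(w)$ counts grid neighbors of $w$ inside $S$. The cleanest choice is $S = V \setminus \{\mathrm{sink}\}$, under which the obstruction reduces to $\sigma_k(w) \leq 2$ at non-corner boundary vertices and $\sigma_k(w) \leq 1$ at the four corners (the interior inequalities $\sigma_k(w) \leq 3$ are automatic from stability). Thus the lower bound reduces to showing that the chosen addition sequence keeps every boundary grain count below its burning threshold for all $k \leq c n^{4}$.

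The expected $n^{4}$ scaling matches the product $n^{2}$ (grid area) $\times\ n^{2}$ (expected hitting time from the center to the sink), which is the rate at which accumulated center-additions propagate out to the boundary. The main tool is the Green's function $G = (L')^{-1}$: writing $u_k = k\, G(v^{\star},\cdot) + \delta_k$, one gets the crude bound $\|\delta_k\|_{\infty} = O(n^{2})$ from $\|\sigma_k\|_{\infty} \leq 3$ together with the row-sum estimate $\sum_{w'} G(w, w') = \Theta(n^{2})$ (the expected hitting time to the sink from $w$). The main technical work is then to upgrade this aggregate estimate to pointwise boundary bounds of the form $\sigma_k(w) \leq 2$ (or $\leq 1$ at corners), which I expect to require the resistance-based pointwise odometer-comparison machinery developed for the upper bound, together with a local parity or symmetry argument on the boundary.

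The decisive technical obstacle is this pointwise sharpening: the crude aggregate $\ell^{\infty}$ estimate on $\delta_k$ is many orders of magnitude weaker than the integer burning threshold that must hold at each boundary vertex individually, and the corner vertices (with the tighter bound $\leq 1$) are particularly delicate. If a pure single-vertex sequence cannot be shown to maintain the obstruction for all $k \leq c n^{4}$, a natural recourse is to augment it with occasional corrective additions that refresh a boundary vertex just before its threshold is about to be crossed, appealing to the Abelian property so that such corrections can be applied without disturbing the rest of the obstruction set. Either way, the heart of the proof is the sharp pointwise control of the integer odometer near the boundary --- translating the global Green's function / electrical-network information into vertex-by-vertex configuration bounds.
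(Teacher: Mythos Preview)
Your proposal outlines a direct combinatorial strategy --- construct an explicit grain-addition sequence at a single vertex and certify transience at each step via Dhar's burning test --- but leaves the central step unresolved. You yourself flag that the aggregate $\ell^\infty$ odometer estimate is ``many orders of magnitude weaker'' than the integer burning threshold required at each boundary and corner vertex, and you offer only a speculative workaround (corrective additions via the Abelian property). That is a genuine gap, not a detail: passing from Green's-function averages to pointwise integer control of the stable configuration $\sigma_k$ along the boundary is exactly the hard part, and nothing in the outline explains how to close it. There is also no evidence that center additions are the right choice; the extremal pair turns out to be corner-to-corner.

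The paper bypasses all of this. It does \emph{not} build an explicit addition sequence or invoke the burning algorithm. Instead it uses the Choure--Vishwanathan reduction already recorded as Theorem~\ref{thm:reduction}, namely $\tcl(G) = \Omega\bigl(\max_{u,v}\,\pi_u(v)^{-1}\bigr)$, so the entire lower bound reduces to the single potential estimate $\pi_{(n,n)}\bigl((1,1)\bigr) = O(n^{-4})$. This is obtained by decoupling the two-dimensional random walk from $(1,1)$ to $(n,n)$ into two independent one-dimensional walks (Lemma~\ref{lem:upperBoundMaxSum}), and then bounding each factor with reflection-principle estimates on $\mathbb{Z}$: Lemma~\ref{lem:upperProbLeftRight} gives the $O(n^{-3})$ bound on the maximum over~$t$, and Lemma~\ref{lem:upperBoundSum} gives the $O(n^{-1})$ bound on the sum over~$t$. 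All the combinatorics of transience certification is absorbed into the black-box reduction, so no odometer, Green's-function, or burning analysis is needed.
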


\noindent
Our results establish how fast the system reaches its steady-state behavior in
the adversarial case, and they corroborate empirical observations about
natural processes exhibiting self-organized criticality.
Our analysis directly generalizes to higher-dimensional cases, giving the
following result.


\begin{restatable}[]{theorem}{DimD}\label{thm:DimD}
  For any integer $d \ge 2$, the transience class of the Abelian sandpile model on the~$n^d$-sized
  $d$-dimensional grid is $O(n^{3d - 2}\log^{d+2}{n})$
  and $\Omega(n^{3d -2})$.
\end{restatable}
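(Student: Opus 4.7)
The plan is to extend the two-dimensional arguments that prove \Cref{thm:mainGrid} and \Cref{thm:mainLowerGrid} to the $d$-dimensional grid, tracking carefully how the dimension $d$ enters the key electrical-network and random-walk estimates. Both the upper and lower bounds are generalizations of the corresponding 2D proofs rather than new arguments, so the task is to identify the $d$-dependent quantities that appear in the 2D analyses and verify that the combinatorics work out to the claimed $n^{3d-2}$ and $\log^{d+2}n$ factors.

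For the upper bound, I would redo the potential-based charging argument over the $d$-dimensional grid. Two quantities in that argument are dimension-dependent: the effective resistance from an interior vertex to the boundary sink (which is $\Theta(\log n)$ when $d=2$ and $\Theta(1)$ when $d\ge 3$) and the expected hitting time of the boundary by a simple random walk (which is $\Theta(n^2)$ in every $d$). Substituting these into the 2D argument, the dyadic decomposition of the grid into concentric rings is replaced by one into $O(\log n)$ nested $d$-cubes in each of the $d$ coordinate directions, which contributes $O(\log^d n)$ terms to the resulting union bound. Two additional logarithmic factors come from the concentration of the hitting time and a rounding step analogous to the 2D case, for a total of $\log^{d+2}n$. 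Summing the per-vertex contribution, which scales like $O(n^{2d-2})$ after applying the $d$-dimensional generalization of the Green's function estimate used in \Cref{thm:mainGrid}, over all $\Theta(n^d)$ interior vertices yields the desired $O(n^{3d-2}\log^{d+2}n)$ upper bound.

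For the lower bound, I would generalize the adversarial chip-placement strategy underlying \Cref{thm:mainLowerGrid}. The idea is to iteratively add batches of chips at a distinguished central vertex $v_0$ of the $d$-dimensional grid and, after each batch, to exhibit a non-empty frozen subgraph whose existence witnesses non-recurrence of the current configuration. By the abelian property, the odometer produced by $K$ chips added at $v_0$ is approximately $K$ times the Green's function of the grid with boundary sink, so for appropriately chosen batch sizes a large subgraph near $v_0$ remains below the required chip thresholds between consecutive batches. A matching counting argument over the size of the frozen witness then shows that $\Omega(n^{3d-2})$ total additions are required before every witness is destroyed. The main technical obstacle lies on the upper-bound side when $d\ge 3$: the short-range and long-range contributions to the Green's function scale very differently from the 2D case, so an off-the-shelf higher-dimensional analogue of the 2D decomposition loses polynomial factors, and the main work is to design the dyadic decomposition so that the loss is only the extra $\log^d n$ from the $d$-fold coordinate tiling, which is what pins the logarithmic exponent at exactly $d+2$.
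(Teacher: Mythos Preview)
Your proposal does not match the paper's approach, and more importantly it misidentifies the structure of the two-dimensional proofs you are trying to generalize.

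The paper's proofs of \Cref{thm:mainGrid} and \Cref{thm:mainLowerGrid} do not use a dyadic decomposition into concentric rings, hitting-time concentration, per-vertex Green's function contributions summed over $n^d$ vertices, or frozen-subgraph witnesses. The entire method, in two dimensions and in $d$ dimensions, is to apply the Choure--Vishwanathan reduction (\Cref{thm:reduction}) and then estimate the two quantities $\sum_{\xx}\ppi_{\uu}(\xx)$ and $\min_{\vv}\ppi_{\uu}(\vv)$ by \emph{decoupling} the $d$-dimensional random walk into $d$ independent one-dimensional walks on $\Path{n}$. This decoupling is the one indispensable idea, and you never mention it. Concretely: the potential sum factors as a product of $d$ one-dimensional harmonic sums $\sum_{\vv_i}\ppi^{\Path{n}}_{\vv_i}(\uu_i)=O(\uu_i\log n)$, giving $O\bigl(\log n\prod_i \uu_i\log n\bigr)$ after one source--target swap via potential reciprocity; and the minimum potential is lower-bounded by showing the $d$ decoupled one-dimensional walks simultaneously hit $n$ (using the $t$-step estimate of \Cref{lem:probLeftRight} in each coordinate and a multinomial in place of the binomial interleaving), yielding $\ppi_{(n,\dots,n)}(\uu)=\Omega\bigl(\prod_i\uu_i/n^{3d-2}\bigr)$. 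The $\log^{d+2}n$ arises from $d$ harmonic sums plus two reciprocity swaps, not from a $d$-fold dyadic tiling and a ``rounding step.''

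For the lower bound, the paper does not build an adversarial batching strategy or frozen witnesses; it simply upper-bounds $\ppi_{(n,\dots,n)}\bigl((1,\dots,1)\bigr)$ by decoupling and applying \Cref{lem:upperProbLeftRight} together with the negative-multinomial identity (the $d$-dimensional analogue of \Cref{fact:binSumInf}), then invokes the lower half of \Cref{thm:reduction}. Your odometer-and-witness outline may be salvageable as an independent argument, but as written it is a sketch of a different proof whose details are not supplied, and it does not generalize the paper's two-dimensional lower bound.
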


In addition to addressing the main open problem in \cite{BabaiG07} and
\cite{ChoureV12}, we begin to shed light on Babai and Gorodezky's inquiry
about sequences of graphs that exhibit polynomially bounded transience classes.
Specifically, for hypergrids (a family of locally finite graphs with high
symmetry) we quantify how the transience class grows as a function of the size
and local degree of the graph.
When viewed through the lens of graph connectivity,
such transience class bounds are surprising 
because grids have low algebraic connectivity, yet
we are able to make global structural arguments using only the fact that
grids have low maximum effective resistance when viewed as electrical networks.
By doing this, we avoid spectral analysis of the grid and
evade the main obstacle in Choure and Vishwanathan's analysis.
Our techniques suggest that low effective resistance 
captures a different but similar phenomenon to high conductance and high
edge expansion for stochastic processes on graphs.
This distinction between the role of a graph's effective resistance and
conductance could be an important step forward for 
building a theory for discrete diffusion processes analogous to the
mixing time of Markov chains.
We also believe our results have close connections to randomized, distributed
optimization algorithms for flow problems~\cite{BecchettiBDKM13,BonifaciMV12,Mehlhorn13,StraszakV16:arxiv,StraszakV16a,StraszakV16b}, where the
dynamics of self-adjusting sandpiles (a Physarum slime mold in their model) are
governed by electrical flows and resistances.

\subsection{Techniques}

Our approach is motivated by the method of Choure and Vishwanathan~\cite{ChoureV12}
for bounding the transience class of the Abelian sandpile model on graphs using
electrical potential theory and the analysis of random walks.
Viewing the graph as an electrical network with a voltage source at some vertex
and a grounded sink, we give more accurate voltage estimates by carefully
considering the geometry of the grid.
We use several lines of symmetry to compare escape probabilities of random
walks with different initial positions, resulting in a new technique for
comparing vertex potentials.
These geometric arguments can likely be generalized to other lattice-based
graphs.
As a result, we get empirically tight inequalities for the sum of all vertex
potentials in the grid and the voltage drop between opposite corners of the
network.


For many of our voltage bounds, we interpret a vertex potential as an escape
probability and decouple the corresponding two-dimensional random walks on the
grid into independent one-dimensional random walks on a path graph.
Decoupling is the standout technique in this paper, because it allows us to
apply classical results about simple symmetric random walks on $\Z$ (such as
the reflection principle), which we extend as needed using conditional
probability arguments.
By reducing from two-dimensional random walks to one-dimensional
walks, we utilize standard probabilistic tools including Stirling's
approximation, Chernoff bounds, and the negative binomial distribution.
Since we consider many different kinds of events in our analysis,
\Cref{sec:path} is an extensive collection of probability inequalities
for symmetric $t$-step random walks on $\Z$ with various boundary conditions.
We noticed that some of these inequalities are directly
related to problems in enumerative combinatorics without closed-form
solutions~\cite{Everett77}.


Lastly, we leverage well-known results about effective resistances of the $n
\times n$ grid when viewed as an electrical network.
We follow Choure and Vishwanathan in using the potential reciprocity theorem
to swap the voltage source with any other non-sink vertex, but we use this
theorem repeatedly with the fact that the effective resistance between
any non-sink vertex and the sink is bounded between a constant and $O(\log n)$.
This approach enables us to analyze tractable one-dimensional random walk problems 
at the expense of polylogarithmic factors.

\section{Preliminaries}\label{sec:background}

\subsection{Abelian Sandpile Model}
Let $G = (V, E)$ be an undirected multigraph.
Throughout this paper all of the graphs we consider have a sink vertex
denoted by $\sink$.
The \emph{Abelian sandpile model} is a dynamical system on a graph $G$ used
to study the phenomenon of self-organized criticality.
A \emph{configuration} $\sigma$ on $G$ in the Abelian sandpile model
is a vector of nonnegative integers
indexed by the non-sink vertices such that $\sigma(v)$ denotes the
number of grains of sand on vertex $v$.
We say that a configuration is \emph{stable} if~$\sigma(v) < \deg{v}$
for all non-sink vertices and \emph{unstable} otherwise.
An unstable configuration $\sigma$ moves towards stabilization by
selecting a vertex $v$ such that $\sigma(v) \ge \deg{v}$ and
sending one grain of sand from $v$ to each of its neighboring
vertices. This event is called a \emph{toppling} of $v$, and it creates
a new configuration $\sigma'$ such that
$\sigma'(v) = \sigma(v) - \deg{v}$, $\sigma'(u) = \sigma(u) + 1$ for all
vertices~$u$ adjacent to $v$,
and $\sigma'(u) = \sigma(u)$ for all remaining vertices. 
This procedure eventually reaches a stable state because~$G$ has a sink.
Moreover, the order in which vertices topple does not affect the final stable
state.
The initial configuration of the Abelian sandpile model is typically
the zero vector, and in each iteration a grain of sand is placed at a vertex
(chosen either deterministically or uniformly at random).
The system evolves by stabilizing the configuration and then receiving another
grain of sand.

A stable configuration $\sigma$ is \emph{recurrent} if the process can eventually
return to $\sigma$.  Any state that is not recurrent is \emph{transient}.
Note that once the system enters a recurrent state, it can never visit a
transient state.
Babai and Gorodezky~\cite{BabaiG07} introduced the following notion to 
upper bound on the number of steps for the Abelian sandpile model to reach
self-organized criticality.


\begin{figure*}[ht]
\label{fig:transience-period}
\centering
\begin{subfigure}[t]{0.24\textwidth}
  \includegraphics[width=1.0\textwidth]{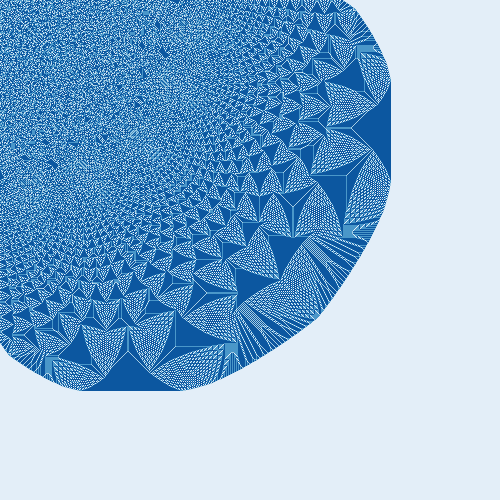}
  \caption{}
\end{subfigure}
\hfill
\begin{subfigure}[t]{0.24\textwidth}
  \includegraphics[width=1.0\textwidth]{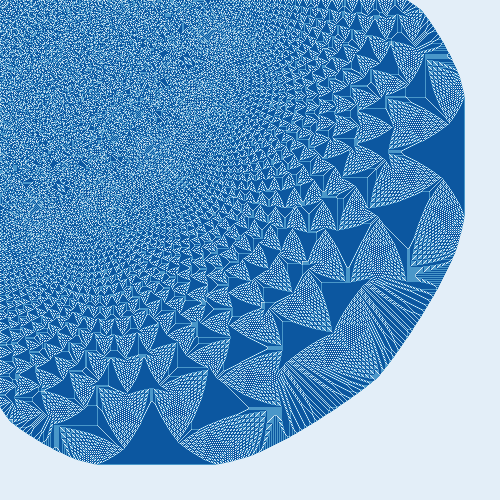}
  \caption{}
\end{subfigure}
\hfill
\begin{subfigure}[t]{0.24\textwidth}
  \includegraphics[width=1.0\textwidth]{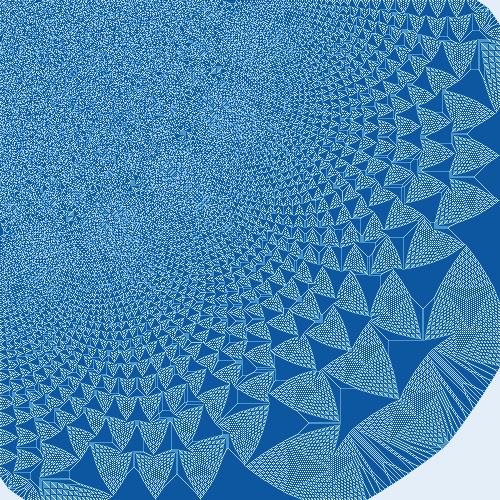}
  \caption{}
\end{subfigure}
\hfill
\begin{subfigure}[t]{0.24\textwidth}
  \includegraphics[width=1.0\textwidth]{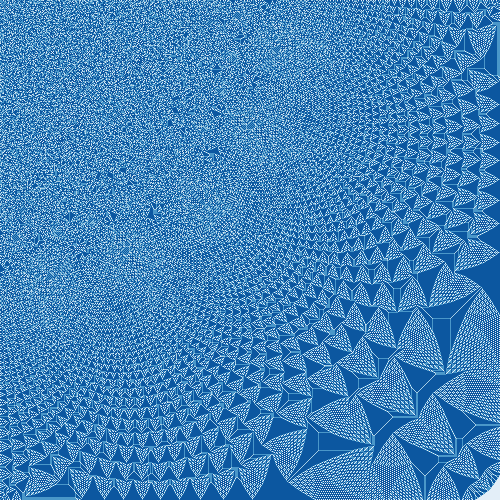}
  \caption{}
\end{subfigure}
\caption{Configurations of the Abelian sandpile model on the
  $500 \times 500$ grid during its transience period after placing
  (a) $10^{10}$ (b) $2\cdot10^{10}$ (c) $4\cdot10^{10}$ (d) $8\cdot10^{10}$
  grains of sand at $(1, 1)$.}
  \label{fig:transience-period}
\end{figure*}

\begin{Definition}
  The \emph{transience class} of the Abelian sandpile model of $G$ is
  the maximum number of grains that can be added to the empty configuration
  before the configuration necessarily becomes recurrent.
  We denote this quantity by $\tcl(G)$.
\end{Definition}

\noindent
In \Cref{fig:transience-period} we illustrate the transient configurations in the
\emph{transient period} of the Abelian sandpile model as it advances towards
its critical state.
We specifically show in this paper that by repeatedly placing grains of sand in
the top-left corner of the grid, we maximize the length of the transience
period up to a polylogarithmic factor.

In earlier related works,
Bj{\"o}rner, Lov{\'a}sz, and Shor~\cite{BjornerLS91chip}
studied a variant of this process without a sink and characterized the
conditions needed for stabilization to terminate.
They also related the spectrum of the underlying graph to the rate at which
the system converges.
In the model we study, an observation by Dhar~\cite{Dhar90} and Kirchoff's
theorem show that the stable recurrent states of the system are in bijection
with the spanning trees of~$G$.
Choure and Vishwanathan~\cite{ChoureV12} show that if every vertex in a
configuration has toppled then the configuration is necessarily recurrent,
which we use to bound the transience class.
The Abelian sandpile model also has broad applications to algorithms
and statistical physics,
including a direct relation to the $q$-state Potts model and Markov chain Monte
Carlo algorithms for sampling random spanning trees~\cite{Bhakta17, Dhar90, JerisonLP15,
Ramachandran17, Wilson10}.
For a comprehensive survey on the Abelian sandpile model, see~\cite{HolroydLMPPW08}.

\subsection{Random Walks on Graphs}

A walk~$\WW$ on $G$ is a sequence of vertices
$\WW^{(0)}, \WW^{(1)}, \dots, \WW^{(t_{\max})}$ such that every
$\WW^{(t+1)}$ is a neighbor of~$\WW^{(t)}$.
We let $t_{\max}=|w|$ denote the length of the walk.
A random walk is a process that begins at vertex $w^{(0)}$, and at each
time step $t$ transitions from $w^{(t)}$ to $w^{(t+1)}$ such that $w^{(t+1)}$ is
chosen uniformly at random from the neighbors of $w^{(t)}$.
Note that this definition naturally captures the effect of walking on a multigraph.
We consider walks that continue until reaching a set of sink vertices.
It will be convenient for our analysis to formally define these following families of walks.
\begin{Definition}
For any set of starting vertices $S$ and terminating vertices $T$ in the graph $G$, let
\begin{align*}
  \mathcal{W}\left(S \rightarrow T\right) \defeq
  \Big\{
    \WW : \WW^{(0)} \in S
    \text{, }
    \WW^{(i)} \not\in T\cup\{\sink\} \text{ for $0 \le i \le |\WW|-1$}
    \text{, and }
    \WW^{(|\WW|)} \in T
  \Big\}
\end{align*}
be the set of finite walks from $S$ to $T$.
\end{Definition}

\noindent
Observe that with this definition, walks $w$ of length~$0$ are permissible if
we have $w^{(0)} \in S \cap T$.
Throughout the paper it will be convenient to consider random walks from one
vertex $u$ to another vertex $v$ or the pair $\{v, \sink\}$.
We denote these cases by the notation
$
  \mathcal{W}\left(u \rightarrow v\right) = \mathcal{W}(\{u\} \rightarrow \{v\}).
$
If walks on multiple graphs are being considered, we use $\mathcal{W}^G(u\rightarrow v)$ to
denote the underlying graph.
Lastly, we consider the set of nonterminating walks in our analysis, so it
will be useful to define
\begin{align*}
  \mathcal{W}\left(S\right) \defeq
  \Bigg\{\WW \in \prod_{i=0}^\infty V: \WW^{(0)} \in S \text{ and }
  \WW^{(i)} \ne \sink \text{ for any $i \ge 0$}
  \Bigg\},
\end{align*}
which is the set of infinite walks from $S$. An analogous definition follows when
$S = \{u\}$. 



The focus of our study is the $n \times n$ grid graph, denoted by $\square{n}$.
Similar to previous works, we do not follow
the usual graph-theoretic convention of using $n$ to denote vertex count.
We formally define the one-dimensional projection of $\square{n}$ to be $\Path{n}$,
which has the vertex set
$\{1, 2, \dots, n\} \cup \{\sink\}$ and edges between $i$ and $i + 1$ for every
$1 \leq i \leq n  - 1$, as well as two edges connecting $\sink$ to $1$ and $n$.
Thus,~$\sink$ can be viewed as
$0$ and $n+1$.  If we remove the sink (which can be thought of as
letting $\sink = \pm\infty$) then the resulting graph is the one-dimensional line with
vertices $i \in \mathbb{Z}$ and edges between every pair $(i,i+1)$.
We denote this graph by $\line$
and
use the indices $i$, $j$, and $k$ to represent
its vertices.
Analyzing random walks on $\line$ is critical to 
our analysis, and it will be
useful to record the minimum and maximum position of $t$-step 
walks.

\begin{Definition}\label{def:1DminMax}
For an initial position $i \in \mathbb{Z}$ and walk
$\WW \in \mathcal{W}(i)$ on \line, let the $t$-step minimum and maximum
positions be
  \[\miP{t} \defeq \min_{0 \leq \widehat{t} \leq t} \ww^{( \widehat{t} )}\]
and
  \[\maP{t} \defeq \max_{0 \leq \widehat{t} \leq t} \ww^{( \widehat{t} )}.\]
\end{Definition}

We construct $\square{n}$ similarly. Its vertices are
$\left\{1, 2, \ldots, n \right\} \times \left\{1, 2, \ldots, n \right\} \cup \{\sink\}$,
and its edges connect any pair of vertices that differ in one coordinate.
Vertices on the boundary have edges connected to $\sink$ so that
every non-sink vertex has degree $4$.  With this definition of $\square{n}$,
each corner vertex has two edges to $\sink$ and non-corner vertices on the
boundary share one edge with $\sink$.
Since all vertices correspond to pairs of coordinates, we use the vector
notation $\uu = (\uu_{1}, \uu_{2})$ to denote coordinates on the grid,
as it easily extends to higher dimensions.
Throughout the paper, boldfaced variables denote vectors.
A $t$-step random walk on $\square{n}$ naturally induces a
$(t_{\max}+1) \times 2$ matrix.
We can decouple such a walk $\WW$ into its horizontal and vertical components,
using the notation $\WW_1$ for the change in position of the first
coordinate and $\WW_2$ for the change in position of the second
coordinate.
In general we use the notation $\WW_{\widehat{d}}$
to index into one of the dimensions $1 \leq \widehat{d} \leq d$
of a $d$-dimensional walk.
We do not record duplicate positions
when the walk takes a step in a dimension different than $\widehat{d}$,
so we have
$\abs{\WW} = \abs{\WW_1} + \abs{\WW_2} - 1$
when~$d=2$ since the initial vertex is present in both $\WW_1$ and $\WW_2$.

\subsection{Electrical Networks}

Vertex potentials are central to our analysis.
They have close connections with electrical voltages
and belong to the class of harmonic functions~\cite{DoyleS84}.
We analyze their relation to the transience class of general graphs.
For any non-sink vertex~$u$, we can define
a unique potential vector~$\ppi_u$ such that
  $\ppi_{u}(u) = 1$,
  $\ppi_{u}(\sink) = 0$,
  and for all other vertices $v \in V \setminus \{u, \sink\}$ we have
  \[
    \ppi_{u}(v) = \frac{1}{\deg{v}} \sum_{x\sim v} \ppi_{u}(x),
  \]
  where the sum is over the neighbors of $v$.
Thus, $\ppi_{u}(v)$ denotes the potential at $v$ when the boundary
conditions are set to $1$ at $u$ and~$0$ at the sink.
Since we analyze potential vectors in both $\Path{n}$
and $\square{n}$, we use superscripts to denote the graph
when context is unclear.


Choure and Vishwanathan showed that we can give upper and lower bounds on the
transience class using potentials, which we rephrase in the following theorem.
\begin{theorem}[{\cite{ChoureV12}}]
\label{thm:reduction}
If $G$ is a graph such that the degree of every non-sink vertex is bounded
by a constant, 
\begin{align*}
  &\tcl(G)
  = O\left( \max_{u,v \in V\setminus\{\sink\}} \left( \sum_{x \in V} \ppi_{u}(x) \right)
  \ppi_{u}(v)^{-1} \right)
\end{align*}
and
\begin{align*}
  \tcl(G)
  = \Omega\left( \max_{u,v \in V\setminus\{\sink\}} 
  \ppi_{u}(v)^{-1} \right).
\end{align*}
\end{theorem}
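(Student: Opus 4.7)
The plan is to convert both bounds into statements about vertex potentials through a single structural identity for toppling counts. If we begin from a stable configuration $\sigma_0$, add chips according to a multiset $c = (c_u)_{u \in V \setminus \{\sink\}}$, and let $T_v$ denote the total number of times $v$ topples during stabilization to the new stable configuration $\sigma_N$, then direct bookkeeping gives $L T = c + \sigma_0 - \sigma_N$, where $L$ is the reduced Laplacian grounded at $\sink$. Combined with the reciprocity identity $L^{-1}_{v,u} = \er(u,\sink)\,\ppi_u(v)$ (both sides being the voltage at $v$ when one unit of current is injected at $u$ and extracted at $\sink$), this exchanges toppling statistics for potentials. Both halves of the theorem are then proved using this identity together with the Choure--Vishwanathan criterion stated in the introduction: a configuration reached by a process in which every non-sink vertex has been toppled at least once is necessarily recurrent.

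For the upper bound, I will show that after $N = O\!\left(\max_{u,v}(\sum_x \ppi_u(x))/\ppi_u(v)\right)$ adversarial chip placements, every non-sink vertex must have toppled at least once, forcing recurrence. Applying the identity componentwise,
\[
T_v \;=\; \sum_u c_u\,\er(u,\sink)\,\ppi_u(v) \;-\; \bigl(L^{-1}(\sigma_N - \sigma_0)\bigr)_v.
\]
Since stable configurations have entries of size $O(1)$ under the bounded-degree assumption, the magnitude of the error term is at most the $\ell_\infty\!\to\!\ell_\infty$ operator norm of $L^{-1}$ times $O(1)$, which is $O\!\left(\max_w \er(w,\sink)\sum_x\ppi_w(x)\right)$. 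An adversary seeking to delay the toppling of a particular $v$ does best by concentrating all $N$ chips at the vertex $u^*$ minimizing $\ppi_{u^*}(v)$, and picking $N$ proportional to $\max_{u,v}(\sum_x\ppi_u(x))/\ppi_u(v)$ then drives every $T_v$ past the error term and beyond $1$.

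For the lower bound, I will exhibit a single-vertex placement strategy that keeps the configuration transient for $\Omega(\max_{u,v}\ppi_u(v)^{-1})$ steps. Pick $(u^*,v^*)$ maximizing $\ppi_{u^*}(v^*)^{-1}$ and deposit every chip at $u^*$. The same identity gives the upper bound $T_{v^*} \le N\,\er(u^*,\sink)\,\ppi_{u^*}(v^*) + O(1)$, so for $N$ a small enough constant times $\ppi_{u^*}(v^*)^{-1}$, the vertex $v^*$ has toppling count zero throughout the entire process. To deduce that the current configuration is still transient, I will invoke Dhar's burning characterization of recurrent configurations: since $v^*$ has never received enough chips to exceed its degree, $\{v^*\}$ (together, if necessary, with other under-threshold neighbors) is a witness subset on which the burning algorithm stalls, certifying non-recurrence.

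The main technical obstacle is the lower bound's implication ``$T_{v^*}=0$ implies the current configuration is transient,'' which is not immediate from counting and requires invoking Dhar's burning test or an equivalent sandpile-group argument to convert the absence of topplings into a combinatorial certificate of non-recurrence. In the upper bound, the parallel difficulty is controlling the stability-correction error $L^{-1}(\sigma_N - \sigma_0)$; this is precisely where the $\sum_x\ppi_u(x)$ factor appears in the bound, originating from the $\ell_\infty$ operator norm of $L^{-1}$ acting on a vector of bounded entries.
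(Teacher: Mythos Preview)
The paper does not prove this theorem; it is quoted from \cite{ChoureV12} and used as a black box, so there is no in-paper argument to compare against. That said, your sketch has issues in both halves.

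For the upper bound, the skeleton is right but the effective-resistance bookkeeping does not close. Bounding the correction $(L^{-1}(\sigma_N-\sigma_0))_v$ by the global $\ell_\infty\!\to\!\ell_\infty$ norm of $L^{-1}$, namely $\max_w \er(w,\sink)\sum_x\ppi_w(x)$, and then dividing by the main term $\Omega(\ppi_{u^*}(v))$ does \emph{not} yield the stated $O\bigl(\max_{u,v}(\sum_x\ppi_u(x))/\ppi_u(v)\bigr)$: the stray factor $\er(w,\sink)$ need not cancel, and already on the $n$-vertex path this loses a factor of $n$. The fix is to use symmetry of $L^{-1}$ (equivalently Lemma~\ref{lem:potReciprocity}) to pull a common $\er(v,\sink)$ out of both terms, writing $T_v=\er(v,\sink)\bigl(\sum_u c_u\,\ppi_v(u)-\sum_u(\sigma_N-\sigma_0)(u)\,\ppi_v(u)\bigr)$; then $T_v\ge \er(v,\sink)\bigl(N\min_u\ppi_v(u)-O(1)\sum_u\ppi_v(u)\bigr)$, and since $1/\er(v,\sink)\le\deg{v}=O(1)$ the desired bound follows.

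For the lower bound there is a genuine gap that your proposed patch does not close. The implication ``$T_{v^*}=0$ implies the configuration is transient'' is false, and Dhar's burning test does not supply the missing witness: a forbidden subconfiguration is a nonempty $S$ with $\sigma(v)<\deg_S(v)$ for every $v\in S$, so the singleton $\{v^*\}$ is never forbidden (that would require $\sigma(v^*)<0$), and nothing guarantees that ``under-threshold neighbors'' assemble into one either. Concretely, on the triangle with vertices $a,b,\sink$ and all three edges, dropping a single chip at $b$ onto the empty configuration yields $\sigma=(0,1)$, which passes the burning test and is recurrent, yet $T_a=0$ and $\sigma(a)=0$. So ``$v^*$ never toppled'' cannot by itself certify transience; the lower half of the theorem needs a different mechanism than the one you outline.
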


\noindent
All non-sink vertices have degree $4$, so we can apply Theorem~\ref{thm:reduction}
to $\square{n}$.

The following combinatorial interpretations of potentials as
random walks is fundamental to our investigation of the transience class of
$\square{n}$.
Note that we use boldfaced vector variables for non-sink vertices in $\square{n}$
as they can be identified by their coordinates.


\begin{fact}[{\cite{DoyleS84}}]
\label{fact:voltageRandomWalk}
For any graph $G$ and non-sink vertex $u$,
the potential $\ppi_{u}(v)$ is the probability of a random walk starting at $v$
and reaching $u$ before $\sink$.
\end{fact}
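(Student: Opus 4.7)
The plan is to show that the escape probability function and the potential vector $\ppi_u$ satisfy the same defining linear system, and then invoke uniqueness. Specifically, I would define $p \colon V \to [0,1]$ by
\[
  p(v) \defeq \Pr_{\WW \in \mathcal{W}(v \to \{u, \sink\})}\bigl[\WW^{(|\WW|)} = u\bigr],
\]
i.e., the probability that a random walk started at $v$ hits $u$ before it hits $\sink$. The goal is to verify that $p(\cdot)$ satisfies the same three conditions that uniquely determine $\ppi_u$: namely $p(u) = 1$, $p(\sink) = 0$, and the discrete harmonicity
\[
  p(v) = \frac{1}{\deg{v}} \sum_{x \sim v} p(x)
  \qquad \text{for all } v \in V \setminus \{u, \sink\}.
\]

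First I would dispatch the boundary conditions: $p(u) = 1$ because a walk starting at $u$ hits $u$ at step $0$ (here I use the convention, already fixed in the definition of $\mathcal{W}(S \to T)$, that zero-length walks are allowed), and $p(\sink) = 0$ because $\sink$ is absorbing, so a walk that starts at $\sink$ never reaches $u$. Next I would establish harmonicity by a standard first-step analysis: for any $v \notin \{u, \sink\}$, conditioning on the identity of the first neighbor $x = \WW^{(1)}$ (which is uniform over the neighbors of $v$) and using the Markov property gives exactly
\[
  p(v) \;=\; \sum_{x \sim v} \frac{1}{\deg{v}}\, p(x).
\]
This is the same recurrence as in the definition of $\ppi_u$.

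It remains to argue uniqueness of the solution to the boundary value problem. The standard route is the discrete maximum principle: any function $h$ on $V$ that is harmonic on $V \setminus \{u, \sink\}$ attains both its maximum and minimum on the boundary $\{u,\sink\}$, since otherwise an interior maximizer would force all its neighbors to share the same value and we could propagate the maximum along a path to the boundary (using that $G$ is finite and connected through $\sink$). Consequently, if two functions satisfy the same boundary data and are harmonic elsewhere, their difference $h$ has zero boundary values and is harmonic on the interior, so $h \equiv 0$. Applying this to $p - \ppi_u$ yields $p = \ppi_u$, which is the desired identity. The main obstacle is really just making the uniqueness argument airtight; once the maximum principle is in hand, the rest is a routine first-step computation.
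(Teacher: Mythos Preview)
Your proposal is correct and follows the standard argument. The paper does not prove this statement at all: it is recorded as a Fact with a citation to Doyle--Snell~\cite{DoyleS84} and is used as a black box throughout. Your first-step analysis plus maximum-principle uniqueness is exactly the classical proof one finds in that reference, so there is nothing to compare beyond noting that you have supplied the details the paper chose to import.
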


\begin{restatable}[]{lemma}{normalizingConstant}
\label{lem:normalizingConstant}
Let $\uu$ be a non-sink vertex of $\square{n}$. For any vertex $\vv$, we have
\[
  \ppi_{\uu}\left(\vv\right)
    = \sum_{\WW \in \mathcal{W}^{}\left(\vv \rightarrow \uu\right)} 4^{-|\WW|}.
\]
\end{restatable}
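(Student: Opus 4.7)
The plan is to identify $\ppi_{\uu}(\vv)$ with the hitting probability of $\uu$ before $\sink$ for a simple random walk started at $\vv$, and then to expand that probability as a sum over the atoms of the natural probability space on walks.

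First I would invoke Fact~\ref{fact:voltageRandomWalk}: because $\ppi_{\uu}$ is the harmonic extension to $V\setminus\{\uu,\sink\}$ of the boundary values $\ppi_{\uu}(\uu)=1$ and $\ppi_{\uu}(\sink)=0$, it equals the hitting probability
\[
\ppi_{\uu}(\vv) \;=\; \Pr_{\WW}\!\left[\,\WW \text{ reaches } \uu \text{ before } \sink\,\right],
\]
where $\WW$ is a simple random walk on $\square{n}$ started at $\vv$. Since $\square{n}$ is a finite graph and every non-sink vertex has a positive-probability path to $\sink$, the walk almost surely hits the set $\{\uu,\sink\}$ in finite time, so we may restrict attention to the countable collection of finite trajectories that terminate on first arrival at $\{\uu,\sink\}$.

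Next I would partition the event $\{\WW\text{ reaches }\uu\text{ before }\sink\}$ as the disjoint union, over $\WW \in \mathcal{W}(\vv \rightarrow \uu)$, of the elementary event that the walk's trajectory (stopped upon arrival at $\uu$) equals $\WW$. The definition of $\mathcal{W}(\vv\rightarrow \uu)$ exactly captures this: the starting vertex is $\vv$, the intermediate vertices avoid both $\uu$ and $\sink$, and the final vertex is $\uu$. A walk of length $0$ is included precisely when $\vv=\uu$, which correctly gives $\ppi_{\uu}(\uu)=1=4^0$.

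Then, because every non-sink vertex of $\square{n}$ has degree exactly $4$ by construction (boundary vertices receive extra multi-edges to $\sink$ to make up the deficit), each step of the random walk selects one of its $4$ incident half-edges uniformly at random, independently of the past. Consequently, the probability of realizing any particular finite trajectory $\WW = (\ww^{(0)},\ww^{(1)},\ldots,\ww^{(|\WW|)})$ is $4^{-|\WW|}$. Summing this probability over all $\WW \in \mathcal{W}(\vv \rightarrow \uu)$ yields
\[
\ppi_{\uu}(\vv) \;=\; \sum_{\WW \in \mathcal{W}(\vv \rightarrow \uu)} 4^{-|\WW|},
\]
as desired.

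The only genuine subtlety—and the part most likely to hide a bookkeeping error—is the treatment of boundary and corner vertices. Because the grid is defined so that every non-sink vertex has degree exactly $4$ (with parallel edges to $\sink$ at the corners), the uniform $1/4$ transition weight applies without exception, and walks that step into $\sink$ simply contribute to $\mathcal{W}(\vv \rightarrow \sink)$ rather than to $\mathcal{W}(\vv \rightarrow \uu)$; this is what makes the weight $4^{-|\WW|}$ match the step probabilities cleanly.
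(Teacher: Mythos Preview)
Your proof is correct and follows essentially the same line as the paper's: both identify $\ppi_{\uu}(\vv)$ with the escape probability and then argue that the total weight $\sum_{\WW\in\mathcal{W}(\vv\rightarrow\{\uu,\sink\})}4^{-|\WW|}$ equals $1$. The only cosmetic difference is in justifying that last step: you invoke almost-sure absorption of a random walk on a finite graph, whereas the paper observes that this sum is a harmonic function of $\vv$ with boundary value $1$ at both $\uu$ and $\sink$, hence identically $1$ by the maximum principle.
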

\noindent
We defer the proof of Lemma~\ref{lem:normalizingConstant}
to \Cref{sec:app_background}.

A systematic treatment of the connection between random walks and electrical
networks can be found in the monograph by Doyle and Snell~\cite{DoyleS84} or
the survey by Lov\'asz~\cite{Lovasz93}.  The following lemma is a key result for
our investigation, which states that a voltage source and a measurement point
can be swapped at the expense of a distortion in the potential equal to the ratio
of the effective resistances between the sink and the two vertices.  The
effective resistance between a pair of vertices $u$ and $v$, denoted as
$\er(u,v)$, can be formalized in several ways.  In the electrical
interpretation~\cite{DoyleS84}, effective resistance can be
viewed as the voltage needed to send one unit of current from $u$ to $v$ if
every edge in~$G$ is a unit resistor.
For a linear algebraic definition of effective resistance see
\cite{EllensSVJK11}.

  \begin{lemma}[{\cite[Potential Reciprocity]{ChoureV12}}]
\label{lem:potReciprocity}
Let $G$ be a graph (not necessarily degree-bounded) with sink $\sink$.
For any pair of vertices $u$ and $v$, we have
\[
  \er^{}\left(\sink, u\right) \ppi^{}_{u}(v)
  =
  \er^{}\left(\sink, v\right) \ppi^{}_{v}(u).
\]
\end{lemma}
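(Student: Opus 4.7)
The plan is to recognize both sides of the claimed equality as off-diagonal entries of the same symmetric matrix---the inverse of the \emph{grounded} graph Laplacian. Let $L$ denote the Laplacian of $G$ with the row and column corresponding to $\sink$ deleted; since every non-sink vertex has a path to $\sink$, the matrix $L$ is symmetric positive definite and hence $L^{-1}$ exists and is symmetric. All of the work will be in relating each of $\er(\sink,u)\,\ppi_u(v)$ and $\er(\sink,v)\,\ppi_v(u)$ to the single quantity $(L^{-1})_{uv}$, at which point reciprocity is automatic.

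First I would reinterpret $\ppi_u(v)$ electrically. The defining conditions $\ppi_u(u)=1$, $\ppi_u(\sink)=0$, and harmonicity at every remaining non-sink vertex are precisely Kirchhoff's laws for the voltages obtained by pinning $u$ to potential $1$ and $\sink$ to $0$ in the resistor network where each edge is a unit resistor. By the very definition of effective resistance, the net current that must enter at $u$ (and leave at $\sink$) to sustain these boundary conditions is $1/\er(\sink,u)$. Rescaling linearly so that one unit of current is injected at $u$ and withdrawn at $\sink$, the voltage at any vertex $v$ becomes $\er(\sink,u)\,\ppi_u(v)$. In matrix form this is $L\xx = \ff$ with $\ff$ the indicator vector of $u$, so $\xx = L^{-1}\ff$, and reading off the $v$-th coordinate yields $\er(\sink,u)\,\ppi_u(v) = (L^{-1})_{vu}$. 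Running the same argument with the roles of $u$ and $v$ swapped gives $\er(\sink,v)\,\ppi_v(u) = (L^{-1})_{uv}$, and the lemma follows from $L = L^{\top}$.

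The main obstacle, should one insist on a purely combinatorial argument, is justifying the symmetry of $L^{-1}$ without appeal to linear algebra. This can be bypassed by the classical reciprocity-via-energy identity: for any two unit flows $\ff_1$ and $\ff_2$ routing current from $u$ and $v$ respectively to $\sink$, the inner product of their edge-current vectors is manifestly symmetric in the two experiments and equals $(L^{-1})_{uv} = (L^{-1})_{vu}$. A random-walk proof is also possible by pairing \Cref{fact:voltageRandomWalk} with a measure-preserving bijection between walks from $u$ to $\sink$ passing through $v$ and walks from $v$ to $\sink$ passing through $u$, but the linear-algebraic route above is by far the cleanest and makes no use of any special structure of $G$.
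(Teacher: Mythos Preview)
Your argument is correct: identifying $\er(\sink,u)\,\ppi_u(v)$ with the $(v,u)$ entry of the inverse of the grounded Laplacian and invoking symmetry is the standard way to prove potential reciprocity, and the details you give (harmonicity forcing $L\ppi_u$ to be supported on $u$, with the value there equal to the injected current $1/\er(\sink,u)$) are right.

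As for comparison with the paper: there is nothing to compare against. The paper does not prove \Cref{lem:potReciprocity}; it is quoted as a black box from Choure and Vishwanathan~\cite{ChoureV12} and used downstream (in particular to derive \Cref{lem:swapSourceTarget}). So your proposal supplies a proof where the paper simply cites one.
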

This statement is particularly powerful for $\square{n}$, because
the effective resistance between any pair of vertices is bounded
between a constant and $O(\log n)$.
The following lemma makes use of a classical result that can be
obtained using Thompson's principle of the electrical flow~\cite{DoyleS84}.

\begin{restatable}[]{lemma}{boundedERGrid}
\label{lem:boundedERGrid}
For any non-sink vertex $\uu$ in $\square{n}$, 
\[
  1/4 
  \le \er^{} \left(\sink, \uu \right)
  \le 2 \log n + 1.
\]
\end{restatable}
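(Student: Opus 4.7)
The plan is to prove the two inequalities separately by standard electrical-network arguments: Rayleigh's monotonicity for the lower bound, and Thomson's principle applied to an explicit shell-based unit flow for the upper bound.

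For the lower bound $\er(\sink, \uu) \ge 1/4$, I would short-circuit every vertex of $\square{n}$ other than $\uu$ and $\sink$ by identifying them all with $\sink$. Since $\uu$ has degree exactly $4$, the resulting network consists of $\uu$ joined to $\sink$ by four parallel unit resistors, whose combined resistance is $1/4$. Rayleigh's monotonicity principle guarantees that identifying vertices can only decrease the effective resistance between the two remaining vertices, so $\er(\sink, \uu) \ge 1/4$.

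For the upper bound, I would invoke Thomson's principle, which expresses $\er(\sink, \uu)$ as the minimum of $\sum_{e} f(e)^2$ over all unit flows $f$ from $\uu$ to $\sink$, and exhibit such a flow of energy at most $2\log n + 1$. The construction routes current outward from $\uu$ through nested Chebyshev shells $S_k = \{\vv \in V(\square{n}) : \|\vv - \uu\|_\infty = k\}$, with the cut $E_k$ between $S_k$ and $S_{k+1}$ satisfying $|E_0| = 4$ and $|E_k| = 8k+4$ for $k \ge 1$ (until the shell meets the boundary, where it spills directly into $\sink$). If each edge of $E_k$ carries value $1/|E_k|$, the total energy telescopes as
\[
\sum_{k=0}^{r-1} |E_k| \cdot \left( \frac{1}{|E_k|} \right)^2 = \sum_{k=0}^{r-1} \frac{1}{|E_k|} \le \frac{1}{4} + \sum_{k=1}^{r-1} \frac{1}{8k+4},
\]
where $r \le n$ is the Chebyshev distance from $\uu$ to the boundary. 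A mild tightening of the harmonic sum yields the claimed $2\log n + 1$.

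The main obstacle is verifying that such a shell-uniform flow is actually conservative at every interior vertex, and not merely across each cut. The issue is that corner vertices of $S_k$ have two outward neighbors in $S_{k+1}$ while non-corner boundary vertices of $S_k$ have only one, so assigning the same value $1/|E_k|$ to every edge of $E_k$ does not automatically balance inflow and outflow at individual vertices. The cleanest workaround is to decompose the flow as a convex combination of monotone outward lattice paths from $\uu$ to the boundary, chosen so that the marginal on each $E_k$ is uniform and conservation at every interior vertex is automatic from the path-superposition. An equivalent viewpoint samples these paths via an outward random walk that respects the four-fold symmetry of the shells, which then yields the desired energy estimate.
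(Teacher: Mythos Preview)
Your lower bound is correct and is the same idea as the paper's: both use that $\deg(\uu)=4$. The paper phrases it via the current formula $\er(\sink,\uu)=\bigl(\sum_{\vv\sim\uu}(1-\ppi_{\uu}(\vv))\bigr)^{-1}\ge 1/4$, which is exactly your shorting argument read through Rayleigh.

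For the upper bound the paper takes a genuinely different and cleaner route. It picks the largest $m\times m$ subgrid $H(\uu)$ having $\uu$ at one corner and the opposite corner $\vv$ on the boundary of $\square{n}$. Rayleigh's monotonicity (delete all edges outside $H(\uu)$) gives $\er^{\square{n}}(\uu,\vv)\le\er^{H(\uu)}(\uu,\vv)\le 2\log m\le 2\log n$, invoking the cited corner-to-corner bound from~\cite{LevinPW09}. Since $\vv$ is adjacent to $\sink$ one has $\er(\sink,\vv)\le 1$, and the triangle inequality for effective resistance finishes. This completely avoids building a conservative flow from an arbitrary interior vertex.

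Your Thomson's-principle approach is sound in spirit, but there is a real gap beyond the conservation issue you already flag. The formula $|E_k|=8k+4$ and the telescoping energy bound hold only while the full Chebyshev square of radius $k$ fits inside $\{1,\ldots,n\}^2$. At radius $r=\min(\uu_1,\uu_2,n+1-\uu_1,n+1-\uu_2)$ the shell first touches the boundary, but in general only \emph{one side} of $S_r$ is adjacent to $\sink$; the current arriving at the other three sides of $S_r$ has nowhere to ``spill'' and must continue outward through truncated shells whose cut sizes you have not analyzed. For an off-center vertex such as $\uu=(2,\lceil n/2\rceil)$ this already happens at $r=2$, so essentially all of the routing remains to be done after your telescoping sum ends. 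Your path-superposition idea could in principle handle this, but it would have to track how much current survives to each truncated shell and bound the resulting energy; as stated it does not. The paper's subgrid trick sidesteps the asymmetry by putting $\uu$ at a corner, where the relevant cuts are the anti-diagonals that underlie the cited $2\log n$ bound.
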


\noindent
We give the proof of \Cref{lem:boundedERGrid} in \Cref{sec:app_background}.
When used together,
\Cref{lem:potReciprocity} and \Cref{lem:boundedERGrid} imply
the following result, which allows us to conveniently swap the source
vertex when computing potentials.


\begin{lemma}
\label{lem:swapSourceTarget}
For any non-sink vertices $\uu$ and $\vv$ in $\square{n}$, we have
  \[\ppi^{}_{\uu}\left(\vv\right)
  \leq \left(8\log n + 4\right) \ppi^{}_{\vv}\left(\uu\right).\]
\end{lemma}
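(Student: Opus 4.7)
The plan is to derive this statement as an immediate two-line consequence of the two results just stated: the Potential Reciprocity lemma and the bounded effective resistance lemma for $\square{n}$. Since both are black-box inputs, the only task is to combine them carefully, and there is no genuine obstacle — the proof is essentially bookkeeping on constants.

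First I would invoke Potential Reciprocity (the cited Lemma) applied to the pair $\uu, \vv$ in the grid graph $\square{n}$ to get the identity
\[
  \er(\sink, \uu)\,\ppi_{\uu}(\vv) \;=\; \er(\sink, \vv)\,\ppi_{\vv}(\uu),
\]
which I would then rearrange to express $\ppi_{\uu}(\vv)$ as a ratio of effective resistances multiplying $\ppi_{\vv}(\uu)$:
\[
  \ppi_{\uu}(\vv) \;=\; \frac{\er(\sink, \vv)}{\er(\sink, \uu)}\,\ppi_{\vv}(\uu).
\]

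Next I would plug in the two-sided bound from \Cref{lem:boundedERGrid}, which guarantees $\er(\sink, \vv) \leq 2\log n + 1$ for the numerator and $\er(\sink, \uu) \geq 1/4$ for the denominator. Dividing these worst-case values gives
\[
  \frac{\er(\sink, \vv)}{\er(\sink, \uu)} \;\leq\; \frac{2\log n + 1}{1/4} \;=\; 8\log n + 4,
\]
from which the claimed inequality follows. The only thing to double-check is that \Cref{lem:boundedERGrid} applies simultaneously to both endpoints (it does, since it holds for every non-sink vertex of $\square{n}$), and that nothing degenerate happens in the edge case $\uu = \vv$, where both sides collapse to the same potential and the inequality is trivial.
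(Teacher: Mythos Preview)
Your proposal is correct and matches the paper's approach exactly: the paper states this lemma as an immediate consequence of combining \Cref{lem:potReciprocity} and \Cref{lem:boundedERGrid}, and your derivation is precisely that combination with the constants tracked.
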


Voltages and flows on electrical networks are central to
  many recent developments in algorithmic graph theory (e.g. modern maximum flow
  algorithms and interior point methods~\cite{ChristianoKMST11,Madry13}).
The convergence of many of these algorithms depend on the extremal voltage
values of the electrical flow that they construct.
As a result, we believe some of our techniques are relevant to the
grid-based instantiations of these algorithms.

\section{Upper Bounding the Transience Class}
\label{sec:grid}
In this section we prove the upper bound in Theorem~\ref{thm:mainGrid}
for the transience class of the Abelian sandpile model on the square grid.  Our
proof follows the framework of Choure and Vishwanathan in that we use
Theorem~\ref{thm:reduction} to reduce the proof to bounding the following two
quantities for any non-sink vertex $\uu \in V(\square{n})$:
\begin{itemize}
  \item We upper bound the potential sum 
  $
    \sum_{\vv \in V} \ppi^{}_{\uu}(\vv)
  $.
\item We lower bound the potential $\ppi^{}_{\uu}(\vv)$
    for all non-sink vertices $\vv$.
\end{itemize}

By symmetry we assume without loss of generality
that $\uu$ is in the top-left quadrant of
$\square{n}$ (i.e., we have $1 \leq \uu_{1}, \uu_{2} \leq \lceil n/2 \rceil$).
The principal idea is to use reciprocity from Lemma~\ref{lem:potReciprocity}
and effective resistance bounds from Lemma~\ref{lem:boundedERGrid} to
swap source vertices and bound $\ppi^{}_{\vv}(\uu)$ instead,
at the expense of a $O(\log{n})$ factor.
The second key idea is to interpret potentials as random
walks using Fact~\ref{fact:voltageRandomWalk} and then decouple two-dimensional
walks on $\square{n}$ into separate horizontal and vertical one-dimensional
walks on $\Path{n}$.
Using well-studied properties of
one-dimensional random walks, we achieve nearly tight bounds on 
$\tcl(\square{n})$.

We note that there is a natural trade-off in the choice of the source
vertex $\uu$.  Setting~$\uu$ near the boundary decreases 
vertex potentials because a random walk has a higher probability of escaping to
$\sink$ instead of $\uu$.
This improves the upper bound of the sum of vertex
potentials, but it weakens the lower bound of the minimum vertex potential.
For vertices $\uu$ that are not near the boundary, the opposite is true.
Therefore, we account for the choice of $\uu$ in our bounds.


\subsection{Upper Bounding the Potential Sum}\label{subsec:upper}

\begin{lemma}\label{lem:sumBound}
For any non-sink vertex $\uu$ in $\square{n}$, 
we have
\[
    \sum_{\vv \in V}
    \ppi^{}_{\uu}\left( \vv \right) 
      = O\left( \uu_1 \uu_2 \log^3{n} \right).
\]
\end{lemma}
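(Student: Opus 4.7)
The plan is to combine three ingredients: (i) the swap-source inequality from \Cref{lem:swapSourceTarget}, which converts $\ppi_{\uu}(\vv)$ into $\ppi_{\vv}(\uu)$ at the cost of a $O(\log n)$ factor; (ii) a decoupling of the two-dimensional random walk on $\square{n}$ into two independent one-dimensional walks on $\Path{n}$; and (iii) the closed-form formula for vertex potentials on $\Path{n}$. By the reflective symmetries of $\square{n}$ I may assume $\uu_1, \uu_2 \le \ceil{n/2}$.

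Summing \Cref{lem:swapSourceTarget} vertex-by-vertex yields
$\sum_{\vv} \ppi_{\uu}(\vv) \le (8\log n + 4) \sum_{\vv} \ppi_{\vv}(\uu)$.
By \Cref{fact:voltageRandomWalk}, $\ppi_{\vv}(\uu)$ is the probability that a random walk on $\square{n}$ from $\uu$ hits $\vv$ before $\sink$. I would decouple this walk by writing each two-dimensional step as a fair horizontal-versus-vertical coin flip followed by an independent fair direction flip. The horizontal coordinate then evolves as an independent walk $H$ on $\Path{n}$ started at $\uu_1$, the vertical coordinate as an independent walk $V$ on $\Path{n}$ started at $\uu_2$, and absorption at $\sink$ in $\square{n}$ coincides with the first time $H$ or $V$ reaches its own sink. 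If the two-dimensional walk hits $\vv$ before $\sink$, then at that moment $H$ has already visited $\vv_1$ and $V$ has already visited $\vv_2$, with neither 1D walk having been absorbed yet, so by the independence of $H$ and $V$,
\[
  \ppi_{\vv}(\uu) \le \ppi^{\Path{n}}_{\vv_1}(\uu_1)\, \ppi^{\Path{n}}_{\vv_2}(\uu_2).
\]
Summing over $\vv$ factorizes the right side as a product of two one-dimensional sums.

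Solving the discrete Laplace equation on $\Path{n}$ gives $\ppi^{\Path{n}}_j(i) = i/j$ when $i \le j$ and $(n+1-i)/(n+1-j)$ when $i > j$. For $i \le \ceil{n/2}$ the terms with $j < i$ are each at most $1$ and contribute at most $i$ in total, while the terms with $j \ge i$ contribute $i\sum_{j=i}^n 1/j = O(i \log n)$, so $\sum_{j=1}^n \ppi^{\Path{n}}_j(i) = O(i \log n)$. Applying this with $i = \uu_1$ and $i = \uu_2$ and multiplying everything together,
\[
  \sum_{\vv} \ppi_{\uu}(\vv) = O(\log n)\cdot O(\uu_1 \log n)\cdot O(\uu_2 \log n) = O(\uu_1 \uu_2 \log^3 n).
\]

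The main obstacle is making the decoupling inequality precise. One has to set up a single probability space that jointly realizes the 2D walk on $\square{n}$ together with the two independent 1D walks $H$, $V$ on $\Path{n}$, and then verify that the event ``the 2D walk hits $\vv$ before $\sink$'' is contained in the conjunction of the marginal events ``$H$ hits $\vv_1$ before its sink'' and ``$V$ hits $\vv_2$ before its sink.'' The delicate point is the boundary: both corner vertices of $\square{n}$ (which carry two edges to $\sink$) and the non-corner boundary vertices (which carry one edge to $\sink$) must project correctly under the horizontal/vertical decomposition so that, for instance, a horizontal step from a left-column vertex is exactly a step of $H$ into $0$. Once this coupling is in place, everything else is a routine calculation from the closed-form 1D potentials.
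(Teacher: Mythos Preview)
Your proposal is correct and follows essentially the same route as the paper: swap via \Cref{lem:swapSourceTarget}, decouple into two one-dimensional walks, bound $\ppi_{\vv}(\uu) \le \ppi^{\Path{n}}_{\vv_1}(\uu_1)\,\ppi^{\Path{n}}_{\vv_2}(\uu_2)$, and sum the closed-form path potentials to get $O(\uu_d \log n)$ per coordinate. The boundary/coupling worry you flag is handled in the paper by viewing the walk on $\square{n}$ as a walk on the infinite lattice $\mathbb{Z}^2$ (so the coordinates are manifestly independent), and then noting that the event ``the 2D walk hits $\vv$ before leaving the grid'' is contained in the product of the two one-dimensional hitting events.
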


\begin{proof}
We use Fact~\ref{fact:voltageRandomWalk} and \Cref{lem:normalizingConstant}
to interpret vertex potentials as random walks.
We can omit $\sink$ because any random walk starting there immediately
terminates.
By \Cref{lem:swapSourceTarget},
  \[\ppi^{}_{\uu}\left( \vv \right)
    = O\left(
      \ppi^{}_{\vv}\left( \uu \right)
      {\log n} \right),
  \]
so we apply the random walk interpretation
to potentials starting at $\uu$ instead of $\vv$.
Consider one such walk $\WW \in \mathcal{W}^{}(\uu \rightarrow \vv)$
and its one-dimensional decompositions $\WW_1$ and $\WW_2$.
The probability of a walk from $\uu$ reaching $\vv$ is equal to the
probability that two interleaved walks in $\Path{n}$ starting at $\uu_{1}$
and $\uu_{2}$ are present on $\vv_{1}$ and $\vv_{2}$, respectively,
at the same time
before either hits their one-dimensional sink $\sink = \{0,n+1\}$.

If we remove the restriction that these walks are present on $\vv_1$ and
$\vv_2$ at the same time and only require that they visit $\vv_1$ and $\vv_2$
before hitting $\sink$, then
each of these less restricted walks $\WW_d$ belongs to the class
$
 \mathcal{W}^{\Path{n}}\left( \uu_{d} \rightarrow \vv_{d} \right).
$
Viewing a walk $w$ on $\square{n}$ as infinite walk on the lattice $\Z^2$
induces independence between $w_1$ and $w_2$.
Thus, we obtain the upper bound
\begin{align*}
  \ppi^{}_{\vv}\left( \uu \right)
  &=
  \prob{w \sim \mathcal{W}^{\Z^2}(\uu)}{\text{$w$ hits $\vv$ before leaving $\square{n}$}}\\
  &\le 
  \text{Pr}_{w \sim \mathcal{W}^{\Z^2}(\uu)}[\text{$w_1$ hits $\vv_1$ before $\sink$ and
                      $w_2$ hits $\vv_2$ before $\sink$}]\\
  &=
  \prob{w \sim \mathcal{W}^{\Z^2}(\uu)}{\text{$w_1$ hits $\vv_1$ before $\sink$}}
  \cdot \prob{w \sim \mathcal{W}^{\Z^2}(\uu)}{\text{$w_2$ hits $\vv_2$ before $\sink$}}\\
  &=
  \ppi^{\Path{n}}_{\vv_1}(\uu_1) \cdot \ppi^{\Path{n}}_{\vv_2}(\uu_2).
\end{align*}

Summing over all choices of $\vv = (\vv_1, \vv_2)$ gives
\begin{align*}
  \sum_{\vv \in V} \ppi_{\vv}\left( \uu \right)
  \leq \left( \sum_{\vv_1 = 1}^n \ppi^{\Path{n}}_{\vv_{1}} \left(\uu_{1} \right) \right)
  \left( \sum_{\vv_2 = 1}^n \ppi^{\Path{n}}_{\vv_2} \left(\uu_{2} \right) \right).
\end{align*}
The potentials of vertices in $\Path{n}$ have the following closed-form
solution, as shown in \cite{DoyleS84}:
\[
  \ppi^{\Path{n}}_{\vv_1}\left(\uu_1\right)
=
\begin{cases}
  \frac{n + 1 - \uu_1}{n + 1 - \vv_1} & \qquad \text{if $\vv_1 \le \uu_1$},\\
  \frac{\uu_1}{\vv_1} & \qquad \text{if $\vv_1 > \uu_1$}.
\end{cases}
\]
Splitting the sum at $\uu_1$ and using the fact that potentials are 
escape probabilities, we have
\begin{align*}
  \sum_{\vv_1=1}^n \ppi^{\Path{n}}_{\vv_1}(\uu_1)
  \le \uu_1 + \sum_{\vv_1=\uu_1+1}^{n} \frac{\uu_1}{\vv_1}
  = O(\uu_1 \log n).
\end{align*}
We similarly obtain an upper bound of $O(\uu_2 \log n)$ in the other dimension.
These bounds along with the initial $O(\log n)$ overhead from
swapping $\uu$ and $\vv$ gives the desired upper bound.
\end{proof}

\subsection{Lower Bounding the Minimum Potential}\label{subsec:lower}

The more involved part of this paper proves a lower bound for the minimum
vertex potential
$
  \min_{\vv \in V\setminus\{ \sink\}} \ppi^{}_{\uu}(\vv)
$
as a function of a fixed vertex $\uu = (\uu_1, \uu_2)$.
Recall that we assumed without loss of generality that $\uu$ is in the top-left
quadrant of $\square{n}$.
We first prove that the minimum potential occurs at vertex $(n,n)$,
the corner farthest from $\uu$.
Using \Cref{lem:swapSourceTarget} to swap $\uu$ and~$(n,n)$ at the expense of a
$\Omega(1/\log{n})$ factor, we reduce the problem to giving a lower bound for
$\ppi^{}_{(n,n)}(\uu)$.
Then we decompose walks
$\WW \in \mathcal{W}^{}(\uu \rightarrow \{(n,n), \sink\})$ into their
one-dimensional walks ${\WW_{1} \in \mathcal{W}^{\Path{n}} (\uu_{1})}$
and $\WW_{2} \in \mathcal{W}^{\Path{n}} (\uu_{2})$, and
we interpret $\ppi^{}_{(n,n)}(\uu)$ as the probability that 
the individual processes $\WW_1$ and $\WW_2$ are present on $n$ at the same
time before either walk leaves the interval $[1,n]$.
Walks on $\line$ that meet at $n$ before leaving the interval $[1,n]$
are equivalent to walks on $\Path{n}$ that meet at $n$ before terminating at
$\sink$.
Lastly, we use conditional probabilities to analyze walks on $\line$
instead of walks on $\Path{n}$ in order to
leverage well-known facts about simple symmetric random walks.

To lower bound the desired probability
$\ppi^{}_{(n,n)}(\uu)$,
we show that a subset of $\mathcal{W}^{}(\uu \rightarrow (n,n))$
of interleaved
one-dimensional walks starting at $\uu_1$ and $\uu_2$ that first
reach $n$ in approximately the same number of steps has a sufficient amount
of probability mass.
We prove this by observing that the distributions of the number of steps
for the walks to first reach $n$ without leaving the interval~$[1,n]$ are
concentrated around $(n-\uu_1)^2$ and $(n-\uu_2)^2$, respectively.
Consequently, we show that this distribution is approximately uniform in an
$\Theta(n^2)$ length interval, with each $t$-step having probability
$\Omega(\uu_1/n^3)$ and $\Omega(\uu_2/n^3)$.
We then use Chernoff bounds to show
that both walks take approximately the same number of steps with
constant probability. Combining these facts, we give the desired lower bound
$\Omega(\uu_1  \uu_2 / n^4)$.

\subsubsection{Opposite Corner Minimizes Potential}

We first show that the corner vertex $(n,n)$ has the minimum
potential up to a constant factor.
Viewing potentials as escape probabilities,
we utilize the geometry of the grid to construct maps between sets of random
walks that prove the potential of an interior vertex is greater than its
axis-aligned projection to the boundary of the grid.
We defer the proof of \Cref{lem:nnIsMin} to \Cref{sec:app_grid}.

\begin{restatable}[]{lemma}{nnIsMin}
\label{lem:nnIsMin}
If $\uu$ is a vertex in the top-left quadrant of $\square{n}$,
then for any non-sink vertex $\vv$ 
we have
  \[
  \ppi^{}_{\uu}\left(\vv\right) \geq
    \frac{1}{16}\ppi^{}_{\uu}\left((n,n)\right).
  \]
\end{restatable}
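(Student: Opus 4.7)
The plan is to invoke the random walk interpretation from Fact~\ref{fact:voltageRandomWalk}, which identifies $\ppi_{\uu}(\vv)$ with the probability that a random walk starting at $\vv$ reaches $\uu$ before $\sink$, and establish the bound by constructing explicit maps between families of walks that preserve probability mass up to a constant factor.

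First I would reduce to the case that $\vv$ lies in the bottom-right region relative to $\uu$, i.e., $\vv_1 \geq \uu_1$ and $\vv_2 \geq \uu_2$. For $\vv$ outside this region, reflect $\vv$ across the horizontal line $y = \uu_2$ or the vertical line $x = \uu_1$; since $\uu$ lies in the top-left quadrant so that $\uu_1, \uu_2 \leq \lceil n/2 \rceil$, the reflected image lies inside $\square{n}$. Reflecting each walk in the corresponding axis up to its first crossing of the axis gives a map into the walks starting at $\vv$, showing $\ppi_\uu(\vv) \geq \ppi_\uu(\vv')$ for the reflected image $\vv'$.

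Next, for $\vv$ in the bottom-right region, I would decompose the comparison into two one-coordinate projections onto the far boundary, each losing a factor of at most $4$:
\begin{align*}
  \ppi_{\uu}\bigl((\vv_1, \vv_2)\bigr) \;\geq\; \frac{1}{4} \, \ppi_{\uu}\bigl((n, \vv_2)\bigr)
  \quad\text{and}\quad
  \ppi_{\uu}\bigl((n, \vv_2)\bigr) \;\geq\; \frac{1}{4} \, \ppi_{\uu}\bigl((n, n)\bigr).
\end{align*}
Multiplying these yields the claimed factor of $\frac{1}{16}$. Each projection uses the same template: given a walk from the farther vertex (say $(n, \vv_2)$) that reaches $\uu$ before $\sink$, observe that since $\uu_1 \leq \vv_1 \leq n$, the walk must cross column $\vv_1$ at some first time $\tau$, landing at a vertex $(\vv_1, y)$. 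Using the strong Markov property together with a controlled reshaping of the walk prefix --- essentially reflecting the prefix across the column $x = \vv_1$ and inserting a bounded-length vertical detour so that the starting point becomes $\vv$ --- one produces a walk from $\vv$ reaching $\uu$ before $\sink$ whose probability is at least one-quarter of the original.

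The hard part will be controlling the constant factor in the map between walk families: naive concatenations of additional walk segments multiply probabilities by $4^{-L}$ for each extra step $L$, so the reduction must be done through length-preserving reflections rather than concatenations. A secondary difficulty is that a direct reflection across $x = \vv_1$ could push part of the walk outside $\square{n}$ when $2\vv_1 < n$; this is handled by only reflecting the prefix before the walk's first crossing of column $\vv_1$, so the reflected path stays strictly right of $\vv_1$ and hence within the grid, while the unchanged suffix from $(\vv_1, y)$ to $\uu$ carries over the sink-avoidance condition automatically.
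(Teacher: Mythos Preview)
Your high-level decomposition into two one-coordinate projections, each costing a factor of $1/4$, matches the paper's structure. The gap is in the walk map itself. Reflecting the prefix of a walk from $(n,\vv_2)$ across the column $x=\vv_1$ sends the starting point to $(2\vv_1-n,\vv_2)$, not to $\vv=(\vv_1,\vv_2)$; and the reflected prefix lies to the \emph{left} of column $\vv_1$, not the right, so it can exit the grid whenever $2\vv_1\le n$. Your vertical-detour patch does not repair this either: the first time the walk from $(n,\vv_2)$ hits column $\vv_1$ it lands at $(\vv_1,y)$ with $y$ unconstrained in $[1,n]$, so any segment connecting $(\vv_1,\vv_2)$ to $(\vv_1,y)$ has length $|y-\vv_2|$ and costs $4^{-|y-\vv_2|}$, not a constant. (Your Step~1 reduction has the same defect: reflecting a walk across $x=\uu_1$ can push its first coordinate down to $2\uu_1-n\le 0$.)

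The paper reflects across the \emph{midpoint} column $\ell=\lceil(\vv_1+n)/2\rceil$ instead. This sends $(n,\vv_2)$ directly to $(\vv_1,\vv_2)$ up to a single parity step, and the reflected prefix automatically lies in $[\vv_1,\ell)\times[1,n]$, so nothing leaves the grid and your preliminary reduction becomes unnecessary. There is a second subtlety you have not addressed: since $\uu_1\le\lceil n/2\rceil\le\ell$, the target $\uu$ sits on the same side of $\ell$ as $\vv$, so the reflected walk may hit $\uu$ \emph{before} the original walk ever crosses $\ell$. The map is then many-to-one, and one must bound the total mass of each fiber. The paper handles this by noting that every preimage in such a fiber has the form (reflection of $w'$ up to $\uu^\ast$) followed by an arbitrary walk from $\uu^\ast$ to $\uu$, where $\uu^\ast$ is the reflection of $\uu$; the tails have total weight $\ppi_\uu(\uu^\ast)\le 1$, which together with the parity step accounts for the factor $1/4$.
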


\subsubsection{Lower Bounding Corner Potential}

By decomposing two-dimensional walks on $\square{n}$ that start at $\uu$
into one-dimensional walks
on $\line$, our lower bound relies on showing that there is a $\Theta(n^2)$
length interval such that each one-dimensional walk of a fixed length in this
interval
has probability $\Omega(\uu_1 / n^3)$ or $\Omega(\uu_2 / n^3)$,
respectively, of remaining above $0$ and reaching $n$ for the first time 
upon termination.
For our purposes, lower bounds for this probability will suffice,
and they follow from the following key property for one-dimensional walks
that we prove in \Cref{sec:path}.


\begin{restatable}[]{lemma}{probLeftRight}
\label{lem:probLeftRight}
Let $n \in \Z_{\ge 1}$ and $1 \leq i \leq \lceil n/2 \rceil$
be any starting position.
For any constant $c > 4$ and any $t \in \Z$ such that
$n^2 / c \le t \le n^2/4$ with $t \equiv n-i \pmod{2}$,
a simple symmetric random walk $\WW$ on $\Z$ satisfies
  \begin{align*}
  \prob{\WW \sim \mathcal{W}^{\line}(i)}{
  \ww^{\left(t\right)} = n,
  \maP{t} = n,
  \text{and } \miP{t} \geq 1 } \geq e^{- 2 - 2c}  \frac{i}{n^3}.
\end{align*}
\end{restatable}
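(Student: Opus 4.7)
The plan is to interpret the event as lattice-path counting, apply the strip reflection formula, truncate the resulting alternating sum to its dominant term, and estimate via Stirling's approximation. The event ``$\ww^{(t)} = n$, $\maP{t} = n$, and $\miP{t} \ge 1$'' is exactly that the walk stays in $\{1, \ldots, n\}$ throughout and ends at $n$, so the probability equals $N(i, n, t)/2^t$, where $N(i, n, t)$ counts such paths. By the classical method of images for the strip $(0, n+1)$,
\[
N(i, n, t) = \sum_{k \in \Z}\left[\binom{t}{(t+n-i)/2 + k(n+1)} - \binom{t}{(t+n+i)/2 + k(n+1)}\right].
\]
I would first show that $N(i,n,t)$ is bounded below by just the $k = 0$ and $k = -1$ contributions. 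Denote the $k$-th summand $C_k$; pairing the remaining terms as $(C_j, C_{-(j+1)})$ for $j \ge 1$ and using that the discrete derivative $\binom{t}{t/2+d} - \binom{t}{t/2+d+1}$ is decreasing in $d$ whenever $d > \sqrt{t}/2$ (which holds in the tail since $t \le n^2/4$), each such pair is non-negative, so $N \ge C_0 + C_{-1}$.

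Applying binomial symmetry $\binom{t}{m} = \binom{t}{t-m}$ to simplify $C_{-1}$, the leading contribution can be rewritten as the mixed second difference
\[
\Bigl[\binom{t}{(t+n-i)/2} - \binom{t}{(t+n-i+2)/2}\Bigr] - \Bigl[\binom{t}{(t+n+i)/2} - \binom{t}{(t+n+i+2)/2}\Bigr],
\]
in which both brackets are positive binomial differences and the first dominates. Applying Stirling in the form $\binom{t}{t/2+d}/2^t = \sqrt{2/(\pi t)}\, e^{-2d^2/t}\bigl(1 + O(d^2/t)\bigr)$---whose error is controllable for $d = O(n)$ and $t \ge n^2/c$---reduces this to a positive multiple of $t^{-3/2}\bigl[(n-i+1)\, e^{-(n-i)^2/(2t)} - (n+i+1)\, e^{-(n+i)^2/(2t)}\bigr]$. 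I would then lower bound the bracket by writing it as $a\bigl(1 - (b/a) e^{-2ni/t}\bigr)$ with $a = n - i + 1$ and $b = n + i + 1$: the hypotheses $n^2/c \le t \le n^2/4$ and $i \le \lceil n/2 \rceil$ force $2ni/t \ge 8i/n \ge 2\log(b/a)$ (using $\log(b/a) \le 2i/(n-i+1) \le 4i/n$), hence $(b/a) e^{-2ni/t} \le 1 - \Omega(i/n)$. Combining this with $e^{-(n-i)^2/(2t)} \ge e^{-c/2}$ and $t^{-3/2} = \Theta(n^{-3})$ up to polynomial-in-$c$ factors yields a lower bound of $\Omega(i/n^3) \cdot e^{-O(c)}$; tracking constants carefully through the Stirling estimate (which contributes an $e^{-2}$ factor at the extreme $t = n^2/4$) gives the claimed $e^{-2-2c}\, i/n^3$.

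The main obstacle is the truncation step. A naive inclusion-exclusion---subtracting ``walks hitting $0$'' and ``walks hitting $n+1$'' from all walks ending at $n$---produces a lower bound that is actually \emph{negative} for small $i$, because the two one-sided reflections each produce a binomial near $\binom{t}{(t+n)/2}$ whose value nearly cancels the main term $\binom{t}{(t+n-i)/2}$. The resolution is to keep the $k = 0$ and $k = -1$ reflections together and rewrite them as a single mixed second-difference expression \emph{before} invoking Stirling, so the dangerous cancellation is handled exactly and only the positive $\Omega(i/n)$ contribution survives. Verifying that the remaining paired contributions $(C_j + C_{-(j+1)})$ for $j \ge 1$ are non-negative is a direct monotonicity calculation but essential for the truncated expression to serve as a valid lower bound.
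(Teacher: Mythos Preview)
Your approach is correct but takes a genuinely different route from the paper. The paper factors the probability via conditioning,
\[
\Pr[w^{(t)}=n,\ \max=n,\ \min\ge 1]=\Pr[\min\ge 1]\cdot\Pr\bigl[w^{(t)}=n,\ \max=n \mid \min\ge 1\bigr],
\]
bounds the first factor by $e^{-1-c}\,i/n$ via a single reflection (Lemma~\ref{lem:probLeft}), and handles the second factor by a monotonicity argument (Lemma~\ref{lem:leftWorse}) showing that the conditioning can only help, so it may be dropped and the unconditioned probability bounded by $e^{-1-c}/n^2$ (Lemma~\ref{lem:probRight}). Each piece requires Stirling only on a \emph{single} binomial coefficient, never on a difference. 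Your route through the full strip reflection formula is the classical exact approach and is more self-contained, but it forces you to control cancellation in a second difference of binomials; the paper's decomposition sidesteps that entirely, at the cost of the extra Lemma~\ref{lem:leftWorse}.

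One point you should tighten: the Stirling error you quote, $(1+O(d^2/t))$, is not sharp enough as stated. With $d=\Theta(n)$ and $t\ge n^2/c$ that relative error is $O(c)$, a constant, which does not survive the second differencing when $i=O(1)$: the main contribution is then only a $\Theta(i/n)$ fraction of each bracketed first difference, so a constant relative error would swamp it. You need either the sharper local-limit error $O(d^3/t^2+1/t)=O(1/n)$, or---cleaner---first apply the exact identity $\binom{t}{m}-\binom{t}{m+1}=\binom{t}{m}\,\frac{2m+1-t}{m+1}$ to each bracket (this is in fact where your factors $n\mp i+1$ come from) and only \emph{then} invoke Stirling on the two remaining binomials. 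With that fix your argument goes through, and in fact yields a better dependence on $c$ than the paper's $e^{-2-2c}$.
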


Using \Cref{lem:probLeftRight} with the following lemma,
we give a lower bound for $\ppi^{}_{(n,n)}(\uu)$,
the probability that a walk starting from $\uu$ reaches $(n, n)$ before $\sink$.
\Cref{lem:chernoff} is a consequence of a Chernoff bound, and we
defer its proof to \Cref{sec:app_grid}.

\begin{restatable}[]{lemma}{chernoff}
\label{lem:chernoff}
For all $n \ge 10$, we have
  \[
  \min\left\{
     \frac{1}{2^n} \sum_{\substack{k = \left\lceil \frac{n}{4} \right\rceil \\ k \text{ \normalfont{odd}}}}^{\left\lfloor \frac{3n}{4} \right\rfloor} \binom{n}{k},~
     \frac{1}{2^n} \sum_{\substack{k = \left\lceil \frac{n}{4} \right\rceil \\ k \text{ \normalfont{even}}}}^{\left\lfloor \frac{3n}{4} \right\rfloor} \binom{n}{k}
   \right\} \ge \frac{2}{5}.
 \]
\end{restatable}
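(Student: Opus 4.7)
The plan is to reduce the inequality to two classical facts about the binomial distribution: the odd-index and even-index sums of $\binom{n}{k}$ are each exactly $2^{n-1}$, and a $\mathrm{Binomial}(n, 1/2)$ random variable is sharply concentrated around $n/2$. First, I will apply the binomial theorem to $(1+1)^n$ and $(1-1)^n = 0$ to obtain the identity
\[
  \sum_{k \text{ even}} \binom{n}{k}
  \;=\; \sum_{k \text{ odd}} \binom{n}{k}
  \;=\; 2^{n-1}.
\]
This controls the ``full'' odd and even sums; the task then becomes to bound how much of each is discarded when restricting to the range $[\lceil n/4 \rceil, \lfloor 3n/4 \rfloor]$.

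Next, I will invoke Hoeffding's inequality for a $\mathrm{Binomial}(n, 1/2)$ random variable $X$, which yields $\Pr[\,|X - n/2| \geq n/4\,] \leq 2 e^{-n/8}$. Note that any integer $k$ outside $[\lceil n/4 \rceil, \lfloor 3n/4 \rfloor]$ satisfies $|k - n/2| > n/4$, so multiplying by $2^n$, the sum of $\binom{n}{k}$ over such $k$ is at most $2^{n+1} e^{-n/8}$. Since this bound applies to the union of the discarded odd and even contributions, it upper bounds each of them individually. Combined with the identity above, each restricted odd or even sum is at least
\[
  2^{n-1} - 2^{n+1} e^{-n/8} \;=\; 2^n \bigl(\tfrac{1}{2} - 2 e^{-n/8}\bigr).
\]

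For $n \geq 24$, one checks that $2 e^{-n/8} \leq 2 e^{-3} < 1/10$, so each restricted sum divided by $2^n$ is at least $1/2 - 1/10 = 2/5$, giving the desired bound. For the remaining cases $10 \leq n \leq 23$, Hoeffding is just slightly too weak to conclude, so I will handle these by direct computation: there are only $14$ values of $n$, and for each one the restricted odd and even sums each involve only a handful of binomial coefficients, so one can verify $\geq 2/5$ term by term.

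The main obstacle is essentially bookkeeping rather than mathematical: the Chernoff bound alone does not cover the smallest values of $n$ allowed by the hypothesis, so a finite case check is unavoidable in this approach. One could instead tighten the tail bound (for instance by using that the largest excluded binomial coefficients are $\binom{n}{\lceil n/4 \rceil - 1}$ and $\binom{n}{\lfloor 3n/4 \rfloor + 1}$, each of size $O(2^{H(1/4)n}/\sqrt{n})$) to merge the two regimes into a single argument, but the simplest path forward is to combine Hoeffding with a short direct verification.
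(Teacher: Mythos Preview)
Your argument is correct and follows the same overall shape as the paper's proof (a concentration inequality handles large $n$, and a finite check covers the rest), but the reduction step is genuinely different. The paper does not use the identity $\sum_{k\text{ even}}\binom{n}{k}=\sum_{k\text{ odd}}\binom{n}{k}=2^{n-1}$; instead it applies Pascal's rule $\binom{n}{k}=\binom{n-1}{k-1}+\binom{n-1}{k}$ to each parity-restricted sum, which collapses both the odd and even sums at level $n$ to a single \emph{unrestricted} sum $\sum_{k\in((n-1)/4,\,3(n-1)/4)}\binom{n-1}{k}$, and then applies a multiplicative Chernoff bound to $S_{n-1}$. Your route is more elementary: you start from the full odd/even mass $2^{n-1}$ and subtract the tail mass, bounding the latter crudely by the total two-sided tail. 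You also use Hoeffding rather than the multiplicative Chernoff bound, which gives the sharper exponent $e^{-n/8}$ versus the paper's $e^{-(n-1)/24}$; as a consequence your analytic regime kicks in at $n\ge 24$ while the paper needs $n\ge 60$ and must verify roughly fifty cases by hand. Either approach works, and yours trades a small combinatorial trick for a smaller case check.
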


\begin{lemma}
\label{lem:voltageLower}
For all $n \geq 10$ and any vertex $\uu$ in the top-left quadrant of
$\square{n}$, we have
\[
  \ppi^{}_{(n,n)}\left( \uu \right)
  \geq e^{-100} \frac{\uu_1  \uu_2}{n^4}.
\]
\end{lemma}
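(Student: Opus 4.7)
The plan is to interpret $\ppi_{(n,n)}(\uu)$ via \Cref{fact:voltageRandomWalk} as the probability that a simple random walk on $\square{n}$ started at $\uu$ reaches $(n, n)$ before the sink, and to decouple this 2D event into independent 1D events. Generating each 2D step by a fair coin flip for horizontal versus vertical followed by another for $\pm 1$, the horizontal coordinate traces an independent simple symmetric random walk $\WW_1$ on $\Z$ starting at $\uu_1$ and the vertical coordinate an independent $\WW_2$ starting at $\uu_2$. Letting $H_T$ denote the number of horizontal choices among the first $T$ dimension flips, the 2D position at step $T$ equals $(\ww_1^{(H_T)},\, \ww_2^{(T - H_T)})$, and the 2D walk stays inside $[1, n]^2$ up to step $T$ iff $\WW_1$ stays in $[1, n]$ up to time $H_T$ and $\WW_2$ up to time $T - H_T$.

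For each $(t_1, t_2)$ with $t_i \equiv n - \uu_i \pmod{2}$, let $\tau_i$ be the first time $\WW_i$ hits $n$ and $S_i$ the event that $\WW_i$ stays in $[1, n]$ throughout $[0, \tau_i]$. Consider
\[
  A(t_1, t_2)\;:=\;\{\tau_1 = t_1,\;\tau_2 = t_2,\;S_1,\;S_2,\;H_{t_1 + t_2} = t_1\}.
\]
On $A(t_1, t_2)$, at step $T := t_1 + t_2$ the 2D walk sits at $(n, n)$, and at every earlier $s < T$ we have $H_s < t_1$ or $s - H_s < t_2$, which forces at least one coordinate to be strictly below $n$ (since $\tau_1 = t_1$, $\tau_2 = t_2$); together with $S_1, S_2$ this guarantees the walk stays in $[1, n]^2$ and reaches $(n, n)$ before the sink. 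Because $\tau_1, \tau_2$, and $(H_T)$ are deterministic functions of three independent objects ($\WW_1$, $\WW_2$, and the dimension schedule), the events $A(t_1, t_2)$ are pairwise disjoint, giving
\[
  \ppi_{(n,n)}(\uu)\;\geq\;\sum_{t_1,\, t_2} f_1(t_1)\, f_2(t_2)\, \binom{t_1 + t_2}{t_1}\, 2^{-(t_1 + t_2)},\qquad f_i(t)\,:=\,\prob{}{\tau_i = t,\, S_i}.
\]

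To lower bound $f_i(t)$, note that on $\{\tau_i = t,\, S_i\}$ the walk must be at $n - 1$ at step $t - 1$, have stayed in $[1, n - 1]$ on $[0, t - 1]$, and then step right with probability $1/2$. Hence $f_i(t) = \tfrac{1}{2}\, \prob{}{\ww_i^{(t - 1)} = n - 1,\, \max_{\leq t - 1}(\WW_i) = n - 1,\, \min_{\leq t - 1}(\WW_i) \geq 1}$. Applying \Cref{lem:probLeftRight} with the substitutions $n \mapsto n - 1$, $i \mapsto \uu_i$, $t \mapsto t - 1$ for a fixed $c > 4$ yields $f_i(t) \geq \tfrac{1}{2}\, e^{-2 - 2c}\, \uu_i / (n - 1)^3 = \Omega(\uu_i / n^3)$ whenever $t - 1 \in [(n - 1)^2 / c,\, (n - 1)^2 / 4]$ with matching parity, so $f_i$ admits this per-step bound on a $\Theta(n^2)$-length arithmetic progression.

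Restricting the sum to $(t_1, t_2)$ in the product $V$ of these two progressions contributes $\Omega(\uu_1 \uu_2 / n^6)$ from the $f_i$'s, so the task reduces to showing $\sum_{(t_1, t_2) \in V} \binom{t_1 + t_2}{t_1}\, 2^{-(t_1 + t_2)} = \Omega(n^2)$. Choose $c$ so that for every $T$ in a central $\Theta(n^2)$-length sub-window (with $T \equiv \uu_1 + \uu_2 \pmod 2$) the admissible slice $\{t_1 : (t_1, T - t_1) \in V\}$ contains $[T/4,\, 3T/4]$; then \Cref{lem:chernoff} gives $\sum_{t_1} \binom{T}{t_1}\, 2^{-T} \geq 2/5$ on that slice (for either parity), and summing over the $\Theta(n^2)$ admissible $T$ yields the desired $\Omega(n^2)$. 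Combining everything gives $\ppi_{(n,n)}(\uu) = \Omega(\uu_1 \uu_2 / n^4)$; picking $c \approx 20$ and tracking constants through \Cref{lem:probLeftRight} and \Cref{lem:chernoff} (for which $n \geq 10$ suffices) pushes the product above $e^{-100}$. The main obstacle is calibrating $c$, the central $T$-window, and the two parities so that the hypotheses of both cited lemmas hold while the product of multiplicative losses stays above $e^{-100}$; a minor edge case for $n$ odd with $\uu_i = (n + 1)/2$ (where the substituted hypothesis $\uu_i \leq \lceil (n - 1)/2 \rceil$ narrowly fails) is handled separately.
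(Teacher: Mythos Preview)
Your proposal is correct and follows essentially the same route as the paper: decouple the two-dimensional walk into independent one-dimensional walks and a dimension schedule, restrict to the disjoint events where each coordinate first hits $n$ at prescribed times $(t_1,t_2)$, factor the probability as $f_1(t_1)f_2(t_2)\binom{t_1+t_2}{t_1}2^{-(t_1+t_2)}$, bound each $f_i$ via \Cref{lem:probLeftRight} (after peeling off the last step to shift $n\mapsto n-1$), and sum over a $\Theta(n^2)$ window of $T=t_1+t_2$ using \Cref{lem:chernoff}. The paper makes the identical choices, fixing $c=20$ and the window $T\in[n^2/5,\,n^2/4]$ with $t_1,t_2\in[T/4,\,3T/4]$; your explicit event framework $A(t_1,t_2)$ and your flagging of the $\uu_i=\lceil n/2\rceil$ edge case (which the paper glosses over) are minor presentational differences, not substantive ones.
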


\begin{proof}
We decouple each walk $\WW \in \mathcal{W}^{} ( \uu \rightarrow (n, n))$
into its horizontal walk $\WW_1 \in \mathcal{W}^{\line}(\uu_1)$
and vertical walk $\WW_2 \in \mathcal{W}^{\line}(\uu_2)$.
  The potential $\ppi^{}_{(n,n)}\left( \uu \right)$ can be
interpreted as the probability that $\WW_1$ and $\WW_2$ visit $n$ at the same
time before either leaves the interval $[1,n]$.
We can further decompose $t$-step walks on $\square{n}$ into those that
take $t_1$ steps in the horizontal direction and $t_2$ in the vertical direction.
Considering restricted instances where $\WW_1$ and $\WW_2$ visit $n$ exactly
once, we obtain the following bound by \Cref{lem:normalizingConstant}:
\begin{align}
  \ppi^{}_{(n,n)}\left( \uu \right)
  \ge \sum_{
	   \substack{w \sim \mathcal{W}^{}(\uu \rightarrow (n,n)) \\
	   \text{$w_1$ hits $n$ exactly once} \\
	   \text{$w_2$ hits $n$ exactly once} }} 4^{-|w|}. \label{eqn:prob}
\end{align}
Accounting for all the ways that two one-dimensional walks can be
interleaved,
the right hand side of \Cref{eqn:prob} is 
\begin{align*}
\sum_{t_1, t_2 \geq 0 } &\frac{{{t_1 + t_2}\choose{t_1}}}{4^{t_1+ t_2}}
  \left(\text{$\#$ of $t_1$-step walks from $\uu_1$ that stay in $[1, n-1]$ and terminate at $n$}\right)\\
  &\hspace{1cm}\cdot\left(\text{$\#$ of $t_2$-step walks from $\uu_2$ that stay in $[1, n-1]$ and terminate at $n$}\right).
\end{align*}
Observing that
\begin{align*}
	&\prob{\WW_1 \sim \mathcal{W}^{}\left( \uu_1 \right)}{\ww_1^{(t_1)} = n, \maP{t_1 - 1} = n-1, 
	\miP{t_1-1} \geq 1 }\\
	&\hspace{1.82cm}=
	\frac{\left(\text{$\#$ of $t_1$-step walks from $\uu_1$ that stay in $[1, n-1]$ and terminate at $n$}\right)}{2^{t_1}},
\end{align*}
it follows from \Cref{eqn:prob} that
\begin{align*}
  \ppi^{}_{(n,n)}\left( \uu \right)
  \geq \hspace{-0.2cm}
\sum_{t_1, t_2 \geq 0 } &\frac{{{t_1 + t_2}\choose{t_1}}}{2^{t_1+ t_2}}
  \prob{\WW_1 \sim \mathcal{W}^{}\left( \uu_1 \right)}{\ww_1^{(t_1)} = n, \maP{t_1 - 1} = n-1, 
	\miP{t_1-1} \geq 1 }\\
  &\hspace{0.9cm}\cdot\prob{\WW_2 \sim \mathcal{W}^{}\left( \uu_2 \right)}{\ww_2^{(t_2)} = n,
	\maP{t_2 - 1} = n-1,
	\miP{t_2-1} \geq 1 }.
\end{align*}
	

\noindent
By our choice of $n$ and $\uu$, the right hand side of inequality above equals
\begin{align*}
  \sum_{t_1, t_2 \geq 5} &\frac{{{t_1 + t_2}\choose{t_1}}}{2^{t_1+ t_2}}
\left( \frac{1}{2}\prob{\WW_1 \sim \mathcal{W}^{}\left( \uu_1 \right)}{\ww_1^{(t_1-1)} = n-1,
	\maP{t_1 - 1} = n -1, 
	\miP{t_1-1} \geq 1 }  \right)\\
  &\hspace{0.96cm}\cdot\left( \frac{1}{2}\prob{\WW_2 \sim \mathcal{W}^{}\left( \uu_2 \right)}{\ww_2^{(t_2-1)} = n-1,
		\maP{t_2 - 1} = n-1,
    \miP{t_2-1} \geq 1 }   \right). \tag{$2$}\label{eqn:prob2}
\end{align*}
	


Letting $t = t_1 + t_2$,
we further refine the set of two-dimensional walks so that
$t \in [\frac{1}{5}n^2, \frac{1}{4}n^2]$ and $t_1, t_2 \in [\frac{1}{4}t, \frac{3}{4}t]$
while capturing a sufficient amount of probability mass for a useful lower bound.
Note that the parities of $t_1$ and $t_2$ satisfy
$t_1 \equiv n - \uu_1 \pmod{2}$
and $t_2 \equiv n - \uu_2 \pmod{2}$ for valid walks.
Let $I$ be an indexing of all such pairs $(t_1, t_2)$.
Working from \Cref{eqn:prob2},  we have
\begin{align*}
  \ppi^{}_{(n,n)}\left( \uu \right) &\ge
  \sum_{(t_1, t_2) \in I} \frac{\binom{t_1+t_2}{t_1}}{2^{t_1 + t_2}}
  \left(\frac{1}{2}e^{-2-2(20)} \frac{\uu_1}{n^3} \right)
  \left(\frac{1}{2}e^{-2-2(20)} \frac{\uu_2}{n^3} \right)\\
  &\ge
  e^{-84} \cdot \frac{\uu_1 \uu_2}{4n^6} \sum_{\substack{t \in \left[\frac{n^2}{5}, \frac{n^2}{4} \right] 
                                 \\ t \equiv \uu_1 + \uu_2 \pmod{2}}}
    \frac{2}{5}\\ 
  &\ge
    e^{-84} \cdot \frac{\uu_1 \uu_2}{4n^6} \cdot \frac{n^2}{50} \cdot
        \frac{2}{5} \\
  &\ge  e^{-100} \cdot \frac{\uu_1 \uu_2}{n^4}.
\end{align*}
For the first inequality, we can apply \Cref{lem:probLeftRight} because
\[
  \frac{1}{20}n^2 \le t_1, t_2 \le \frac{3}{16}n^2.
\]
For the second inequality, we group pairs $(t_1, t_2)$ by their sum
$t = t_1 + t_2$ and apply \Cref{lem:chernoff}.
The number of $t \in [\frac{1}{5}n^2, \frac{1}{4}n^2]$ with either parity restriction is at least
$\lfloor \frac{1}{40}n^2 \rfloor \ge \frac{1}{50}n^2$.
\end{proof}

\subsection{Proof of Theorem~\ref{thm:mainGrid}}\label{subsec:thm}
We now combine the upper bound for the sum of potentials given by 
\Cref{lem:sumBound} and the lower bounds in
\Cref{subsec:lower} to obtain the overall upper bound for the transience
class of the grid.


\begin{proof}
For any $\uu = (\uu_1, \uu_2)$ in the top-left quadrant of $\square{n}$, we have
\begin{align*}
  \max_{\uu, \vv \in V\setminus\{\sink\}}
     \left( \sum_{\xx \in V}\pi^{}_{\uu}(\xx) \right)
     \pi^{}_{\uu}(\vv)^{-1}
     &\le \max_{\uu \in V\setminus\{\sink\}} 
  \left(\sum_{\xx \in V} \ppi^{}_{\uu}\left( \xx \right)\right)
  \frac{16}{\ppi^{}_{\uu}\left((n,n) \right)}\\
 &= \max_{\uu \in V\setminus\{\sink\}} 
    \left(\sum_{\xx \in V} \ppi^{}_{\uu}\left( \xx \right)\right)
  \frac{O\left(\log{n}\right)}{\ppi^{}_{(n,n)} \left( \uu\right) }\\
&= 	\max_{\uu \in V\setminus\{\sink\}}
  O\left( \uu_1 \uu_2  \log^3n\right)
  O\left( \frac{n^4\log{n}}{\uu_1 \uu_2} \right)\\
&= O \left(n^4\log^4 n\right).
\end{align*}
The first inequality follows from \Cref{lem:nnIsMin},
the second from \Cref{lem:swapSourceTarget},
and the third from Lemma~\ref{lem:voltageLower} and Lemma~\ref{lem:sumBound}.
The result follows from Theorem~\ref{thm:reduction}.
\end{proof}

\section{Lower Bounding the Transience Class}
\label{sec:lower}

In this section we lower bound $\tcl(\square{n})$ using techniques
similar to those in Section~\ref{sec:grid}. Since the lower bound in
Theorem~\ref{thm:reduction} considers the maximum inverse vertex potential over all pairs
of non-sink vertices $u$ and~$v$, it suffices to upper bound
$\pi^{}_{(n,n)}((1,1))$.
We lower bound vertex potentials by decomposing two-dimensional
walks on $\square{n}$ into one-dimensional walks on $\line$ and then upper
bound the probability that a $t$-step walk on $\line$ starting at~$1$ and
ending at~$n$ does not leave the interval $[1,n]$. More specifically, our upper
bound for 
$\pi^{}_{(n,n)}((1,1))$ follows from \Cref{lem:upperProbLeftRight}
(which we prove in Section~\ref{sec:path}) and
Fact~\ref{fact:binSumInf}.

\begin{restatable}[]{lemma}{upperProbLeftRight}
\label{lem:upperProbLeftRight}
	
	
For all $n \geq 20$ and $t \ge n-1$, we have
\begin{align*}
  &\prob{\WW \sim \mathcal{W}^{\line}(1)}{
      \ww^{(t)} = n, \maP{t} = n, \text{and } \miP{t} \geq 1  }
    \leq \min{\left\{ \frac{e^{25}}{n^3}, 64\left(\frac{n}{t}\right)^3 \right\}}.
\end{align*}
\end{restatable}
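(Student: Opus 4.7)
The plan is to express $P$ combinatorially as $N(t)/2^t$, where $N(t)$ counts the $t$-step simple symmetric walks on $\Z$ from $1$ to $n$ that stay in $[1,n]$, and then bound $N(t)$ via the reflection principle. Since the walk must satisfy both $\miP{t} \ge 1$ and $\maP{t} \le n$, iterating reflections at the two boundaries $0$ and $n+1$ yields the image-sum expansion
\[
N(t) = \sum_{k \in \Z} \left[ \binom{t}{(t+n-1)/2 + k(n+1)} - \binom{t}{(t+n+1)/2 + k(n+1)} \right],
\]
where the alternating structure comes from tracking signs through each reflection in the orbit of $n$.

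The $k=0$ term alone equals $\binom{t}{(t+n-1)/2} \cdot \frac{2n}{t+n+1}$ (by the standard ballot identity $\binom{t}{j} - \binom{t}{j+1} = \binom{t}{j}(2j+1-t)/(j+1)$), which gives the estimate $O(n/t^{3/2})$ after Stirling's approximation---insufficient, since at $t=n^2$ this is only $O(1/n^2)$. To reach $O(1/n^3)$ the plan is to exploit cancellation with the $k = \pm 1$ images: using the binomial symmetry $\binom{t}{j} = \binom{t}{t-j}$, these nearest images combine with the $k=0$ term to produce a second-difference of binomial coefficients in place of a first-difference. Each such second-difference carries an extra factor of $\bigl(\frac{2n}{t+n+1}\bigr)^2$ relative to the central binomial, producing the improved bound
\[
P \le O\!\left( \frac{n^2}{t^{5/2}} \cdot e^{-(n-1)^2/(2t)} \right).
\]
From this unified estimate, the first conclusion $P \le e^{25}/n^3$ follows by maximizing the right-hand side over $t \ge n-1$ (the maximum occurs at $t \asymp n^2$, where it evaluates to $\Theta(1/n^3)$, and the constant $e^{25}$ is easily large enough to absorb all factors). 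The second conclusion $P \le 64(n/t)^3$ follows for $t \ge n^2$, since then $n^2/t^{5/2} \le n^3/t^3$ and the exponential factor is at most $1$; for $t < n^2$ the bound $64(n/t)^3 \ge 64$ is trivial.

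The main obstacle will be rigorously controlling the contributions from higher-order images $|k| \ge 2$ in the reflection expansion. These correspond to binomials shifted far from the central peak, so they should decay geometrically by the Gaussian tail estimate for binomial coefficients, but establishing a uniform bound across all $t \ge n-1$ requires careful bookkeeping, especially for small $t$ where Stirling's estimate is not sharp. I anticipate showing that the total tail contribution is bounded by a constant multiple of the $k \in \{-1, 0, 1\}$ terms, so that the displayed estimate above remains valid up to an absolute constant; the generous factors $e^{25}$ and $64$ in the statement are chosen precisely to comfortably absorb all of this bookkeeping without needing delicate optimization.
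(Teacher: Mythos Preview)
Your method-of-images approach is a genuinely different route from the paper's and is essentially sound for the regime $t \lesssim n^2$, where it yields the uniform bound $e^{25}/n^3$. The paper instead splits each walk at its temporal midpoint, sums over the midpoint location $i\in[1,n]$, and applies a single reflection to \emph{each} half; this produces two first-difference factors $\tfrac{2i}{t}$ and $\tfrac{2(n-i+1)}{t}$, whose product already gives the $n^3/t^3$ shape after summing over $i$. Your approach trades this factorization for a second difference in the full image sum, which is elegant but more delicate.

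The genuine gap is in the large-$t$ regime. First, a direct computation shows
\[
\binom{t}{j}-2\binom{t}{j+1}+\binom{t}{j+2}
=\binom{t}{j}\cdot\frac{4\bigl((n+1)^2-2-t\bigr)}{(t+n+1)(t+n+3)},
\qquad j=\tfrac{t+n-1}{2},
\]
so the second difference that you isolate changes sign at $t\approx (n+1)^2$ and is \emph{negative} for $t\gg n^2$. In that regime the higher images (at displacements $\sim 3n,5n,\dots$) are no longer Gaussian-tail-small relative to the main term: there are $\Theta(\sqrt{t}/n)$ of them within one standard deviation of the peak, so the ``geometric decay'' you rely on fails and the tail cannot be bounded by a constant multiple of the $k\in\{-1,0,1\}$ contribution. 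Second, the final arithmetic is backward: $n^2/t^{5/2}\le n^3/t^3$ is equivalent to $t\le n^2$, not $t\ge n^2$, so your unified estimate $O(n^2/t^{5/2})$ is \emph{weaker} than $64(n/t)^3$ precisely where you need it. (The claim that $64(n/t)^3\ge 64$ for $t<n^2$ is also false once $t>n$.) The $64(n/t)^3$ bound is not cosmetic---it is what makes the sum $\sum_{t>n^2}P$ converge in Lemma~\ref{lem:upperBoundSum}---so a separate argument for large $t$ is required; the paper's midpoint split supplies one uniformly in $t$.
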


\begin{fact}
\label{fact:binSumInf}
For any nonnegative integer $t_1$, we have
\[
  \sum_{t_2 \geq 0} \binom{t_1 + t_2}{t_2}\frac{1}{2^{t_1 + t_2}} = 2.
\]
\end{fact}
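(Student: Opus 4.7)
The plan is to prove this identity via a probabilistic argument using the negative binomial distribution. I would interpret $\binom{t_1+t_2}{t_2} \cdot 2^{-(t_1+t_2+1)}$ as the probability that an infinite sequence of independent fair coin flips produces its $(t_1+1)$-th head on exactly the $(t_1+t_2+1)$-th flip: the combinatorial factor $\binom{t_1+t_2}{t_2}$ counts the number of ways to place the first $t_1$ heads and $t_2$ tails among the initial $t_1+t_2$ flips, and the factor $2^{-(t_1+t_2+1)}$ is the probability of any particular length-$(t_1+t_2+1)$ binary string ending in a head. This is the standard negative-binomial mass function with success parameter $1/2$.

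Next, I would observe that infinitely many heads appear almost surely (a one-line Borel--Cantelli argument, or a direct application of the strong law of large numbers), so the $(t_1+1)$-th head occurs at some finite flip index with probability one. Summing the mass function over all $t_2 \in \Z_{\ge 0}$ therefore gives $\sum_{t_2 \ge 0} \binom{t_1+t_2}{t_2} 2^{-(t_1+t_2+1)} = 1$, and multiplying both sides by $2$ yields the claimed identity.

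As a purely algebraic alternative that avoids any probabilistic modelling, I would instead invoke the generalized binomial series $\sum_{t_2 \ge 0} \binom{t_1+t_2}{t_2} x^{t_2} = (1-x)^{-(t_1+1)}$, which is valid for $|x| < 1$ and can be proved by induction on $t_1$ using Pascal's identity, or by differentiating the geometric series $t_1$ times. Substituting $x = 1/2$ and dividing by $2^{t_1}$ immediately produces $2^{-t_1} \cdot 2^{t_1+1} = 2$.

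The fact is elementary and presents no substantive obstacle; the only conceptual step worth being explicit about is the justification that the underlying waiting time is almost surely finite (equivalently, that the series converges absolutely at $x = 1/2$), after which the equality drops out in one line.
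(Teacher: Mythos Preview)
Your proposal is correct and your primary argument via the negative binomial distribution is exactly the paper's proof: the paper rewrites the sum as $2\sum_{t_2 \ge 0} \binom{(t_1+1)-1+t_2}{t_2}\,2^{-(t_1+1)}2^{-t_2}$ and recognizes the inner sum as the total mass of a negative binomial with success probability $1/2$. Your generating-function alternative is a fine bonus but not needed here.
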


\begin{proof}
This follows directly from the negative binomial distribution. Observe that
\begin{align*}
  &\sum_{t_2 \ge 0} \binom{t_1 + t_2}{t_2} \frac{1}{2^{t_1 + t_2}}
  =
  2\sum_{t_2 \ge 0} \binom{(t_1 + 1) - 1 + t_2}{t_2} \frac{1}{2^{t_1 + 1}}\cdot \frac{1}{2^{t_2}}
  = 2,
\end{align*}
as desired.
\end{proof}

By decoupling the two-dimensional walks in a way similar to the proof of
Lemma~\ref{lem:voltageLower}, we apply \Cref{lem:upperProbLeftRight} to the
resulting one-dimensional walks to achieve the desired upper bound.

\begin{lemma}\label{lem:upperBoundMaxSum}
For all $n \ge 20$, we have
  \begin{align*}
    \pi^{}_{(n,n)}((1,1)) &\leq
    2\max_t{\left\{\prob{\WW \sim \mathcal{W}^{}(1)}{
      \ww^{(t)} = n, \maP{t} = n, \miP{t} \geq 1  } \right\}}\\
    & \hspace{0.95cm} \cdot \sum_{t \geq 0} \prob{\WW \sim \mathcal{W}^{}(1)}{
 	\ww^{(t)} = n, \maP{t} = n, \miP{t} \geq 1  }.
	\end{align*}
\end{lemma}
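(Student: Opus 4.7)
The plan is to interpret $\pi^{}_{(n,n)}((1,1))$ as a random-walk escape probability via \Cref{fact:voltageRandomWalk}, decouple the 2D walk on $\square{n}$ into independent horizontal and vertical 1D walks on $\line$ exactly as in the proof of \Cref{lem:voltageLower}, and then apply \Cref{fact:binSumInf} to collapse the resulting double sum after pulling one factor out as a maximum.

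First I would write
\[
\pi^{}_{(n,n)}((1,1)) = \sum_{T \geq 0} \Pr\bigl[\text{walk from $(1,1)$ first reaches $(n,n)$ at step $T$}\bigr],
\]
which is exact since the first-hit events over $T$ are disjoint and their union is the event of reaching $(n,n)$ before $\sink$. Each first-hit event is a subset of the weaker event $E_T$ that the walk is at $(n,n)$ at step $T$ with neither coordinate having left $[1,n]$ before, so $\pi^{}_{(n,n)}((1,1)) \leq \sum_{T \geq 0} \Pr[E_T]$. I would then refine $E_T$ by the pair $(t_1, t_2)$ with $t_1 + t_2 = T$ counting horizontal and vertical steps, and view the walk on $\square{n}$ as a walk on $\Z^2$ (so hitting $\sink$ means some coordinate leaves $[1,n]$), which makes $\WW_1$ and $\WW_2$ independent simple symmetric random walks on $\Z$ starting at $1$, exactly as in the proof of \Cref{lem:voltageLower}. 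A particular schedule with $t_1$ horizontal moves among $T$ steps has probability $\binom{T}{t_1}/2^{T}$, and conditional on the schedule the two coordinate events factor. Setting
\[
p(t) \defeq \prob{\WW \sim \mathcal{W}^{\line}(1)}{\ww^{(t)} = n,\ \maP{t} = n,\ \miP{t} \geq 1},
\]
this decomposition yields
\[
\pi^{}_{(n,n)}((1,1)) \leq \sum_{t_1, t_2 \geq 0} \frac{\binom{t_1 + t_2}{t_1}}{2^{t_1 + t_2}}\, p(t_1)\, p(t_2).
\]

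To finish, I would fix $t_1$, bound one factor by $p(t_2) \leq \max_{t} p(t)$, and then apply \Cref{fact:binSumInf}, which evaluates $\sum_{t_2 \geq 0} \binom{t_1 + t_2}{t_2}/2^{t_1 + t_2} = 2$. The inner sum is therefore at most $2 \max_t p(t)$, and summing the remaining factor $p(t_1)$ over $t_1$ produces the claimed inequality $\pi^{}_{(n,n)}((1,1)) \leq 2 \max_t p(t) \cdot \sum_{t \geq 0} p(t)$. The only delicate step I anticipate is the passage from the exact identity in terms of first-hit events to the weaker events $E_T$: this overcounts samples that revisit $(n,n)$ before leaving $[1,n]^2$, but that is acceptable for an upper bound, and the decoupling together with \Cref{fact:binSumInf} then make the remainder essentially automatic.
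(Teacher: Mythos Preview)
Your proposal is correct and follows essentially the same route as the paper: decouple the two-dimensional walk into independent one-dimensional walks, relax the first-hit condition to obtain the double sum $\sum_{t_1,t_2}\binom{t_1+t_2}{t_1}2^{-(t_1+t_2)}p(t_1)p(t_2)$, then pull out $\max_t p(t)$ from the inner sum and collapse it with \Cref{fact:binSumInf}. One small wording slip: a \emph{particular} schedule has probability $2^{-T}$, not $\binom{T}{t_1}/2^T$; the binomial coefficient counts the schedules with $t_1$ horizontal moves, which is how it enters the formula you wrote.
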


\begin{proof}
  Analogous to our lower bound for $\pi^{}_{(n,n)}((1,1))$,
  decouple each walk ${\WW \in \mathcal{W}^{} ( (1, 1) \rightarrow (n, n))}$
	into its horizontal walk $\WW_1 \in \mathcal{W}^{\line}(1)$
	and its vertical walk $\WW_2 \in \mathcal{W}^{\line}(1)$.
  We view $\ppi^{}_{(n,n)}\left( (1, 1) \right)$ as the probability that
	the walks $\WW_1$ and $\WW_2$ are present on $n$ at the same time
  before either leaves the interval $[1,n]$.
  Letting $t_1$ be the length of $w_1$ and $t_2$ be the length of $w_2$,
  we relax the conditions on the one-dimensional walks and only require that
  $\WW_1$ and $\WW_2$ both are present on $n$ at the final step $t=t_1+t_2$.
  Note that now both walks could have previously been present on $n$ at the same
  time before terminating.  This gives the upper bound
	\begin{align*}
    \ppi^{}_{(n,n)}\left( (1, 1) \right)
	\leq \hspace{-0.2cm}
	\sum_{t_1, t_2 \geq 0 } &\frac{{{t_1 + t_2}\choose{t_1}}}{2^{t_1+ t_2}}
	\prob{\WW_1 \sim \mathcal{W}^{}\left( 1 \right)}{\ww_1^{(t_1)} = n, \maP{t_1} = n, 
		\miP{t_1} \geq 1 }\\
	&\hspace{0.9cm}\cdot\prob{\WW_2 \sim \mathcal{W}^{}\left( 1 \right)}{\ww_2^{(t_2)} = n,
		\maP{t_2 } = n,
		\miP{t_2} \geq 1 }.
	\end{align*}
	
\noindent
Nesting the summations gives
\begin{align*}
  \ppi^{}_{(n,n)}\left( (1, 1) \right) &\leq \sum_{t_1 \geq 0} \prob{\WW_1 \sim \mathcal{W}^{}\left( 1 \right)}{\ww_1^{(t_1)} = n, \maP{t_1} = n, 
		\miP{t_1} \geq 1 } \\
    &\hspace{1.10cm}\cdot \sum_{t_2 \geq 0}\frac{{{t_1 + t_2}\choose{t_1}}}{2^{t_1+ t_2}}\prob{\WW_2 \sim \mathcal{W}^{}\left( 1 \right)}{\ww_2^{(t_2)} = n,
		\maP{t_2 } = n,
		\miP{t_2} \geq 1 }.
\end{align*}	
Using Fact~\ref{fact:binSumInf}, we can upper bound the inner sum by
\begin{multline*}
  \sum_{t_2 \geq 0}\frac{{{t_1 + t_2}\choose{t_1}}}{2^{t_1+ t_2}}\prob{}{\ww_2^{(t_2)} = n,
  \maP{t_2 } = n,
  \miP{t_2} \geq 1 }\\
  \leq 2\max_{t_2}\left\{\prob{}{\ww_2^{(t_2)} = n,
		\maP{t_2 } = n,
    \miP{t_2} \geq 1 }\right\}.
\end{multline*}	
Factoring out this term from the initial expression completes the proof.
\end{proof}

The upper bound on the maximum term in the right hand side of
Lemma~\ref{lem:upperBoundMaxSum} follows immediately from
Lemma~\ref{lem:upperProbLeftRight}. Now we upper bound the summation
in the right hand side of Lemma~\ref{lem:upperBoundMaxSum}
using a simple Lemma~\ref{lem:upperProbLeftRight}.

\begin{lemma}\label{lem:upperBoundSum}
If $n \ge 20$ and
$\WW \sim \mathcal{W}^{\line}(1)$, we have
\begin{align*}
  \sum_{t \geq 0} \prob{}{
		\ww^{(t)} = n, \maP{t} = n, \miP{t} \geq 1  }
  \leq \frac{e^{26}}{n}.
\end{align*}
\end{lemma}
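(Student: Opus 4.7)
The plan is to apply the two-sided bound from \Cref{lem:upperProbLeftRight} term-by-term, splitting the sum at a carefully chosen threshold $T$ so that each regime uses whichever of $\frac{e^{25}}{n^3}$ or $64(n/t)^3$ is smaller. First I would note that the summand is zero unless $t \geq n-1$ with $t \equiv n-1 \pmod 2$, since the walk must traverse a net distance of $n-1$ from $1$ to $n$; this is useful but not essential, because we will simply bound the whole sum starting from $t = 0$.

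Concretely, I would pick $T = n^2$ and write
\[
  \sum_{t \geq 0} \prob{}{\ww^{(t)} = n,\, \maP{t} = n,\, \miP{t} \geq 1}
  \;\leq\; \sum_{t = 0}^{T} \frac{e^{25}}{n^3} \;+\; \sum_{t > T} 64\left(\frac{n}{t}\right)^3,
\]
using the first bound from \Cref{lem:upperProbLeftRight} on the short-time window and the second bound on the long-time tail. The first sum contains at most $T+1 \leq 2n^2$ terms (and in fact only the ones of the correct parity contribute), giving at most $\frac{2 e^{25}}{n}$; a tighter bookkeeping using only the admissible $t \geq n-1$ of the correct parity shaves a factor of $4$, yielding at most $\frac{e^{25}}{n}$.

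For the tail, I would compare the sum to an integral: since $t \mapsto t^{-3}$ is decreasing on $t > T$,
\[
  \sum_{t > T} \frac{64 n^3}{t^3}
  \;\leq\; 64 n^3 \int_{T}^{\infty} \frac{dt}{t^3}
  \;=\; \frac{32 n^3}{T^2}
  \;=\; \frac{32}{n},
\]
after substituting $T = n^2$. Combining the two pieces gives a total bound of $\frac{e^{25} + 32}{n}$, and since $e^{26} = e \cdot e^{25} > e^{25} + 32$ comfortably, we obtain $\frac{e^{26}}{n}$ as claimed.

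There is no real obstacle here: \Cref{lem:upperProbLeftRight} already provides both of the needed per-term inequalities, and the only decision is choosing the threshold. The choice $T = n^2$ is natural because it is precisely where the two bounds from \Cref{lem:upperProbLeftRight} become comparable (up to constants), which is what makes the combined estimate $O(1/n)$ rather than something worse. The slack in the final inequality $e^{25}+32 \leq e^{26}$ means one does not need to optimize $T$ carefully or worry about the constant $64$ beyond a crude integral comparison.
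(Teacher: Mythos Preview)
Your proof is correct and essentially identical to the paper's own argument: both split the sum at $T=n^2$, apply the $e^{25}/n^3$ bound from \Cref{lem:upperProbLeftRight} on $[0,T]$ to get $O(e^{25}/n)$, and bound the tail $\sum_{t>T}64(n/t)^3$ by the integral $\int_{n^2}^\infty 64(n/t)^3\,dt=32/n$, then absorb the constants into $e^{26}/n$. The only cosmetic difference is that the paper counts $n^2+1$ terms directly rather than your $T+1\le 2n^2$ followed by a parity refinement.
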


\begin{proof}
We first split the sum into 
\begin{align*}
  &\sum_{t \geq 0} \prob{}{
		\ww^{(t)} = n, \maP{t} = n,  \miP{t} \geq 1  }\\
  &\hspace{0.20cm}= \sum_{n^2 \geq t \geq 0} \prob{}{
		\ww^{(t)} = n, \maP{t} = n,  \miP{t} \geq 1  }
  + \sum_{t > n^2} \prob{}{
	\ww^{(t)} = n, \maP{t} = n, \miP{t} \geq 1  }.
\end{align*}
We will bound both terms by $O(1/n)$.
The upper bound for the first term follows immediately from
Lemma~\ref{lem:upperProbLeftRight} and the fact that we are summing $n^2+1$ terms:
\[
    \sum_{n^2 \geq t \geq 0} \prob{}{
		\ww^{(t)} = n, \maP{t} = n, \miP{t} \geq 1  } \leq \frac{e^{25}}{n}.
\]
To upper bound the second summation, we again use Lemma~\ref{lem:upperProbLeftRight}.
When $t > n^2$, we have
\[
  \prob{}{
		\ww^{(t)} = n, \maP{t} = n, \miP{t} \geq 1  }  \leq 64\left(\frac{n}{t}\right)^3.
\]
Since $64(n/t)^3$ is a decreasing function in $t$, 
\[
  64\left(\frac{n}{t}\right)^3 \leq \int_{t-1}^t 64\left(\frac{n}{t}\right)^3 \mathrm{d}t.
\]
Therefore, we can bound the infinite sum by the integral
\begin{align*}
  \sum_{t > n^2} \prob{}{
    \ww^{(t)} = n, \maP{t} = n, \miP{t} \geq 1  }
  \leq \displaystyle\int_{n^2}^{\infty} 64\left(\frac{n}{t}\right)^3 \mathrm{d}t = \frac{32}{n},
\end{align*}
which concludes the proof since we bounded both halves of the sum by $O(1/n)$.
\end{proof}

\subsection{Proof of Theorem~\ref{thm:mainLowerGrid}}

We can now easily combine the lemmas in this section with the bounds that relate 
vertex potentials to the lower bound for the transience class of $\square{n}$.


\begin{proof}
Applying \Cref{lem:upperBoundMaxSum} and then Lemma~\ref{lem:upperProbLeftRight} and
  Lemma~\ref{lem:upperBoundSum},
it follows that
\begin{align*}
  \pi^{}_{(n,n)}((1,1)) &\leq
  \max_t{\left\{\prob{\WW \sim \mathcal{W}^{}(1)}{
    \ww^{(t)} = n, \maP{t} = n, \miP{t} \geq 1  } \right\}}\\
  & \hspace{0.43cm} \cdot 2\sum_{t \geq 0} \prob{\WW \sim \mathcal{W}^{}(1)}{
\ww^{(t)} = n, \maP{t} = n, \miP{t} \geq 1  }\\
  & \le 2 \cdot \frac{e^{25}}{n^3}\cdot\frac{e^{26}}{n}\\
  &\le \frac{e^{100}}{n^4}.
\end{align*}
\noindent
Therefore, $\pi^{}_{(n,n)}((1,1))^{-1} =\Omega(n^4)$.
By Theorem~\ref{thm:reduction} it follows that $\tcl(\square{n}) = \Omega(n^4)$.
\end{proof}

\section{Simple Symmetric Random Walks}
\label{sec:path}

Our proofs for upper and lower bounding the sandpile transience class on the
grid heavily relied on decoupling two-dimensional walks into two independent
one-dimensional walks since they are easier to analyze.
This claim is immediately apparent when working with vertex potentials
for one-dimensional walks on the path, which we used in the
proof of \Cref{lem:sumBound}.

However, we assumed two essential lemmas about one-dimensional walks to
prove the lower and upper bound of the minimum vertex potential.
Consequently, in this section we examine the probability 
\begin{align}
\prob{\WW \sim \mathcal{W}^{\line}(i)}{
  \ww^{\left(t\right)} \hspace{-0.05cm}= \hspace{-0.03cm}n,
  \maP{t} \hspace{-0.05cm}= \hspace{-0.03cm}n, 
   \miP{t} \hspace{-0.05cm}\geq 1 }\hspace{-0.10cm},\label{eqn:1dwalk}
\end{align}
and we prove these necessary lower and upper bounds in
Lemma~\ref{lem:probLeftRight} and Lemma~\ref{lem:upperProbLeftRight} by
extending previously known properties of simple symmetric random walks on~$\mathbb{Z}$.
The key ideas in these proofs are that:
the position of a walk in one dimension follows the binomial distribution;
the number of walks reaching a maximum position in a fixed number of steps
has an explicit formula;
and there are tight bounds for binomial coefficients via Stirling's approximation.

The properties we need do not immediately follow from previously known facts
because we require conditions on both the minimum and maximum positions.
Section~\ref{sec:maxmin} 
gives proofs of the known explicit expressions for the maximum and minimum position of
a walk, along with several other useful facts that follow from this proof.
In Section~\ref{subsec:lowerBinCoeff} we apply Stirling's bound to give accurate lower
bounds on a range of binomial coefficients. Sections~\ref{subsec:lowerMinPosition}
and~\ref{subsec:lowerFinalAndMax} prove several necessary preliminary lower bound
lemmas. We prove Lemma~\ref{lem:probLeftRight} at the end of
Section~\ref{subsec:lowerFinalAndMax}. In Section~\ref{subsec:upperFinalMaxMin}
we give necessary upper bound lemmas and a proof of
Lemma~\ref{lem:upperProbLeftRight}.

\subsection{Lower Bounding \Cref{eqn:1dwalk}}

To lower bound \Cref{eqn:1dwalk}, we split the desired probability
into the product of two probabilities using the definition of conditional
probability. Then we prove lower bounds for each. 
\begin{itemize}
\item In Lemma~\ref{lem:probLeft} we show for $t \in \Theta(n^2)$,
  the probability that a walk on $\Z$ starting at ${1 \leq i \leq \lceil n/2 \rceil}$
is
$
\prob{\WW \sim \mathcal{W}^{}(i)}{\miP{t} \geq 1} = \Omega \left( i/n \right).
$
\item In Lemma~\ref{lem:leftWorse} and Lemma~\ref{lem:probRight} we bound the probability
that a walk starting at $1 \leq i \leq \lceil n /2 \rceil$
of length $t \in \Theta(n^2)$ reaches $n$ at step $t$ without going above $n$,
conditioned on never dropping below $1$:
\begin{align*}
\prob{}{
\ww^{\left(t\right)} = n,
    \maP{t} =n
  \left|\:\miP{t} \geq 1 \right.} = \Omega \left( \frac{1}{n^2} \right).
\end{align*}
\end{itemize}

Lemma~\ref{lem:probLeftRight} immediately follows multiplying these
two bounds together.
This division allows us to separate proving a minimum and maximum,
and in turn simplifies applying known bounds on binomial distributions.
Specifically, Lemma~\ref{lem:probLeft} is an immediate consequence
of explicit expressions for the minimum point of a walk and bounds
on binomial coefficients, both of which will be given rigorous
treatment in Section~\ref{sec:maxmin}.

These proofs will also output a known explicit expression for
the probability of the walk reaching~$n$ at step $t$,
while only staying to its left.
All that remains then is to condition the walk
to not go to the left of~$1$.
Note that $1$ is in the opposite direction of $n$, with respect to the starting position~$i$.
We formally show that the probability of reaching $n$ without going above $n$ only
improves if the walk cannot move too far
in the wrong direction, but only for $t \leq (n-i+1)^2$, thus giving the reason we need to upper bound $t$ by $n^2/4$.

\subsection{Upper Bounding \Cref{eqn:1dwalk}}

The desired lemma only concerns walks
starting at $i = 1$, which will be critical for our proof. The key idea will
then be to split the walk in half and consider the probability that the
necessary conditions are satisfied in the first $t/2$ steps and in the second
$t/2$ steps. The midpoint of the walk at $t/2$ steps can be any point in the
interval $[1,n]$, so we must sum over all these possible midpoints. Removing
the upper and lower bound conditions, respectively, will then give the upper
bound in Lemma~\ref{lem:divideWalkinHalf}:

\begin{align*}
  &\prob{\WW \sim \mathcal{W}^{}(1)}{
    \ww^{(t)} = n, \maP{t} = n, \miP{t} \geq 1  }\\
  &\hspace{2.0cm}\leq \sum_{i=1}^n \prob{\WW \sim \mathcal{W}^{}(1)}{
  \ww^{(\lfloor \frac{t}{2} \rfloor)} = i, \miP{\lfloor \frac{t}{2} \rfloor} \geq 1  }
  \cdot\prob{\WW \sim \mathcal{W}^{}(i)}{
  \ww^{(\lceil \frac{t}{2} \rceil)} = n, \maP{\lceil \frac{t}{2} \rceil} = n }.
\end{align*}

Due to the first $t/2$ step walk starting at $1$ and the second $t/2$ step walk
ending at $n$, the conditions $\miP{t} \geq 1$ for the first walk and
$\maP{\lceil \frac{t}{2} \rceil} = n$ for the second walk will be the 
difficult property for each walk to satisfy, respectively. Next we apply 
facts proved in Section~\ref{sec:maxmin} to obtain expressions for each term
within the summation. The remainder of the upper bound analysis will then
focus on bounding those expressions.


\subsection{Maximum Position of a Walk}\label{sec:maxmin}

As previously mentioned, our proofs mostly leverage well-known facts
about the maximum/minimum position of a random walk along with corresponding
bounds for these probabilities.
This section will first give the result regarding
maximum/minimum position of walks and a connection to Stirling's approximation.


Observe that if we are only concerned with a single end point,
we can fix the starting location at $0$ by shifting accordingly.
In these cases, the following bounds are well known in combinatorics.

\begin{fact}[{\cite{Rota79}}]
\label{fact:numWalks}
For any $t, n \in \Z_{\ge 0}$, we have
\begin{align*}
  \prob{\WW \sim \mathcal{W}^{}(0)}{\maP{t} = n}
  =
  \begin{cases}
    \prob{}
      {\ww^{(t)} = n} = {t \choose \frac{t + n}{2}}\frac{1}{2^t}
      & \text{if $t + n \equiv 0 \pmod*{2}$},\\
    \prob{}
      {\ww^{(t)} = n + 1} = {t \choose \frac{t + n + 1}{2}}\frac{1}{2^t}
      & \text{if $t + n \equiv 1 \pmod*{2}$}.
  \end{cases}
\end{align*}
\end{fact}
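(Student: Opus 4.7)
The plan is to derive the identity via the classical reflection principle, then invoke a parity observation to split into the two cases in the statement. First, for any $n \ge 1$, I would set up a bijection between walks $w$ of length $t$ satisfying $\maP{t} \ge n$ and $\ww^{(t)} < n$, and walks ending strictly above $n$: reflecting the portion of $w$ after its first hitting time $\tau$ of level $n$ across the horizontal line at height $n$ produces a walk of the same length starting at $0$ and ending at $2n - \ww^{(t)} > n$, and the map is an involution on the appropriate set. Combined with the fact that walks with $\ww^{(t)} \ge n$ automatically satisfy $\maP{t} \ge n$, this yields
\[
  \prob{\WW \sim \mathcal{W}^{\line}(0)}{\maP{t} \ge n}
  = \prob{}{\ww^{(t)} \ge n} + \prob{}{\ww^{(t)} > n}.
\]
For $n = 0$ this identity holds trivially since both sides equal $1$ by symmetry.

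Second, I would write $\{\maP{t} = n\} = \{\maP{t} \ge n\} \setminus \{\maP{t} \ge n + 1\}$ and subtract the two applications of the previous identity, obtaining
\[
  \prob{}{\maP{t} = n}
  = \prob{}{\ww^{(t)} = n} + \prob{}{\ww^{(t)} = n + 1}.
\]

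Third, I would invoke a parity argument: since each step changes the walker's position by $\pm 1$, the random variable $\ww^{(t)}$ has the same parity as $t$, so exactly one of the two probabilities on the right is nonzero depending on whether $t + n$ is even or odd. Finally, the endpoint distribution of a simple symmetric walk from $0$ is binomial: the walk reaches height $k$ at step $t$ iff it takes $(t + k)/2$ right-steps and $(t - k)/2$ left-steps, giving $\prob{}{\ww^{(t)} = k} = \binom{t}{(t + k)/2} 2^{-t}$ whenever $t + k$ is even. Substituting the parity-selected term into the expression for $\prob{}{\maP{t} = n}$ yields the stated formulas.

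The main obstacle I expect is formalizing the reflection bijection carefully, specifically defining the first hitting time $\tau$ of level $n$ and verifying that the reflection map is a well-defined involution between walks ending below $n$ (but reaching $n$) and walks ending strictly above $n$, with no double counting at the boundary endpoint $\ww^{(t)} = n$. The remaining steps are algebraic manipulations and invocations of the binomial endpoint distribution, which are routine.
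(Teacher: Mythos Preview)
Your proposal is correct and follows essentially the same reflection-principle argument as the paper: both reflect after the first hitting time of level $n$, arrive at the identity $\prob{}{\maP{t}=n}=\prob{}{\ww^{(t)}=n}+\prob{}{\ww^{(t)}=n+1}$, and then invoke parity. The only procedural difference is that the paper first establishes the pointwise identity $\prob{}{\ww^{(t)}=k,\ \maP{t}\ge n}=\prob{}{\ww^{(t)}=2n-k}$ and then sums over $k\le n$, whereas you work directly with the tail event $\{\maP{t}\ge n\}$; the paper's ordering has the side benefit of yielding Facts~\ref{fact:goesAbove} and~\ref{fact:endn} as intermediate byproducts, which are used later in Section~\ref{sec:path}.
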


\begin{proof}
For any $k \leq n$, consider a walk
$\WW \in \mathcal{W}^{}\left(0\right)$ that satisfies
$w^{(t)} = k$ and $\maP{t} \geq n$.
Let $t^*$ be the first time that $w^{(t^*)} = n$, and
construct the walk $m$ ending at $2n - k$ such that
\[
  m^{\left(\hat{t}\right)} =
  \begin{cases}
    w^{\left(\hat{t}\right)} & \text{if $0 \le \hat{t} \le t^*$},\\
    2n - w^{\left(\hat{t}\right)} & \text{if $t^* < \hat{t} \le t$}.
  \end{cases}
\]
This reflection map is a bijection, so for $k \le n$ we have
\begin{align*}
  \prob{\WW \sim \mathcal{W}^{}(0)}
  {\ww^{\left(t\right)} = k, \maP{t} \geq n}
  =
  \prob{\WW \sim \mathcal{W}^{}(0)}
  {\ww^{\left(t\right)} = 2n - k}.
\end{align*}
Subtracting the probability of the maximum position being at least $n+1$ gives
\begin{multline*}
  \prob{\WW \sim \mathcal{W}^{\line}(0)}
  {\ww^{\left(t\right)} = k \text{ and } \maP{t} = n}\\
  =
  \prob{\WW \sim \mathcal{W}^{\line}(0)}
  {\ww^{\left(t\right)} = 2n - k}
  -
  \prob{\WW \sim \mathcal{W}^{\line}(0)}
  {\ww^{\left(t\right)} = 2(n+1) - k}.
\end{multline*}
Summing over all $k \le n$, we have
\[
 \prob{\WW \sim \mathcal{W}^{\line}(0)}{\maP{t} = n}
 = 
  \prob{\WW \sim \mathcal{W}^{\line}(0)}
  {\ww^{(t)} = n}
  + \prob{\WW \sim \mathcal{W}^{\line}(0)}
  {\ww^{(t)} = n + 1}.
\]
Considering the parity of $t$ and $n$ completes the proof.
%
%
%
\end{proof}


The proof above contains two intermediate expressions for probabilities similar
to the ones we want to bound.

\begin{fact}\label{fact:goesAbove}
For any integers $n \geq 0$ and $k \leq n$, we have
\begin{align*}
  \prob{\WW \sim \mathcal{W}^{}(0)}
  {\ww^{\left(t\right)} = k, \maP{t} \geq n}
  =
  \prob{\WW \sim \mathcal{W}^{}(0)}
  {\ww^{\left(t\right)} = 2n - k}.
\end{align*}
\end{fact}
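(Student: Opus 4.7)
The plan is to invoke the classical reflection principle, which was already essentially carried out inside the proof of Fact~\ref{fact:numWalks}. The statement is simply the first displayed identity from that earlier argument, so the main task is to describe the bijection clearly and verify it maps the correct sets onto each other.

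I would first fix a walk $\WW \in \mathcal{W}^{\line}(0)$ with $\ww^{(t)} = k$ and $\maP{t} \ge n$, and let $t^* = \min\{\widehat{t} : \ww^{(\widehat{t})} = n\}$, which is well-defined since $k \le n$ and the maximum exceeds $n$, so the walk must cross $n$. I would then define the reflected walk $\mm$ by $\mm^{(\widehat{t})} = \ww^{(\widehat{t})}$ for $\widehat{t} \le t^*$ and $\mm^{(\widehat{t})} = 2n - \ww^{(\widehat{t})}$ for $\widehat{t} > t^*$. Since reflection about $n$ preserves unit steps and $\mm^{(t^*)} = n$ regardless of which branch we use, $\mm$ is a valid walk in $\mathcal{W}^{\line}(0)$ of the same length, with $\mm^{(t)} = 2n - k$.

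For the reverse direction, I would take any walk $\mm \in \mathcal{W}^{\line}(0)$ with $\mm^{(t)} = 2n - k$. Because $k \le n$ implies $2n - k \ge n$, the walk $\mm$ must hit $n$ at some step; letting $t^*$ again denote the first such step and applying the same reflection produces a walk ending at $k$ with maximum at least $n$. This shows the map is an involution, hence a bijection between the two sets of walks.

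Finally, since every length-$t$ walk from $0$ in $\mathcal{W}^{\line}(0)$ has probability $2^{-t}$, equality of cardinalities of the two sets yields equality of probabilities, giving the claim. The only subtle point to check is that the hypothesis $k \le n$ is used in both directions (to guarantee $t^*$ exists), but this is immediate, so I do not anticipate any real obstacle — the argument is a direct transcription of the reflection step already appearing in the proof of Fact~\ref{fact:numWalks}.
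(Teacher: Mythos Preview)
Your proposal is correct and is essentially identical to the paper's treatment: Fact~\ref{fact:goesAbove} is not given a separate proof in the paper but is extracted verbatim as the first displayed identity inside the proof of Fact~\ref{fact:numWalks}, which is exactly the reflection bijection you describe. Your write-up in fact spells out the inverse map and the use of $k\le n$ more explicitly than the paper does.
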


\begin{fact}\label{fact:endn}
Let $t, n \in \Z_{\ge 0}$. For any integer $k \le n$, 
\begin{align*}
  \prob{\WW \sim \mathcal{W}^{}(0)}
  {\ww^{\left(t\right)} = k, \maP{t} = n}
  =
    \begin{cases}
      \binom{t}{\frac{t + 2n - k}{2}}\frac{1}{2^t}\left(\frac{4n - 2k + 2}{t + 2n - k + 2}\right)
    & \text{if $t + k \equiv 0 \pmod*{2}$},\\
    0
    & \text{if $t + k \equiv 1 \pmod*{2}$}.
    \end{cases}
\end{align*}
\end{fact}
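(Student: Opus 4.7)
The plan is to derive the identity straight from Fact~\ref{fact:goesAbove} and the standard binomial formula for endpoint probabilities, then simplify the resulting difference of two binomial coefficients.

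First, I would dispense with the parity case. A simple symmetric walk starting at $0$ satisfies $\ww^{(t)} \equiv t \pmod 2$ for every $t$, so if $t + k$ is odd then $\prob{}{\ww^{(t)} = k} = 0$, which forces the joint probability to vanish.

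Now suppose $t + k \equiv 0 \pmod 2$. The key step is to write the event $\{\maP{t} = n\}$ as the difference $\{\maP{t} \ge n\} \setminus \{\maP{t} \ge n+1\}$, giving
\[
  \prob{}{\ww^{(t)} = k, \maP{t} = n}
  = \prob{}{\ww^{(t)} = k, \maP{t} \ge n}
  - \prob{}{\ww^{(t)} = k, \maP{t} \ge n+1}.
\]
Applying Fact~\ref{fact:goesAbove} (the reflection identity) to each term on the right (noting $k \le n < n+1$) converts this into a difference of plain endpoint probabilities:
\[
  \prob{}{\ww^{(t)} = k, \maP{t} = n}
  = \prob{}{\ww^{(t)} = 2n - k}
  - \prob{}{\ww^{(t)} = 2n + 2 - k}.
\]
Since $t + k$ is even, so are $t + 2n - k$ and $t + 2n + 2 - k$, and the standard formula gives each probability as $\binom{t}{m}/2^t$ and $\binom{t}{m+1}/2^t$ respectively, where $m = (t + 2n - k)/2$.

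All that remains is the algebraic simplification of $\binom{t}{m} - \binom{t}{m+1}$. Using $\binom{t}{m+1} = \binom{t}{m}\cdot\frac{t-m}{m+1}$, we get
\[
  \binom{t}{m} - \binom{t}{m+1}
  = \binom{t}{m} \cdot \frac{2m + 1 - t}{m+1}
  = \binom{t}{m} \cdot \frac{2(2n - k + 1)}{t + 2n - k + 2},
\]
where in the last step we substituted $2m + 1 - t = 2n - k + 1$ and $m + 1 = (t + 2n - k + 2)/2$. Dividing by $2^t$ yields exactly the claimed expression. There is no real obstacle here; the only thing to be careful about is bookkeeping the parity condition and confirming that the reflection identity applies (which requires $k \le n$, given in the statement).
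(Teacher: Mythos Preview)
Your proof is correct and follows essentially the same route as the paper: both obtain the difference $\prob{}{\ww^{(t)}=2n-k}-\prob{}{\ww^{(t)}=2n+2-k}$ via the reflection identity (you cite Fact~\ref{fact:goesAbove}, the paper cites Fact~\ref{fact:numWalks}, whose proof contains the same intermediate step) and then simplify the resulting difference of consecutive binomial coefficients in the same way.
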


\begin{proof}
Using Fact~\ref{fact:numWalks} and analyzing the parity of the walks gives
\begin{align*}
  {t \choose \frac{t + 2n - k}{2}} \frac{1}{2^t} - {t \choose \frac{t + 2n - k+2}{2}} \frac{1}{2^t}
  &= {t \choose \frac{t + 2n - k}{2}} \frac{1}{2^t} - \frac{t - 2n +k}{t + 2n - k+2}{t \choose \frac{t + 2n - k}{2}} \frac{1}{2^t}\\
  &= {t \choose \frac{t + 2n - k}{2}} \frac{1}{2^t} \left(\frac{4n - 2k + 2}{t + 2n - k + 2} \right),
\end{align*}
as desired.
\end{proof}

\subsection{Lower Bounding Binomial Coefficients}\label{subsec:lowerBinCoeff}

Ultimately, our goal is to give strong lower bounds on closely
related probabilities to the ones above.
To do so, we need to use various bounds on binomial coefficients
that are consequences of Stirling's approximation.

\begin{fact}[Stirling's Approximation]
\label{fact:stirling}
For any positive integer $n$, we have
\[
  \sqrt{2\pi}
  \leq
  \frac{n!}{\sqrt{n}\left(\frac{n}{e}\right)^n}
  \leq
  e.
\]
\end{fact}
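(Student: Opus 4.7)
The plan is to prove Stirling's approximation by analyzing the sequence $a_n = n!/(\sqrt{n}(n/e)^n)$; I will show that $(a_n)$ is strictly decreasing with $a_1 = e$ and converges to $\sqrt{2\pi}$, which yields both bounds at once.

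First I would compute the ratio
\[
  \frac{a_{n+1}}{a_n} = e\left(\frac{n}{n+1}\right)^{n+1/2}
\]
and take logarithms to obtain $\ln(a_{n+1}/a_n) = 1 - (n+\tfrac12)\ln(1+\tfrac1n)$. Using the Taylor expansion
\[
  \ln\!\left(1+\tfrac1n\right) = \tfrac1n - \tfrac{1}{2n^2} + \tfrac{1}{3n^3} - \cdots,
\]
the leading constant $1$ cancels exactly, and after regrouping the remaining alternating series I would show the result is strictly negative for every $n \ge 1$. Hence $(a_n)$ is strictly decreasing, and since $a_1 = 1/(1\cdot 1/e) = e$, the upper bound $a_n \le e$ follows by induction.

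Second, to pin down the limit, I would first show that $(a_n)$ is bounded below so that $L := \lim_{n\to\infty} a_n$ exists and is finite and positive; this can be done by comparing the partial sums $\sum_{k=1}^n \ln k$ to $\int_1^n \ln x \, dx$ via the trapezoidal rule. To identify $L = \sqrt{2\pi}$, I would invoke Wallis' product in the form
\[
  \lim_{n\to\infty}\frac{1}{\sqrt{n}} \cdot \frac{(2n)!!}{(2n-1)!!} = \sqrt{\pi}.
\]
Rewriting $(2n)!! = 2^n n!$ and $(2n-1)!! = (2n)!/(2^n n!)$ turns this into
\[
  \frac{4^n (n!)^2}{(2n)!\sqrt{n}} \longrightarrow \sqrt{\pi}.
\]
Substituting $n! = a_n \sqrt{n}(n/e)^n$ and $(2n)! = a_{2n}\sqrt{2n}(2n/e)^{2n}$ into the left-hand side and simplifying, the factors of $4^n$ and the exponentials cancel, leaving $a_n^2 / (a_{2n}\sqrt{2}) \to \sqrt{\pi}$. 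Taking limits gives $L^2/L = L = \sqrt{2\pi}$, and since $(a_n)$ decreases monotonically to this value, every term satisfies $a_n \ge \sqrt{2\pi}$.

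The main obstacle is the monotonicity step: naive one-term bounds such as $\ln(1+x) \le x$ are too weak, because the $1/n$ contribution from the Taylor series must cancel exactly against the outer constant $1$, so the proof hinges on carrying the expansion through the $1/n^2$ term and carefully tracking signs in the resulting series. Once monotonicity and Wallis' product are in hand, identifying the exact constant $\sqrt{2\pi}$ is routine algebra.
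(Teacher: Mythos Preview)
Your argument is correct in outline and in all essential computations: the ratio $a_{n+1}/a_n = e\,(n/(n+1))^{n+1/2}$ is right, the monotonicity via the Taylor expansion works (the first nonvanishing term after cancellation is $-1/(12n^2)$, and the tail can be controlled by pairing successive terms), $a_1 = e$ gives the upper bound, and the Wallis-product calculation to identify the limit as $\sqrt{2\pi}$ is standard and correctly set up. One cosmetic slip: after you arrive at $a_n^2/(a_{2n}\sqrt{2}) \to \sqrt{\pi}$, the limit equation should read $L/\sqrt{2} = \sqrt{\pi}$, not ``$L^2/L = L$''; the conclusion $L = \sqrt{2\pi}$ is nonetheless correct.

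As for comparison with the paper: there is nothing to compare. The paper records Stirling's approximation as a \emph{Fact} and cites it as classical, offering no proof whatsoever; it is used only as a black box to bound binomial coefficients (e.g., in Fact~\ref{fact:binBound}). Your proposal therefore supplies substantially more than the paper does, giving a complete self-contained derivation where the paper simply appeals to the literature.
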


An immediate consequence of this is a concentration bound
on binomial coefficients.

\begin{fact}
\label{fact:binBound}
Let $c, n \in \R_{> 0}$ such that $c \sqrt{n} < n$.
For any $k \in [(n - c\sqrt{n})/2, (n + c\sqrt{n})/2]$, we have
\[
  \binom{n}{k}
  \geq
  e^{-1-c^2} \cdot \frac{2^n}{\sqrt{n}}.
\]	
\end{fact}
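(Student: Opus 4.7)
The overall plan is to reduce the inequality to a real-analytic estimate on the ``binary entropy'' function, to which Stirling will then supply the correct prefactors. The setup: apply Fact~\ref{fact:stirling} to each factorial in $\binom{n}{k} = n!/(k!(n-k)!)$. The hypothesis $c\sqrt{n} < n$ guarantees $1 \le k \le n-1$, so Stirling is applicable; combining the lower bound on $n!$ with the upper bounds on $k!$ and $(n-k)!$ gives
\[
  \binom{n}{k} \;\ge\; \frac{\sqrt{2\pi n}}{e^{2} \sqrt{k(n-k)}} \cdot \frac{n^{n}}{k^{k}(n-k)^{n-k}}.
\]
I would then change variables, setting $a = (2k-n)/n$, so that $k = n(1+a)/2$, $n-k = n(1-a)/2$, and $|a| \le c/\sqrt{n}$ by the hypothesis on $k$. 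Because $k(n-k) = n^{2}(1-a^{2})/4 \le n^{2}/4$, this already yields the factor $2\sqrt{2\pi}/\sqrt{n}$, and the ratio $n^{n}/(k^{k}(n-k)^{n-k})$ becomes $2^{n}/g(a)^{n}$, where $g(a) = (1+a)^{(1+a)/2}(1-a)^{(1-a)/2}$.

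The main analytical step is to prove $\log g(a) \le a^{2}$ for $|a| < 1$. I would use only the elementary estimates $\log(1+a) \le a$ for $a \ge 0$ and $\log(1-a) \le -a - a^{2}/2$ for $a \in [0,1)$ (the latter coming from discarding the non-positive tail of the Taylor series). Multiplying these by $(1+a)/2$ and $(1-a)/2$ respectively and adding gives $\log g(a) \le (3a^{2} + a^{3})/4$, which is at most $a^{2}$ for $a \in [0,1]$. The even symmetry $g(a) = g(-a)$ then extends the bound to $a \in (-1,0]$. This computation is routine, but it is the only step requiring care; the subtlety is picking log-inequalities tight enough that the constants clean up to match the $e^{-1-c^{2}}$ form of the conclusion (e.g., using $\log(1-a)\le -a$ alone would be too loose).

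Finally, since $|a| \le c/\sqrt{n}$ implies $na^{2} \le c^{2}$, the bound above gives $g(a)^{n} \le e^{c^{2}}$. Combining everything,
\[
  \binom{n}{k} \;\ge\; \frac{2\sqrt{2\pi}}{e^{2}} \cdot \frac{2^{n}}{\sqrt{n}} \cdot e^{-c^{2}},
\]
and the claim follows from the numerical fact $2\sqrt{2\pi} \ge e$ (equivalently $8\pi \ge e^{2}$), which absorbs the leading constant into a clean $e^{-1}$. I do not expect any genuine obstacle beyond the routine calculus in the middle step.
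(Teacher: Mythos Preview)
Your proof is correct and follows essentially the same route as the paper: apply Stirling to all three factorials, reduce to bounding the term $(1+a)^{(1+a)/2}(1-a)^{(1-a)/2}$ raised to the $n$th power (the paper writes this as the product $(1-c^2/n)^{n/2}(1-c/\sqrt{n})^{-c\sqrt{n}/2}(1+c/\sqrt{n})^{c\sqrt{n}/2}$, which is algebraically identical), and then conclude via $2\sqrt{2\pi}\ge e$. The only difference is that you explicitly justify the key estimate $\log g(a)\le a^2$ with elementary log inequalities, whereas the paper asserts the corresponding bound on the denominator without further comment.
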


\begin{proof}
We directly substitute Stirling's approximation to the definition of binomial
coefficients to get
\begin{align*}
\binom{n}{\frac{n-c\sqrt{n}}{2}}
  &=\frac{n!}{\left(\frac{n-c\sqrt{n}}{2}\right)!
			\left(\frac{n+c\sqrt{n}}{2}\right)!}\\
      &\ge\frac{\sqrt{2\pi n}\left(\frac{n}{e}\right)^n}
	{e\sqrt{\frac{n-c\sqrt{n}}{2}}
	\left(\frac{n-c\sqrt{n}}{2e}\right)^\frac{n-c\sqrt{n}}{2}
  e\sqrt{\frac{n+c\sqrt{n}}{2}}
	\left(\frac{n+c\sqrt{n}}{2e}\right)^\frac{n+c\sqrt{n}}{2}}\\
  &\ge\frac{2\sqrt{2\pi}}{e^2 \sqrt{n}} 
   \cdot \frac{2^n}{\left(1-\frac{c^2}{n}\right)^\frac{n}{2}
\left(1-\frac{c}{\sqrt{n}}\right)^{-\frac{c\sqrt{n}}{2}}
\left(1+\frac{c}{\sqrt{n}}\right)^\frac{c\sqrt{n}}{2}}\\
  &\ge\frac{2\sqrt{2\pi}}{e^{2 + c^2}} \cdot \frac{2^n}{\sqrt{n}}\\
  &\ge e^{-1-c^2} \cdot \frac{2^n}{\sqrt{n}},
	\end{align*}
as desired.
\end{proof}

\subsection{Lower Bounding the Minimum Position}\label{subsec:lowerMinPosition}

We now bound the probability of the minimum position of a walk in
$\mathcal{W}^{}(i)$ being at least $1$ after $t$ steps.

\begin{lemma}
\label{lem:probLeft}
	
For any positive integer $n$, 
initial position $1 \le i \le \lceil n/2 \rceil$,
and constant $c > 4$,
if we have $t \in [n^2/c, n^2/4]$, then
\[
\prob{\WW \sim \mathcal{W}^{}(i)}{\miP{t} \geq 1}
\geq e^{-1 - c} \cdot \frac{i}{n} .
\]
\end{lemma}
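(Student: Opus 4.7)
The plan is to rewrite the event in terms of a centered walk and then apply the reflection principle. Let $S_s = W^{(s)} - i$ so that $S$ is a simple symmetric random walk starting at $0$, and the condition $\miP{t} \ge 1$ becomes $\min_{0\le s\le t} S_s \ge 1 - i$, i.e.\ the walk $S$ never visits $-i$. By the standard reflection argument (reflecting about $-i$ after the first hitting time), walks from $0$ to $k$ of length $t$ that touch $-i$ are in bijection with walks from $0$ to $-2i-k$, so
\begin{align*}
  \Pr[S_t = k,\, S\text{ avoids }-i]
  = \Pr[S_t = k] - \Pr[S_t = -2i-k].
\end{align*}
Summing over $k \ge -i+1$ and re-indexing the subtracted sum via $j = -2i - k$ (which ranges over $j \le -i-1$), together with the symmetry $\Pr[S_t = j] = \Pr[S_t = -j]$, yields the clean identity
\begin{align*}
  \Pr\bigl[\miP{t} \ge 1\bigr]
  = \Pr[\,-i < S_t \le i\,].
\end{align*}

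Next I would lower bound this probability termwise using \Cref{fact:binBound}. Since $S_t$ has the same parity as $t$, exactly $i$ integers in $\{-i+1,\ldots,i\}$ are attainable. For any such $k$, we have $S_t = k$ with probability $\binom{t}{(t+k)/2} \cdot 2^{-t}$. I will apply \Cref{fact:binBound} with parameter $c' = \sqrt{c}$: the hypothesis $|k| \le c'\sqrt{t}$ holds because $|k| \le i \le \lceil n/2 \rceil$ and $\sqrt{t} \ge n/\sqrt{c}$, which together give $|k|/\sqrt{t} \le \sqrt{c}/2 \cdot (1+o(1)) \le \sqrt{c}$. This produces
\begin{align*}
  \Pr[S_t = k] \;\ge\; e^{-1-c} \cdot \frac{1}{\sqrt{t}}
\end{align*}
for every admissible $k$ in the range.

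Summing over the $i$ admissible values and using $\sqrt{t} \le n/2$ (which follows from $t \le n^2/4$) gives
\begin{align*}
  \Pr\bigl[\miP{t} \ge 1\bigr]
  \;\ge\; i \cdot e^{-1-c} \cdot \frac{1}{\sqrt{t}}
  \;\ge\; \frac{2\, i \, e^{-1-c}}{n}
  \;\ge\; e^{-1-c} \cdot \frac{i}{n},
\end{align*}
which is the desired inequality (in fact with a factor of $2$ to spare).

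I expect the main subtlety, rather than a genuine obstacle, to be making sure the hypothesis of \Cref{fact:binBound} is met for every $k \in \{-i+1,\ldots,i\}$ of the correct parity: this is exactly where the assumptions $t \ge n^2/c$ and $i \le \lceil n/2 \rceil$ are used, and where the constant $c$ inside $e^{-1-c^2}$ has to be set to $\sqrt{c}$ in order to recover the bound $e^{-1-c}$ in the target statement. Everything else is straightforward bookkeeping. The reason the upper bound $t \le n^2/4$ is essential is that it guarantees $1/\sqrt{t} \ge 2/n$, which is what converts the $i/\sqrt{t}$ bound from the binomial concentration into the $i/n$ scaling required by the lemma.
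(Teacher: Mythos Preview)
Your proof is correct and follows essentially the same approach as the paper: both reduce $\Pr[\miP{t}\ge 1]$ via the reflection principle to a sum of $i$ binomial point probabilities, apply \Cref{fact:binBound} with parameter $\sqrt{c}$ (using $t\ge n^2/c$ to check the hypothesis), and then use $t\le n^2/4$ to convert $1/\sqrt{t}$ into $1/n$. The only cosmetic difference is that the paper expresses the probability as $\sum_{k=0}^{i-1}\Pr_{\mathcal{W}(0)}[\maP{t}=k]$ and invokes \Cref{fact:numWalks}, whereas you derive the equivalent identity $\Pr[-i<S_t\le i]$ directly.
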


\begin{proof} 
First observe that
\[
\prob{\WW \sim \mathcal{W}^{}(i)}{\miP{t} \geq 1}
  \hspace{-0.05cm}=\hspace{-0.05cm}\sum_{k=1}^{i} \prob{\WW \sim \mathcal{W}^{}(i)}{\miP{t} = k}.
\]
By symmetry, this sum is
\[
\sum_{k=0}^{i-1} \prob{\WW \sim \mathcal{W}^{}(0)}{\maP{t} = k}.
\]
For each $0 \le k \le i-1$, Fact~\ref{fact:numWalks} implies that
\[
\prob{\WW \sim \mathcal{W}^{}(0)}{\maP{t} = k}
  \in \left\{\binom{t}{\frac{t + k}{2}} \frac{1}{2^t},
  \binom{t}{\frac{t + k + 1}{2}} \frac{1}{2^t}\right\}.
\]
By assumption
$k \le k + 1 \le i \le n \le \sqrt{ct}$,
so applying Fact~\ref{fact:binBound} gives
\begin{align*}
  \min\left\{
    \binom{t}{\frac{t+k}{2}} \frac{1}{2^t},~
    \binom{t}{\frac{t+k+1}{2}} \frac{1}{2^t}
  \right\}
  &\ge
  \binom{t}{\frac{t + \sqrt{ct}}{2}} \frac{1}{2^t}\\
  &\ge e^{-1 - c} \frac{1}{\sqrt{t}}\\
  &\ge e^{-1 - c} \frac{1}{n},
\end{align*}
because $t \le n^2/4$.
Summing over $0 \le k \le i-1$ gives the desired bound.
\end{proof}

\subsection{Lower Bounding the Final and Maximum Position}\label{subsec:lowerFinalAndMax}

Similarly, we can use binomial coefficient approximations to bound the
probability of a $t$-step walk terminating at $n$ while never moving to a
position greater than $n$.

\begin{lemma}
\label{lem:probRight}
For any initial position $1 \le i \le \lceil n/2 \rceil$ and
any $\max\{n, n^2/c \}
\le t \le n^2/4$ with $t \equiv n-i \pmod{2}$, 
we have
\[
\prob{\WW \sim \mathcal{W}^{}(i)}
{\maP{t} = n, \ww^{\left( t \right)} = n}
\geq e^{-1-c} \cdot \frac{1}{n^2}.
\]
\end{lemma}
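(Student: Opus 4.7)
The plan is to reduce the claim to a direct application of the explicit formula from Fact~\ref{fact:endn}. By translation invariance of simple symmetric random walks on $\line$, shifting by $-i$ moves the starting vertex to $0$ and converts the event $\{\ww^{(t)} = n,\ \maP{t} = n\}$ into $\{\ww^{(t)} = n - i,\ \maP{t} = n - i\}$. Applying Fact~\ref{fact:endn} with both the maximum position and the endpoint equal to $n - i$ yields the closed form
\[
\prob{\WW \sim \mathcal{W}^{\line}(i)}{\ww^{(t)} = n,\ \maP{t} = n}
= \binom{t}{\tfrac{t + n - i}{2}} \frac{1}{2^{t}} \cdot \frac{2(n - i) + 2}{t + (n - i) + 2},
\]
where the parity requirement $t + (n - i) \equiv 0 \pmod{2}$ of Fact~\ref{fact:endn} is exactly the hypothesis $t \equiv n - i \pmod{2}$.

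Next, I would lower bound the two factors separately. For the binomial coefficient, observe that $(t + n - i)/2$ is at offset $(n - i)/2$ from the center $t/2$. The hypothesis $t \ge n^{2}/c$ gives $n - i \le n \le \sqrt{c}\,\sqrt{t}$, so Fact~\ref{fact:binBound} applied with constant $\sqrt{c}$ gives
\[
\binom{t}{(t + n - i)/2} \cdot \frac{1}{2^{t}} \ge \frac{e^{-1-c}}{\sqrt{t}} \ge \frac{2 e^{-1-c}}{n},
\]
where the last inequality uses $t \le n^{2}/4$.

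For the remaining rational factor, the hypothesis $1 \le i \le \lceil n/2 \rceil$ forces $n - i \ge \lfloor n/2 \rfloor$, so the numerator is at least $n$; the denominator is at most $n^{2}/4 + n + 2 \le n^{2}$ for all $n$ past a small threshold (with the very small $n$ handled directly). Thus the factor is at least $1/n$, and multiplying the two lower bounds gives $2 e^{-1-c}/n^{2} \ge e^{-1-c}/n^{2}$, which is the target.

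The main obstacle I anticipate is purely bookkeeping rather than mathematical: making sure Fact~\ref{fact:endn} is instantiated correctly in the degenerate-looking case where the ``max'' and ``endpoint'' parameters coincide, and that the parity hypothesis in the lemma lines up with the parity clause of Fact~\ref{fact:endn}. Once this substitution is pinned down, Fact~\ref{fact:binBound} (with the constant tuned to match the range $t \ge n^{2}/c$) disposes of the binomial factor, and the rational factor contributes exactly the remaining $\Omega(1/n)$ needed to reach $\Omega(1/n^{2})$.
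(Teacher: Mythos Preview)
Your proposal is correct and follows essentially the same route as the paper: shift the walk to start at $0$, apply Fact~\ref{fact:endn} with endpoint and maximum both equal to $n-i$, then separately lower bound the binomial coefficient via Fact~\ref{fact:binBound} (using $n-i \le n \le \sqrt{c}\,\sqrt{t}$) and the rational factor via the constraints $i \le \lceil n/2 \rceil$ and $t \le n^2/4$. The only cosmetic difference is that the paper first monotonically replaces $n-i$ by $n$ in the binomial before invoking Fact~\ref{fact:binBound}, whereas you apply Fact~\ref{fact:binBound} directly to $\binom{t}{(t+n-i)/2}$; both are valid.
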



\begin{proof}
By symmetry we rewrite the probability as
\[
\prob{\WW \sim \mathcal{W}^{}(0)}
{\maP{t} = n-i, \ww^{\left( t \right)} = n-i}.
\]
Fact~\ref{fact:endn} gives that this probability equals to
\[
\frac{1}{2^t}\binom{t}{\frac{t + n - i}{2}}
\frac{2(n - i + 1)}{t + n - i + 2}.
\]
We can separately bound the last two terms according to the assumptions
  on $t$ and $i$. Setting $i = 0$ minimizes $\binom{t}{(t + n - i)/2}$
for all $i \geq 0$. Setting $i = \lceil n/2 \rceil$ in the numerator, $i = 0$
  in the denominator, and $t = n^2/4$ minimizes $(2(n-i + 1))/(t + n + 2)$.  It
  follows that
\[
\frac{1}{2^t}\binom{t}{\frac{t+n}{2}} \cdot \frac{2(\lfloor n/2 \rfloor +1)}{n^2/4+n+2}
\ge \frac{1}{2^t}\binom{t}{\frac{t+n}{2}} \frac{n}{n^2}.
\]
We reapply Fact~\ref{fact:binBound} with the
observation that $n  \leq \sqrt{ct}$ to get
\[
\frac{1}{2^t}\binom{t}{\frac{t+\sqrt{ct}}{2}} \frac{1}{n}
\ge e^{-1-c} \cdot \frac{1}{n^2},
\]
as desired.
\end{proof}

It remains to condition upon the minimum of a walk.
This hinges upon the following statement about moving in the
wrong direction only decreasing the probability a walk starting
at some $1 \leq i \leq \lceil n/2 \rceil$ ending at $n$ without
ever going past $n$.

\begin{lemma}
\label{lem:leftWorse}
For any $1 \le i \le \lceil n/2 \rceil$,
at any step $t \leq n^2/4$ with $t \equiv n-i \pmod{2}$, we have
\begin{align*}
  \prob{\WW \sim \mathcal{W}^{}(i)}
{\ww^{\left(t\right)}=n,
\maP{t}=n}
  \geq
\prob{\WW \sim \mathcal{W}^{}(i)}
{\ww^{\left(t\right)}=n,
\maP{t}=n
  \left|\: \miP{t} < 1\right.}.
\end{align*}
\end{lemma}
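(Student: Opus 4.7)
The plan is to use the strong Markov property at the first hit of $0$ to reduce the statement to a pointwise probability comparison, which is then verified using the closed-form expression of Fact~\ref{fact:endn}.

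Let $T_0 = \min\{s \geq 0 : \ww^{(s)} = 0\}$, set to $\infty$ if the walk never visits $0$. Because the walk starts at $i \geq 1$ and takes unit steps, the event $\{\miP{t} < 1\}$ coincides with $\{T_0 \leq t\}$. By the strong Markov property, conditional on $T_0 = s$ the walk decomposes into an initial segment (a walk from $i$ to $0$ of length $s$ constrained to stay $\geq 1$ at times strictly less than $s$) followed by an independent fresh walk from $0$ of length $t - s$. The event $\{\ww^{(t)} = n, \maP{t} = n\}$ then factors as ``the initial segment stays $\leq n$'' times ``the continuation ends at $n$ with $\maP \leq n$''. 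Since the first factor has probability at most $1$, averaging against the law of $T_0$ reduces the claim to the pointwise inequality
\[
\prob{\WW \sim \mathcal{W}^{\line}(0)}{\ww^{(t - s)} = n, \maP{t - s} \leq n}
\leq
\prob{\WW \sim \mathcal{W}^{\line}(i)}{\ww^{(t)} = n, \maP{t} = n}
\]
for every admissible $s$ (i.e., $s \geq i$ with $s \equiv i \pmod{2}$); summing against the distribution of $T_0$ then yields the lemma.

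To establish this pointwise bound I invoke Fact~\ref{fact:endn} and express both sides as products of a binomial coefficient and a rational factor, reducing the inequality to
\[
2^s \cdot \frac{\binom{t - s}{(t - s + n)/2}}{\binom{t}{(t + n - i)/2}} \cdot \frac{(n + 1)(t + n - i + 2)}{(n - i + 1)(t - s + n + 2)} \leq 1.
\]
The binomial ratio telescopes via Pascal's identity into $\prod_{j = 0}^{s - 1} (t - j - k)/(t - j)$ with $k = (t + n - i)/2$, and the hypothesis $t \leq n^2/4$ (equivalent to $t \leq (n - i + 1)^2$ since $i \leq \lceil n/2 \rceil$) bounds each factor by $1/2 - \Omega(1/n)$. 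This leaves just enough slack to absorb the $2^s$ multiplier and the rational factor, the latter being at most $1 + O(i/n)$.

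The main obstacle is that the comparison is numerically tight: a Gaussian heuristic gives a ratio of roughly $1 - 2i/n$ in the worst case $s = i$, so the lower-order terms must be tracked carefully. Fortunately, $s = i$ is the tightest case; for larger $s$ the left-hand side shrinks rapidly because $t - s$ decreases and the walk from $0$ has correspondingly fewer steps available to reach $n$, making the inequality progressively easier to verify.
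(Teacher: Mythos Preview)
Your reduction via the strong Markov property at $T_0$ is exactly the paper's first move, and the displayed ratio you derive from Fact~\ref{fact:endn} is correct. The gap is in how you handle that ratio.

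The telescoping identity you state,
\[
\frac{\binom{t-s}{(t-s+n)/2}}{\binom{t}{(t+n-i)/2}}=\prod_{j=0}^{s-1}\frac{t-j-k}{t-j},\qquad k=\frac{t+n-i}{2},
\]
only holds when $s=i$: in that case the two lower indices coincide, $(t-s+n)/2=(t+n-i)/2=k$, and the ratio is genuinely $\binom{t-i}{k}/\binom{t}{k}$. For $s>i$ (with $s\equiv i\pmod 2$) the lower index drops by $(s-i)/2$, so the ratio involves a change in both parameters of the binomial and does not reduce to your product. Your per-factor bound $1/2-\Omega(1/n)$ is therefore justified only at $s=i$.

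You then assert that $s=i$ is the tightest case, but the heuristic you give (``fewer steps available to reach $n$'') is not a proof; a priori one needs to know that $\prob{\WW\sim\mathcal{W}(0)}{\ww^{(\tau)}=n,\ \maP{\tau}=n}$ is monotone increasing in $\tau$ on the range $\tau\le t\le n^2/4$. That monotonicity is true, and it is precisely what the paper establishes as its first step: using Fact~\ref{fact:endn} it compares $\tau$ to $\tau+2$ and reduces to the arithmetic inequality $3\tau\le n^2+2n$. Once that is in hand, the paper shifts the starting point from $0$ to $i$ by a second, analogous monotonicity argument (comparing target $n$ to $n-2$, reducing to $\tau+2\le n^2$). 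Splitting the comparison into these two one-parameter monotonicities avoids the delicate two-parameter ratio bound you are attempting, where the slack is only of order $i/n$ and the constants have to be tracked exactly. Your approach can be completed, but you will essentially have to prove the monotonicity-in-$\tau$ statement anyway to dispose of $s>i$, at which point the paper's route is shorter.
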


\begin{proof}
Condition on $\miP{t} < 1$ and consider the first time $\widehat{t}$ the walk hits $0$.
This means $i
\equiv \widehat{t} \pmod 2$ and in turn $n \equiv t - \widehat{t} \pmod 2$.
The probability of $\maP{t}  = \ww^{(t)} = n$
via the walk in steps $\widehat{t} + 1,\dots, t$ is then
at most
\[
\prob{\WW \sim \mathcal{W}^{}(0)}
{\ww^{\left(t - \widehat{t} \right)} =
\maP{t - \widehat{t}} = n}.
\]
Note that we have inequality since it is possible that
we already have $\maP{\widehat{t}}  > n$.
Therefore, it suffices to show for any $n$ and any
$1 \leq \widehat{t} \leq t$ we have
\begin{align*}
  \prob{\WW \sim \mathcal{W}^{}(0)}
{\ww^{\left(t - \widehat{t} \right)} =
	\maP{t - \widehat{t}} = n}
  \leq
\prob{\WW \sim \mathcal{W}^{}(i)}
{\ww^{\left(t\right)} = \maP{t} = n}.
\end{align*}

There are two variables that are shifted from one side of the inequality to the
  other, the starting position of the walk and the number of steps. In order to
  prove the inequality, we will show that both taking more steps and
  starting further to the right will only improve the probability of ending at
  $n$ and not going above $n$.

We begin by showing that taking more steps will only improve this probability:
\begin{align*}
  &\prob{\WW \sim \mathcal{W}^{}(0)}
  {\ww^{\left(t - \widehat{t} \right)} =
    \maP{t - \widehat{t}} = n}\\
  &\hspace{2.5cm}\leq
  \max\Bigg\{\prob{\WW \sim \mathcal{W}^{}(0)}
  {\ww^{\left(t-1\right)} = \maP{t - 1} = n},
  \prob{\WW \sim \mathcal{W}^{}(0)}
  {\ww^{\left(t\right)} = \maP{t} = n} \Bigg\}.
\end{align*}

\noindent
There is no guarantee that $t \equiv n \pmod 2$, so we consider $t$ or
$t-1$ steps depending on parity. We are guaranteed that $t-1 \geq t -
\widehat{t}$ since $\widehat{t} \geq 1$, so without loss of generality, we
assume $t \equiv \widehat{t} \pmod 2$ and show 
\begin{align*}
  \prob{\WW \sim \mathcal{W}^{}(0)}
{\ww^{\left(t - \widehat{t} \right)} =
	\maP{t - \widehat{t}} = n}
  \leq
\prob{\WW \sim \mathcal{W}^{}(0)}
{\ww^{\left(t\right)} = \maP{t} = n}.
\end{align*}
Note that the proof is equivalent when $t - 1 \equiv \widehat{t} \pmod 2$.

Using Fact~\ref{fact:endn}, we have the explicit probability
\[
\prob{\WW \sim \mathcal{W}^{}(0)}
{\maP{t} = \ww^{\left(t\right)}=n}
= \frac{1}{2^t}{t \choose \frac{t + n}{2}}\frac{2n+2}{t+n+2}.
\]
Substituting $t$ by $t-2$ into the equation above and comparing the right hand sides gives
\begin{align*}
  \prob{\WW \sim \mathcal{W}^{}(0)}
{\ww^{\left(t-2\right)} = \maP{t-2} = n}
  \leq
\prob{\WW \sim \mathcal{W}^{}(0)}
{\ww^{\left(t\right)} = \maP{t} = n},
\end{align*}
because we know by assumption that
\begin{align*}
\frac{1}{2^{t-2}}{t-2 \choose \frac{t+n}{2}-1}\frac{2n+2}{t+n} & \le \frac{1}{2^t}{t \choose \frac{t+n}{2}}\frac{2n+2}{t+n+2}\\
\frac{(t-2)!}{\left(\frac{t+n-2}{2}\right)!\left(\frac{t-n-2}{2}\right)!}\frac{1}{t+n}&\le \frac{1}{4}\frac{t!}{\left(\frac{t+n}{2}\right)!\left(\frac{t-n}{2}\right)!}\frac{1}{t+n+2}\\
\frac{1}{t+n} &\le \frac{t(t-1)}{(t+n)(t-n)(t+n+2)}\\
3t&\leq n^2 + 2n.
\end{align*}
Inductively applying this argument inductively for $t - 2$ proves the inequality.

To complete our proof, it now suffices to show
\begin{align*}
&\max\bigg\{\prob{\WW \sim \mathcal{W}^{}(0)}
{\ww^{\left(t-1\right)} = \maP{t - 1} = n},\prob{\WW \sim \mathcal{W}^{}(0)}
{\ww^{\left(t\right)} = \maP{t} = n} \bigg\} \\
  &\hspace{8.5cm}\leq \prob{\WW \sim \mathcal{W}^{}(i)}
{\ww^{\left(t\right)} =
	\maP{t} = n},
\end{align*}
which we prove similarly.
First rewrite the right hand side using the fact that
\begin{align*}
  &\prob{\WW \sim \mathcal{W}^{}(i)}
  {\ww^{\left(t\right)} = \maP{t} = n}
  = \prob{\WW \sim \mathcal{W}^{}(0)} {\ww^{\left(t\right)} = \maP{t} = n-i},
\end{align*}
and initially assume that $t \equiv n \pmod 2$, which implies $n \equiv n - i \pmod 2$. Again, using the explicit formula from Fact~\ref{fact:endn} and substituting $n$ by $n - 2$ gives 
\begin{align*}
  &\prob{\WW \sim \mathcal{W}^{}(0)}
	{\ww^{\left(t\right)} = \maP{t} = n}
  \leq \prob{\WW \sim \mathcal{W}^{}(0)} {\ww^{\left(t\right)} = \maP{t} = n-2},
\end{align*}
when $t + 2 \leq n^2$, 
which true by assumption and can be inductively applied until $n = (n-i+2)$
because $(n - i+2) \geq \lceil n/2 \rceil + 1$. Unfortunately, we cannot
entirely apply the same proof when $t -1 \equiv n \pmod 2$ because this implies
$n \not\equiv n-i \pmod 2$. Applying the same proof as for $t \equiv n \pmod 2$
we can obtain
\begin{align*}
  &\prob{\WW \sim \mathcal{W}^{}(0)} {\ww^{\left(t - 1\right)} = \maP{t-1}
= n}
  \leq \prob{\WW \sim \mathcal{W}^{}(0)} {\ww^{\left(t-1\right)} =
\maP{t-1} = n-i+1},
\end{align*}
because $(t-1) + 2 \leq (n-i+3)^2$.
	
Therefore, we can conclude the proof by showing
\begin{align*}
  &\prob{\WW \sim \mathcal{W}^{}(0)} {\ww^{\left(t-1\right)} = \maP{t-1} = n-i+1}
  \leq \prob{\WW \sim \mathcal{W}^{\line}(0)}
		{\ww^{\left(t\right)} = \maP{t} = n - i}.
\end{align*}
This is then true when
\[
  n-i \leq \frac{t}{t-(n-i)} \cdot (n-i+1),
\]
which holds for $n-i \geq 0$.
\end{proof}

An immediate corollary of this \Cref{lem:leftWorse} is that if we condition
on the walk not going to the left of $1$, it only becomes more
probable to reach~$n$ without going above $n$.
Now we prove the main result of this section.
\probLeftRight*

\begin{proof} 
Consider any starting position $1 \le i \le \lceil n/2 \rceil$
and any time $n^2/c \le t \le n^2/4$ 	with $t \equiv n-i \pmod{2}$.
By the definition of conditional probability we have
\begin{align*}
  \prob{\WW \sim \mathcal{W}^{}(i)}{
  \ww^{(t)} = n, \maP{t} = n, \miP{t} \geq 1  }
 & =
  \prob{\WW \sim \mathcal{W}^{}(i)}
  {\ww^{\left(t\right)}=n,
  \maP{t}=n
  \left|\: \miP{t} \geq 1\right.}\\
  &\hspace{0.25cm}\cdot 
  \prob{\WW \sim \mathcal{W}^{}(i)}
  {\miP{t} \geq 1}.
\end{align*}

Lemma~\ref{lem:probLeft} shows that the second term is at
least $\exp(-1 - c)) i / n$.
Taking the probability under $\miP{t} \geq 1$
(i.e., the complementary event of $\miP{t} < 1$) in 
Lemma~\ref{lem:leftWorse}
allows us to upper bound the first term using Lemma~\ref{lem:probRight} by
\begin{align*}
  \prob{\WW \sim \mathcal{W}^{}(i)}
  {\ww^{\left(t\right)}=n,
   \maP{t}=n
  \left|\: \miP{t} \geq 1\right.}
  &\geq
  \prob{\WW \sim \mathcal{W}^{}(i)}
  {\ww^{\left(t\right)}=n,
   \maP{t}=n}\\
  &\geq e^{-1 - c} \frac{1}{n^2}.
\end{align*}
Putting these together then gives
\begin{align*}
  \prob{\WW \sim \mathcal{W}^{}(i)}{
 	\ww^{(t)} = n, \maP{t} = n, \miP{t} \geq 1  }
&\geq \left( e^{-1 - c} \frac{i}{n} \right)
\cdot \left( e^{-1 - c} \frac{1}{n^2} \right)\\
  &= e^{-2 - 2c}\cdot \frac{i}{n^3},
\end{align*}
which completes the proof.
\end{proof}

\subsection{Upper Bounding the Final, Maximum, and Minimum Position}\label{subsec:upperFinalMaxMin}

We begin by splitting every $t$ step walk in half, and instead consider the
probability of each walk satisfying the given conditions. In order to give
upper bounds of these probabilities, we will relax the requirements, allowing
us to more easily relate the probabilities to previously known facts about
one-dimensional walks that we proved in Section~\ref{sec:maxmin}.
Furthermore, by splitting the walk in half we now have to consider all possible
midpoints in $[1,n]$.
\begin{lemma}\label{lem:divideWalkinHalf}
For all integers $1 \le n \le t$, we have
\begin{align*}
  &\prob{\WW \sim \mathcal{W}^{}(1)}{
  \ww^{(t)} = n, \maP{t} = n, \miP{t} \geq 1  }\\
&\hspace{2.0cm}\leq \sum_{i=1}^n \prob{\WW \sim \mathcal{W}^{}(1)}{
  \ww^{(\lfloor \frac{t}{2} \rfloor)} = i, \miP{\lfloor \frac{t}{2} \rfloor} \geq 1  }
 \cdot\prob{\WW \sim \mathcal{W}^{}(i)}{
  \ww^{(\lceil \frac{t}{2} \rceil)} = n, \maP{\lceil \frac{t}{2} \rceil} = n }.
\end{align*}
\end{lemma}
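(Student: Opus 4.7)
The plan is to decompose the $t$-step walk at its midpoint $\lfloor t/2 \rfloor$, condition on the midpoint position, apply the Markov property, and then relax the conditions on each half so that what remains on the left half is the statement about its minimum and what remains on the right half is the statement about its maximum and endpoint.

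First I would observe that on the event $\{\ww^{(t)} = n,\ \maP{t} = n,\ \miP{t} \geq 1\}$, the walk stays in the interval $[1,n]$ throughout, so in particular $i := \ww^{(\lfloor t/2 \rfloor)}$ must satisfy $1 \leq i \leq n$. Thus I can write
\begin{align*}
  &\prob{\WW \sim \mathcal{W}^{}(1)}{\ww^{(t)} = n,\ \maP{t} = n,\ \miP{t} \geq 1}\\
  &\qquad = \sum_{i=1}^n \prob{\WW \sim \mathcal{W}^{}(1)}{\ww^{(t)} = n,\ \maP{t} = n,\ \miP{t} \geq 1,\ \ww^{(\lfloor t/2 \rfloor)} = i}.
\end{align*}

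Next I would use the Markov property of the simple random walk on $\Z$ at time $\lfloor t/2 \rfloor$ to factor each summand. The event $\{\miP{t} \geq 1\}$ refines to $\{\miP{\lfloor t/2 \rfloor} \geq 1\}$ on the first half together with the analogous minimum condition on the second half, and similarly $\{\maP{t} = n\}$ refines to a maximum condition on each half. Conditioning on $\ww^{(\lfloor t/2 \rfloor)} = i$ and using the independence of the increments before and after time $\lfloor t/2 \rfloor$ lets me write the joint probability as the product of a probability for the first $\lfloor t/2 \rfloor$ steps of a walk from $1$ and a probability for the $\lceil t/2 \rceil$ subsequent steps of a walk from $i$.

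Finally I would relax the conditions in each factor to obtain the stated inequality. On the first factor I drop the upper-bound condition on the maximum and keep only $\miP{\lfloor t/2 \rfloor} \geq 1$, yielding
\[
  \prob{\WW \sim \mathcal{W}^{}(1)}{\ww^{(\lfloor t/2 \rfloor)} = i,\ \miP{\lfloor t/2 \rfloor} \geq 1}.
\]
On the second factor I drop the lower-bound condition on the minimum and keep only $\ww^{(\lceil t/2 \rceil)} = n$ and $\maP{\lceil t/2 \rceil} = n$, yielding
\[
  \prob{\WW \sim \mathcal{W}^{}(i)}{\ww^{(\lceil t/2 \rceil)} = n,\ \maP{\lceil t/2 \rceil} = n}.
\]
Summing over $i \in \{1,\dots,n\}$ then gives exactly the right-hand side of the claimed inequality.

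The argument is essentially a routine Markov-property splitting, so there is no real obstacle; the only thing to be careful about is keeping track of which conditions survive the splitting and which are being relaxed, so that the asymmetric treatment of the two halves (min only on the left, max and endpoint only on the right) is recorded correctly. The hypothesis $n \leq t$ ensures that $\lfloor t/2 \rfloor$ and $\lceil t/2 \rceil$ are both large enough that the midpoint index $i$ can in principle range over all of $[1,n]$, but this is not used quantitatively in the bound itself.
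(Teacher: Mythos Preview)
Your proposal is correct and follows essentially the same argument as the paper: split the walk at time $\lfloor t/2\rfloor$, condition on the midpoint $i\in[1,n]$, apply the Markov property to factor, and then relax by dropping the max condition on the first half and the min condition on the second half. The paper additionally notes explicitly that because $\ww^{(t)}=n$, the second half's maximum is automatically at least $n$, so $\maP{\lceil t/2\rceil}=n$ is already implied before relaxing; your phrase ``refines to a maximum condition on each half'' glosses over this, but the relaxation you perform is the same and yields the same bound.
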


\begin{proof}
By subdividing the walk roughly in half, we consider all possible positions of
a walk after half of its steps such that the walk satisfies the maximum and minimum
conditions. The second half of the walk must end at $n$, which implies the
maximum position of the walk must be at least~$n$. Thus, the first half of the
walk only needs to not go above $n$. Accordingly,
we can write
\begin{align*}
  &\prob{\WW \sim \mathcal{W}^{}(1)}{
  \ww^{(t)} = n, \maP{t} = n, \miP{t} \geq 1  }\\
&\hspace{3.0cm}=\sum_{i=1}^n \prob{\WW \sim \mathcal{W}^{}(1)}{
    \ww^{(\lfloor \frac{t}{2} \rfloor)} = i, \maP{\lfloor \frac{t}{2} \rfloor} \leq n, \miP{\lfloor \frac{t}{2} \rfloor} \geq 1}\\
  &\hspace{6.0cm}\cdot \prob{\WW \sim \mathcal{W}^{}(i)}{
    \ww^{(\lceil \frac{t}{2} \rceil)} = n, \maP{\lceil \frac{t}{2} \rceil} = n, \miP{\lceil \frac{t}{2} \rceil} \geq 1 }.
\end{align*}
Removing conditions that the walks must satisfy cannot decrease
the probability, so our upper bound follows.
\end{proof}

From Fact~\ref{fact:endn} we can obtain explicit expressions for each inner
term of the summation, which we then simplify into a strong bound on the
summation in the following lemma.

\begin{lemma}\label{lem:sumBoundForMaxMinEnd}
For all integers $1 \le n \le t$, we have
\begin{align*}
  &\prob{\WW \sim \mathcal{W}^{}(1)}{
  \ww^{(t)} = n, \maP{t} = n, \miP{t} \geq 1  }\\
&\hspace{4.5cm}\leq \sum_{i=1}^n \left(\frac{16i(n-i+1)}{t^2}\right) \binom{\lfloor \frac{t}{2} \rfloor}{\frac{\lfloor \frac{t}{2} \rfloor+i-1}{2}} \frac{1}{2^{\lfloor \frac{t}{2} \rfloor}}
  \cdot\binom{\lceil \frac{t}{2} \rceil}{\frac{\lceil \frac{t}{2} \rceil+(n-i+1)-1}{2}} \frac{1}{2^{\lceil \frac{t}{2} \rceil}}.
\end{align*}
\end{lemma}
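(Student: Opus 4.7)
The plan is to start from Lemma~\ref{lem:divideWalkinHalf}, which already decomposes the target probability as a sum over midpoints $i \in \{1,\dots,n\}$ of the product of two simpler walk probabilities. It therefore suffices to compute each factor exactly using Fact~\ref{fact:endn} and then to show that the resulting rational prefactor is at most $16\,i(n-i+1)/t^{2}$.

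For the first factor $\prob{\WW \sim \mathcal{W}^{}(1)}{\ww^{(\lfloor t/2 \rfloor)} = i,\, \miP{\lfloor t/2 \rfloor} \ge 1}$, I would first apply the reflection $p \mapsto 2-p$, which preserves the distribution of a simple symmetric random walk while swapping the roles of maximum and minimum. The event becomes $\{\ww^{(\lfloor t/2 \rfloor)} = 2-i,\, \maP{\lfloor t/2 \rfloor} \le 1\}$. Shifting all positions by $-1$ then turns this into a walk from $0$ that ends at $1-i \le 0$ with $\maP{\lfloor t/2 \rfloor} \le 0$, and since the walk starts at $0$ its maximum is in fact exactly $0$. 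Applying Fact~\ref{fact:endn} with ambient maximum $n'=0$ and endpoint $k=1-i$ yields
\[
  \binom{\lfloor t/2\rfloor}{\frac{\lfloor t/2\rfloor+i-1}{2}}\frac{1}{2^{\lfloor t/2\rfloor}} \cdot \frac{2i}{\lfloor t/2\rfloor + i + 1}.
\]
For the second factor $\prob{\WW \sim \mathcal{W}^{}(i)}{\ww^{(\lceil t/2 \rceil)} = n,\, \maP{\lceil t/2 \rceil} = n}$, a shift by $-i$ produces a walk from $0$ ending at $n-i$ with maximum exactly $n-i$, and Fact~\ref{fact:endn} with $n'=k=n-i$ gives
\[
  \binom{\lceil t/2\rceil}{\frac{\lceil t/2\rceil+n-i}{2}}\frac{1}{2^{\lceil t/2\rceil}} \cdot \frac{2(n-i+1)}{\lceil t/2\rceil + n - i + 2}.
\]

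Multiplying these two expressions produces exactly the two binomial coefficients on the right-hand side of the lemma, together with the rational factor
\[
  \frac{4\,i(n-i+1)}{(\lfloor t/2\rfloor+i+1)(\lceil t/2\rceil+n-i+2)}.
\]
To finish, since $i \ge 1$ and $n-i \ge 0$, each denominator is at least $\lfloor t/2 \rfloor + 2 \ge t/2$, so the product of the denominators is at least $t^2/4$ and the expression above is at most $16\,i(n-i+1)/t^{2}$. Terms that fail the parity conditions of Fact~\ref{fact:endn} contribute $0$ on both sides (with the convention $\binom{m}{k} = 0$ for non-integer $k$), so they pose no issue. The main thing to get right is the reflection step in the first factor, which converts the lower barrier at $1$ into an upper barrier at $0$ so that Fact~\ref{fact:endn} applies directly; after that the computation is routine algebra.
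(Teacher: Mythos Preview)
Your proposal is correct and follows essentially the same approach as the paper: both start from Lemma~\ref{lem:divideWalkinHalf}, evaluate each factor exactly via Fact~\ref{fact:endn} after a symmetry/shift, and then bound the product of the two denominators below by $t^2/4$. The only cosmetic difference is in how the lower barrier is converted to an upper barrier for the first factor: you reflect about the point $1$ (sending the endpoint to $2-i$), whereas the paper uses the path-reversal symmetry (walks from $1$ to $i$ staying $\ge 1$ correspond to walks from $1$ to $i$ staying $\le i$); both yield the same formula.
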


\begin{proof}
Apply the upper bound from Lemma~\ref{lem:divideWalkinHalf} and
examine each inner term in the summation. By the symmetry of walks, there
must be an equivalent number of $\lfloor t/2 \rfloor$ step walks with
endpoints~$1$ and $i$ that never walk below $1$ versus those that never walk
above $i$. Thus
\begin{align*}
 \prob{\WW \sim \mathcal{W}^{}(1)}{ \miP{\lfloor \frac{t}{2} \rfloor}
   \geq 1,	\ww^{(\lfloor \frac{t}{2} \rfloor)} = i }
  = \prob{\WW
\sim \mathcal{W}^{}(1)}{ \maP{\lfloor \frac{t}{2} \rfloor} \leq i,	\ww^{(\lfloor \frac{t}{2} \rfloor)} = i }.
\end{align*}
Shifting the start of the walk to $0$ allows us to apply Fact~\ref{fact:endn},
because $\maP{\lfloor \frac{t}{2} \rfloor} \leq i$ is equivalent to
$\maP{\lfloor \frac{t}{2} \rfloor} = i$ if the walk must end at $i$.
Therefore,
\begin{align*}
  \prob{\WW \sim \mathcal{W}^{}(1)}{
		\miP{t} \geq 1,	\ww^{(\lfloor \frac{t}{2} \rfloor)} = i }
  = \left(\frac{2i}{\lfloor \frac{t}{2} \rfloor+i+1}\right) \binom{\lfloor \frac{t}{2} \rfloor}{\frac{\lfloor \frac{t}{2} \rfloor+i-1}{2}} \frac{1}{2^{\lfloor \frac{t}{2} \rfloor}},
\end{align*}
when the parity is correct and $0$ otherwise. This works as an upper bound.
Similarly, by shifting the start to $0$ and applying Fact~\ref{fact:endn} we have
\begin{align*}
  \prob{\WW \sim \mathcal{W}^{}(i)}{
		\ww^{(\lceil \frac{t}{2} \rceil)} = n, \maP{\lceil \frac{t}{2} \rceil} = n }
  = 	\left(\frac{2(n-i+1)}{\lceil \frac{t}{2} \rceil+(n-i + 1)+1}\right) \binom{\lceil \frac{t}{2} \rceil}{\frac{\lceil \frac{t}{2} \rceil+(n-i+1)-1}{2}} \frac{1}{2^{\lceil \frac{t}{2} \rceil}}.
\end{align*}
	
Applying Lemma~\ref{lem:divideWalkinHalf}, we now have expressions for the term inside the summation, so
	\begin{align*}
    &\prob{\WW \sim \mathcal{W}^{}(1)}{
		\ww^{(\lfloor \frac{t}{2} \rfloor)} = i, \miP{\lfloor \frac{t}{2} \rfloor} \geq 1  }
  \cdot\prob{\WW \sim \mathcal{W}^{}(i)}{
		\ww^{(\lceil \frac{t}{2} \rceil)} = n, \maP{\lceil \frac{t}{2} \rceil} = n }\\ 
  &= \left(\frac{2i}{\lfloor \frac{t}{2} \rfloor+i+1}\right) \binom{\lfloor \frac{t}{2} \rfloor}{\frac{\lfloor \frac{t}{2} \rfloor+i-1}{2}} \frac{1}{2^{\lfloor \frac{t}{2} \rfloor}}
    \cdot \left(\frac{2(n-i+1)}{\lceil \frac{t}{2} \rceil+(n-i + 1)+1}\right) \binom{\lceil \frac{t}{2} \rceil}{\frac{\lceil \frac{t}{2} \rceil+(n-i+1)-1}{2}} \frac{1}{2^{\lceil \frac{t}{2} \rceil}}\\ 
    &\leq \left(\frac{16i(n-i+1)}{t^2}\right) \binom{\lfloor \frac{t}{2} \rfloor}{\frac{\lfloor \frac{t}{2} \rfloor+i-1}{2}} \frac{1}{2^{\lfloor \frac{t}{2} \rfloor}}
    \cdot\binom{\lceil \frac{t}{2} \rceil}{\frac{\lceil \frac{t}{2} \rceil+(n-i+1)-1}{2}} \frac{1}{2^{\lceil \frac{t}{2} \rceil}},
\end{align*}
which we upper bound by the fact that $(\lfloor \frac{t}{2} \rfloor+i+1)(\lceil \frac{t}{2} \rceil+(n-i + 1)+1) \geq t^2/4$.
This completes the proof.
\end{proof}

The following lemma gives a upper bound for the inner expression from
Lemma~\ref{lem:sumBoundForMaxMinEnd} by bounding the binomial
coefficients with the central binomial coefficients and
using Stirling's approximation.

\begin{lemma}\label{lem:simpleMaxforProductProb}
For any integer $1 \le i n$, we have
\begin{align*}
  \left(\frac{16i(n-i+1)}{t^2}\right)
    \binom{\lfloor \frac{t}{2} \rfloor}{\frac{\lfloor \frac{t}{2} \rfloor+i-1}{2}} \frac{1}{2^{\lfloor \frac{t}{2} \rfloor}}
\binom{\lceil \frac{t}{2} \rceil}{\frac{\lceil \frac{t}{2} \rceil+(n-i+1)-1}{2}} \frac{1}{2^{\lceil \frac{t}{2} \rceil}}
  \leq 64\frac{n^2}{t^3}.
\end{align*}
\end{lemma}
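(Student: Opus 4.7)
The proof proposal is a direct computation following the outline suggested in the lemma's context. The plan is to bound both binomial coefficients by the corresponding central binomial coefficients (since these are the largest among $\binom{m}{k}$ for fixed $m$), then apply Stirling's approximation (Fact~\ref{fact:stirling}) to convert these normalized central coefficients into $O(1/\sqrt{m})$ bounds, and finally collect powers of $t$.

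Concretely, first I would note that for any nonnegative integer $m$ and any $k$ of the correct parity,
\[
\binom{m}{k} \;\le\; \binom{m}{\lfloor m/2 \rfloor}.
\]
Applying this to both $m = \lfloor t/2\rfloor$ and $m = \lceil t/2 \rceil$ eliminates the dependence of the binomial coefficients on $i$. Next I would use Stirling's approximation to show that
\[
\binom{m}{\lfloor m/2 \rfloor}\frac{1}{2^m} \;\le\; \frac{C}{\sqrt{m}}
\]
for a small explicit constant $C$. A clean way is to invoke Fact~\ref{fact:binBound} with $c=0$ (after accounting for the case of odd~$m$ by relating $\binom{m}{\lfloor m/2\rfloor}$ to $\binom{m+1}{(m+1)/2}$), which immediately gives constants of the form $e^{-1}$. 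Since $\lfloor t/2 \rfloor,\lceil t/2 \rceil \ge t/2 - 1$ (and in particular at least $t/4$ for $t \ge 2$), multiplying the two bounds yields
\[
\binom{\lfloor t/2\rfloor}{\frac{\lfloor t/2\rfloor + i - 1}{2}}\frac{1}{2^{\lfloor t/2\rfloor}}
\cdot
\binom{\lceil t/2\rceil}{\frac{\lceil t/2\rceil + (n-i+1) - 1}{2}}\frac{1}{2^{\lceil t/2\rceil}}
\;\le\; \frac{C'}{t}
\]
for an explicit constant $C'$.

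Finally I would combine this with the elementary inequality
\[
i(n - i + 1) \;\le\; \frac{(n+1)^2}{4} \;\le\; n^2
\]
(valid for $1 \le i \le n$ and $n \ge 1$), which bounds the prefactor by $16n^2/t^2$. Multiplying the two estimates gives an upper bound of the form $16 C' \cdot n^2/t^3$, and the main task is simply to verify that the constant chase makes $16 C' \le 64$. The hard (or rather, fiddly) part is not conceptual but purely bookkeeping: handling the parity of $\lfloor t/2\rfloor$ and $\lceil t/2\rceil$ correctly when reducing to central binomial coefficients, and tracking the Stirling constants carefully so that the final constant is at most $64$. Since Fact~\ref{fact:binBound} yields $e^{-1}$ rather than $\sqrt{2/\pi}$, the resulting constant is comfortably within the target, so the claimed bound follows.
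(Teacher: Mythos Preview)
Your approach is essentially the paper's: bound each binomial by the central one, apply Stirling to get $\binom{m}{\lfloor m/2\rfloor}2^{-m}\le 1/\sqrt{m}$, use $\lfloor t/2\rfloor\lceil t/2\rceil\ge t^2/16$, and crudely bound $i(n-i+1)\le n^2$. The paper does exactly this, arriving at $16n^2/t^2\cdot 4/t=64n^2/t^3$.

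There is one concrete slip in your write-up: Fact~\ref{fact:binBound} is a \emph{lower} bound on binomial coefficients, not an upper bound, so invoking it with $c=0$ gives $\binom{m}{m/2}2^{-m}\ge e^{-1}/\sqrt{m}$, which is the wrong direction for what you need. You should instead use Fact~\ref{fact:stirling} directly (as the paper does): from $m!\le e\sqrt{m}(m/e)^m$ and $k!\ge\sqrt{2\pi k}(k/e)^k$ one gets $\binom{2k}{k}\le (e/\pi)\,4^k/\sqrt{2k}\le 4^k/\sqrt{2k}$, and the odd-$m$ case follows by $\binom{2k+1}{k}=\tfrac12\binom{2k+2}{k+1}$. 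With the correct constant $C=1$ (or even $e/\pi$), the final bookkeeping still lands comfortably inside $64$, so the argument goes through once this reference is fixed.
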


\begin{proof}
Given that $1 \le i \le n$, we can crudely upper bound $i(n-i+1)$ by $n^2$.
Additionally, we can will use Stirling's approximation on the central binomial
coefficient to upper bound our binomial coefficients.
\[
    \binom{\lceil \frac{t}{2} \rceil}{\frac{\lceil \frac{t}{2} \rceil+(n-i+1)-1}{2}}  \leq 2^{\lceil \frac{t}{2} \rceil} \cdot\frac{1}{\sqrt{\lceil \frac{t}{2} \rceil}},
  \]
  and 
\[\binom{\lfloor \frac{t}{2} \rfloor}{\frac{\lfloor \frac{t}{2} \rfloor+i-1}{2}} \leq  2^{\lfloor \frac{t}{2} \rfloor} \cdot\frac{1}{\sqrt{\lfloor \frac{t}{2} \rfloor}}.
 \]
The exponential terms will cancel and
\[\frac{1}{\sqrt{\lceil \frac{t}{2} \rceil}}\cdot\frac{1}{\sqrt{\lfloor \frac{t}{2} \rfloor}}
  \leq \frac{4}{t},
\]
giving our desired bound.
\end{proof}

The upper bound in Lemma~\ref{lem:simpleMaxforProductProb} is not
sufficient for~$t$ that are asymptotically less than $n^2$, so for these~$t$ we
need to give a more detailed analysis. Therefore, we more carefully
examine the binomial coefficients that are significantly smaller than the
central coefficient for small $t$. Consequently, the exponential term will not
be sufficiently canceled by the binomial coefficient for values of $t$ that are
asymptotically smaller than $n^2$. More specifically, we show that the
function of $t$ on the right hand side of Lemma~\ref{lem:sumBoundForMaxMinEnd} is
increasing in $t$ up until approximately $n^2$.

In the following lemma we consider even length walks for simplicity. The proof
for odd length walks follows analogously.

\begin{lemma}\label{lem:MaxIsAbovenSquared}
For $n\geq 20$ and any integer $ 1 \leq i \leq n$, for all $t \leq n^2/40$ we have
\begin{align*}
  &\frac{16i(n-i+1)}{(2t)^2} \frac{1}{2^{2t}} \binom{t}{\frac{t + i -1}{2}} \binom{t}{\frac{t+(n-i+1)-1}{2}} \\
  &\hspace{7.0cm}\leq 	\frac{16i(n-i+1)}{(2t + 4)^2} \frac{1}{2^{2t+4}} \binom{t+2}{\frac{t +2 + i -1}{2}} \binom{t+2}{\frac{t+2+(n-i+1)-1}{2}},
\end{align*}
where we consider walks of length $2t$ and $2t + 4$ to ensure that $(2t)/2$ and $(2t+4)/2$ have the same parity.
\end{lemma}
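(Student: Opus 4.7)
The plan is to cancel the common factors, reduce the claim to an elementary polynomial inequality, and verify it with a symmetric substitution.

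After dividing both sides by $16i(n-i+1)$ and writing $a = (t+i-1)/2$, $b = (t+n-i)/2$, the identity $\binom{t+2}{k+1}/\binom{t}{k} = (t+2)(t+1)/((k+1)(t-k+1))$ and the factorizations $(a+1)(t-a+1) = \tfrac{1}{4}\bigl((t+2)^2 - (i-1)^2\bigr)$, $(b+1)(t-b+1) = \tfrac{1}{4}\bigl((t+2)^2 - (n-i)^2\bigr)$ reduce the lemma to
\[
\bigl((t+2)^2 - (i-1)^2\bigr)\bigl((t+2)^2 - (n-i)^2\bigr) \le t^2(t+1)^2.
\]
If either $\binom{t}{a}$ or $\binom{t}{b}$ vanishes (because of parity mismatch or an out-of-range index) the left-hand side of the lemma is already zero, so we may assume both are positive; this forces $i - 1 \le t$ and $n - i \le t$, and hence $t \ge (n-1)/2$.

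Setting $p = i-1$ and $q = n-i$, we have $p + q = n-1$ and $p, q \in [0, t]$. Since the quantity is symmetric in $p, q$, I would substitute $p = (n-1)/2 - u$ and $q = (n-1)/2 + u$ for $u \in [0, t - (n-1)/2]$, which gives
\[
\bigl((t+2)^2 - p^2\bigr)\bigl((t+2)^2 - q^2\bigr) = A(u)^2 - B(u)^2,
\]
where $A(u) = (t+2)^2 - (n-1)^2/4 - u^2$ and $B(u) = (n-1) u$. A short derivative computation yields $\tfrac{d}{du}(A^2 - B^2) = -2u\bigl(2(t+2)^2 + (n-1)^2/2 - 2u^2\bigr)$, and the parenthesized factor is strictly positive throughout the range (using the trivial bound $u^2 \le t^2 < (t+2)^2$), so the expression is monotonically decreasing in $u$ and is maximized at $u = 0$.

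It therefore suffices to verify $\bigl((t+2)^2 - (n-1)^2/4\bigr)^2 \le t^2(t+1)^2$; taking square roots (the bracket is positive since $t \ge (n-1)/2$) and simplifying, this is equivalent to $(t+2)^2 - t(t+1) \le (n-1)^2/4$, i.e., $12 t + 16 \le (n-1)^2$. This is exactly where the hypothesis $t \le n^2/40$ enters: we have $12 t + 16 \le 3n^2/10 + 16$, and $(n-1)^2 - (3n^2/10 + 16) = 7n^2/10 - 2n - 15 > 0$ for every $n \ge 20$. The main obstacle is recognizing the symmetric substitution that reveals the maximum at $p = q$; once that is in hand, the remaining steps are routine algebra, and the hypotheses $n \ge 20$ and $t \le n^2/40$ only appear at the very end to close the numerics.
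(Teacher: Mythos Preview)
Your argument is correct and matches the paper's: both cancel the common factors, apply the ratio $\binom{t+2}{k+1}/\binom{t}{k}=(t+2)(t+1)/\bigl((k+1)(t-k+1)\bigr)$, and reduce to showing $\bigl((t+2)^2-(i-1)^2\bigr)\bigl((t+2)^2-(n-i)^2\bigr)\le t^2(t+1)^2$. The only difference is the final optimization over $i$: the paper observes that one of $i$ or $n-i$ is at least $n/2$ and bounds the product by $(t+2)^2\bigl((t+3)^2-n^2/4\bigr)$, arriving at $8t+9\le n^2/4$, whereas your symmetric substitution locates the exact maximum at $i-1=n-i=(n-1)/2$ and yields the slightly sharper $12t+16\le(n-1)^2$; both close comfortably under $t\le n^2/40$ and $n\ge 20$.
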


\begin{proof}	

Canceling like terms implies that the desired inequality is equivalent to 
  \begin{align*}
    &\frac{1}{t^2} \binom{t}{\frac{t + i -1}{2}} \binom{t}{\frac{t+(n-i+1)-1}{2}}
    \leq 	\frac{1}{(t + 2)^2} \cdot \frac{1}{16} \cdot \binom{t+2}{\frac{t +2 + i -1}{2}} \binom{t+2}{\frac{t+2+(n-i+1)-1}{2}}.
  \end{align*}
	Examining the binomial coefficients shows that
	\[
    \binom{t}{\frac{t + i -1}{2}} \frac{(t+2)(t+1)}{\left(\frac{t+1 +i}{2}\right)\left(\frac{t+3 - i}{2}\right)} = \binom{t+2}{\frac{t +2 + i -1}{2}},
  \]
	and 
  \begin{align*}
    \binom{t}{\frac{t+(n-i+1)-1}{2}}  \frac{(t+2)(t+1)}{\left(\frac{t+2 +(n-i)}{2}\right)\left(\frac{t+ 2 - (n-i)}{2}\right)}
    =  \binom{t+2}{\frac{t+2+(n-i+1)-1}{2}}.
  \end{align*}
	Using these identities, our desired inequality equals
  \begin{align*}
	\frac{1}{t^2}
    \leq 	\frac{16^{-1}}{(t + 2)^2} \frac{(t+2)(t+1)}{\left(\frac{t+1 +i}{2}\right)\left(\frac{t+3 - i}{2}\right)} \frac{(t+2)(t+1)}{\left(\frac{t+2 +(n-i)}{2}\right)\left(\frac{t+ 2 - (n-i)}{2}\right)}.
  \end{align*}
  Further cancellation of like terms and moving the denominator on each side
  into the numerator on the other side implies that our desired inequality is
  equivalent to
  \begin{align*}
    (t + 1 + i)(t + 3 - i)(t + 2 + (n-i))(t + 2 - (n-i))
    \leq 	t^2{(t+1)^2}.
  \end{align*}
	
  It is straightforward to see that \[(t + 1 + i)(t + 3 - i)\] is maximized by $i
= 1$ and
\[(t + 2 + (n-i))(t + 2 - (n-i))\] is maximized by $n-i = 0$. Furthermore, it must be true
that either $i \geq n/2$ or $n-i \geq n/2$, so we can upper bound the left hand side of
our inequality by substituting $n/2$ for $i$ or $n-i$, and setting the other terms to the
value that maximizes the product. Hence,
\begin{align*}
  (t + 1 + i)(t + 3 - i)(t + 2 + (n-i))(t + 2 - (n-i))
  \leq (t + 2)^2\left(t + 3 + \frac{n}{2}\right)\left(t + 3 - \frac{n}{2}\right).
\end{align*}
To prove our desired inequality it now suffices to show 
$(t + 2)^2\left(t + 3 + n/2\right)\left(t + 3 - n/2\right) \leq t^2(t+1)^2$,
which is equivalent to
\[
  \left(t + 3 + \frac{n}{2}\right)\left(t + 3 - \frac{n}{2}\right)  \leq t^2\left(1 - \frac{1}{t+2}\right)^2.
\]
Expanding both sides of the inequality and rearranging terms yields
\[
  6t + 9 + \frac{2t^2}{t+2}  - \left(\frac{t}{t+2}\right)^2 \leq \frac{n^2}{4}.
\]
Given that $2t^2/(t+2) \leq 2t$, it suffices to show that
$
  8t + 9 \leq n^2/4,
$
which is true when $t \leq n^2/40$ and $n \geq 20$.
\end{proof}


We can now prove the main upper bound result of this section using the
recently developed bounds for the right hand side of the expression in
\Cref{lem:sumBoundForMaxMinEnd}.

\upperProbLeftRight*

\begin{proof}
Applying Lemmas~\ref{lem:sumBoundForMaxMinEnd}
and~\ref{lem:simpleMaxforProductProb} gives
\begin{align*}
  \prob{\WW \sim \mathcal{W}^{}(1)}{
  \ww^{(t)} = n, \maP{t} = n, \miP{t} \geq 1  }
  \leq \sum_{i=1}^n 64\left(\frac{n^2}{t^3}\right),
\end{align*}
which immediately gives the upper bound $64(n/t)^3$.
Similarly, Lemmas~\ref{lem:sumBoundForMaxMinEnd} and~\ref{lem:MaxIsAbovenSquared} imply that for $t \leq n^2/40$,
\begin{align*}
  &\prob{\WW \sim \mathcal{W}^{}(1)}{
  \ww^{(t)} = n, \maP{t} = n, \miP{t} \geq 1  }\\
&\hspace{4.5cm}\leq \sum_{i=1}^n \left(\frac{16i(n-i+1)}{T^2}\right) \binom{\lfloor \frac{T}{2} \rfloor}{\frac{\lfloor \frac{T}{2} \rfloor+i-1}{2}} \frac{1}{2^{\lfloor \frac{T}{2} \rfloor}}
  \cdot\binom{\lceil \frac{T}{2} \rceil}{\frac{\lceil \frac{T}{2} \rceil+(n-i+1)-1}{2}} \frac{1}{2^{\lceil \frac{T}{2} \rceil}},
\end{align*}
where $T = n^2/40$. We then 
use Lemma~\ref{lem:simpleMaxforProductProb} and sum from $1$ to $n$ to obtain 
\begin{align*}
  \prob{\WW \sim \mathcal{W}^{}(1)}{
		\ww^{(t)} = n, \maP{t} = n, \miP{t} \geq 1  }
  &\leq \sum_{i=1}^n \frac{64n^2}{T^3}\\
  &\le \frac{64(40)^3}{n^3}\\
  &\leq \frac{e^{25}}{n^3},
\end{align*}
for all $t \leq n^2/40$.
Using the fact that $64(n/t)^3$ is a decreasing function in $t$, we have
\[
  64\left(\frac{n}{t}\right)^3 \leq \frac{e^{25}}{n^3},
\]for all $t \geq n^2 /40$, which is the desired bound.
\end{proof}


\section{Extension to Higher Dimensions}
\label{sec:higher-dimension}
Now we show how to extend our analysis to
upper and lower bound the transience class of
$d$-dimensional grids.

\DimD*

We denote by $\cube{n}$ the $d$-dimensional hypercube grid with $n^d$ vertices,
and construct it analogously to $\square{n}$. Its vertex set is $\{1,2,\dots,n\}^d
\cup \{\sink\}$ and its edges connect any pair of vertices that differ in one
coordinate. Vertices on the boundary have additional edges connecting to
$\sink$ so that every non-sink vertex has degree $2d$. We use the vector
notation $\uu = (\uu_1,\uu_2,\dots,\uu_d)$ to identify non-sink vertices.
We can decouple a walk $w$ on $\cube{n}$ into 
one-dimensional walks $w_1,w_2,\dots,w_d$, 
so that each step of a random walk on $\cube{n}$ can be
understood as choosing a random direction with probability $1/d$ and then
a step in the corresponding one-dimensional walk with probability
$1/2$.

Our bounds for the two-dimensional grid heavily relied on decoupling walks into
interleaved one-dimensional walks, and applying bounds from \Cref{sec:path} for simple symmetric
walks.
Generalizing these bounds to $d$-dimensional hypercubes
follows comparably and only requires simple extensions of our 
lemmas for the two-dimensional grids.
Therefore, we will reference the necessary lemmas from previous sections and
show the minor modifications needed to give analogous lemmas for the
$d$-dimensional grid. The upper bound proof requires several key lemmas and
is more involved, whereas extending the lower bound only requires one
simple addition to our proof in Section~\ref{sec:lower}.

\subsection{Upper Bounding the Transience Class}

Since Theorem~\ref{thm:reduction} from~\cite{ChoureV12} relies on non-sink
vertices having constant degree, our assumptions that $d$ is constant and
that all non-sink vertices have degree $2d$.
In addition to utilizing properties of one-dimensional walks,
specifically Lemma~\ref{lem:probLeftRight} proven in Section~\ref{sec:path},
the proof of our upper bound relies on four key lemmas:

\begin{itemize}
  \item Lemma~\ref{lem:swapSourceTarget} --- The source vertex can be swapped
    with a any non-sink vertex while only losing a $O(\log{n})$ approximation
    factor in the potential.
	
  \item Lemma~\ref{lem:sumBound} --- An upper bound on the sum of all vertex
    potentials by factoring the expression into one-dimensional vertex
    potentials.
	
  \item Lemma~\ref{lem:nnIsMin} --- For any vertex,
    the opposite corner vertex minimizes the potential up to a constant.
	
  \item  Lemma~\ref{lem:voltageLower} --- A lower bound on the vertex potential
    $\ppi^{}_{(n,n)}\left( \uu \right)$, for any $\uu$ in the
    top-right quadrant of $\square{n}$.
	
\end{itemize}

Now we describe how to extend each of these lemmas to constant dimensions.
These results almost immediately follow from decoupling walks
into one-dimensional walks.

\begin{lemma}
	\label{lem:swapSourceTargetHigher}
	For any pair of non-sink vertices $\uu$ and $\vv$ in $\cube{n}$, we have
	\[
    \ppi^{}_{\uu}\left(\vv\right)
  \leq \left(8\log n + 4\right) \ppi^{}_{\vv}\left(\uu\right).
	\]
\end{lemma}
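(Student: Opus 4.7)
The plan is to mirror the proof of the two-dimensional version, Lemma~\ref{lem:swapSourceTarget}, by combining the potential reciprocity theorem with appropriate effective resistance bounds. Since Lemma~\ref{lem:potReciprocity} holds for arbitrary graphs with a sink (not only for the two-dimensional grid), it immediately yields
\[
  \er(\sink,\uu)\,\ppi_{\uu}(\vv)=\er(\sink,\vv)\,\ppi_{\vv}(\uu)
\]
in $\cube{n}$, so the entire task reduces to proving a $d$-dimensional analog of Lemma~\ref{lem:boundedERGrid}: for every non-sink vertex $\uu$ of $\cube{n}$, we want $c_1 \le \er(\sink,\uu) \le 2\log n + 1$ for some positive constant $c_1$ depending only on $d$.

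For the lower bound, I would use the standard current-conservation argument. If a unit of current is injected at $\uu$ and drawn out at $\sink$, the resulting voltage at $\uu$ equals $\er(\sink,\uu)$, and each of the $\deg{\uu}$ edges incident to $\uu$ carries current equal to its own voltage drop, which is at most $\er(\sink,\uu)$. Balancing total outflow with the injected unit gives $\er(\sink,\uu) \ge 1/\deg{\uu} = 1/(2d)$, a positive constant since $d$ is fixed. For the upper bound, I would either (a) mimic the proof of Lemma~\ref{lem:boundedERGrid} by constructing an explicit unit flow from $\uu$ to $\sink$ along a coordinate axis toward the nearest boundary face and bounding its energy via Thompson's principle, or (b) invoke Rayleigh monotonicity by freezing all but two coordinates to expose a copy of $\square{n}$ containing $\uu$ as a spanning subgraph of $\cube{n}$ (after identifying the extra boundary edges with the sink) and then apply Lemma~\ref{lem:boundedERGrid} directly, using that adding back the removed edges only lowers effective resistance. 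Either route delivers $\er(\sink,\uu) \le 2\log n + 1$.

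Combining these two bounds with the reciprocity identity yields
\[
  \ppi_{\uu}(\vv) \le \frac{\er(\sink,\vv)}{\er(\sink,\uu)}\,\ppi_{\vv}(\uu) \le 2d\,(2\log n + 1)\,\ppi_{\vv}(\uu),
\]
which matches the claimed $(8\log n + 4)$ when $d=2$ and remains $O(\log n)$ for any fixed $d$, which is all that is needed for the downstream calculations in the proof of \Cref{thm:DimD}. The only nontrivial ingredient is the upper bound on effective resistance, but the Rayleigh-monotonicity route is essentially free once Lemma~\ref{lem:boundedERGrid} is in hand, so I do not anticipate any significant obstacle; the main thing to be careful about is that the two-dimensional slice inherits its sink from the $d$-dimensional sink, which is immediate because every boundary vertex of the slice is also on the boundary of $\cube{n}$.
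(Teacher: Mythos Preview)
Your approach is essentially the same as the paper's: both invoke Rayleigh's monotonicity to restrict to a two-dimensional slice of $\cube{n}$ and inherit the effective-resistance upper bound from Lemma~\ref{lem:boundedERGrid}, then feed this into potential reciprocity (Lemma~\ref{lem:potReciprocity}). The paper's proof is terser and does not spell out the lower-bound step or the reciprocity substitution; you are also right to flag that the literal constant $8\log n + 4$ only arises when $d=2$ (since the degree-based lower bound gives $\er(\sink,\uu)\ge 1/(2d)$ rather than $1/4$), and that the resulting $O(\log n)$ factor for fixed $d$ is all that the downstream argument requires.
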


\begin{proof}
This is a consequence of Rayleigh's monotonicity theorem. Fix an
underlying $n \times n$ subgraph of the hypercube with corners at the source
and sink, and set the rest of the resistors to infinity. The upper bound
for the $n \times n$ grid is an upper bound for the hypercube.
\end{proof}

Our lemma analogous to Lemma~\ref{lem:sumBound} follows
from Lemma~\ref{lem:swapSourceTargetHigher} and
decoupling walks into one-dimension.

\begin{lemma}\label{lem:sumBoundHigher}
For any non-sink vertex $\uu$ in $\cube{n}$,
\[
  \sum_{\vv \in V} \pi^{}_{\uu}(\vv) = O\left(\log{n} \prod_{i = 1}^d \uu_i \log{n}\right).
\]
\end{lemma}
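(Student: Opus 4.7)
The plan is to mirror the proof of Lemma~\ref{lem:sumBound}, replacing Lemma~\ref{lem:swapSourceTarget} with Lemma~\ref{lem:swapSourceTargetHigher} and extending the two-dimensional decoupling to $d$ dimensions. By the symmetry of $\cube{n}$, I may assume without loss of generality that $\uu_i \le \lceil n/2 \rceil$ for every $1 \le i \le d$. Applying Lemma~\ref{lem:swapSourceTargetHigher} replaces each $\pi^{}_{\uu}(\vv)$ with $O(\log n) \cdot \pi^{}_{\vv}(\uu)$, reducing the task to bounding $\sum_{\vv} \pi^{}_{\vv}(\uu)$.

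Next, I would interpret $\pi^{}_{\vv}(\uu)$ via Fact~\ref{fact:voltageRandomWalk} as the probability that a random walk on $\cube{n}$ starting at $\uu$ reaches $\vv$ before the sink, and embed this walk in $\Z^d$. A single step of the walk selects a coordinate $i$ uniformly from $\{1,\ldots,d\}$ and then moves $\pm 1$ in that coordinate with equal probability, so the subsequences of $\pm 1$ moves in the individual coordinates are mutually independent simple symmetric walks on $\Z$. The event that the walk reaches $\vv$ before leaving $[1,n]^d$ is contained in the intersection, over all $i$, of the events that the $i$-th projection reaches $\vv_i$ before leaving $[1,n]$, and by independence
\[
\pi^{}_{\vv}(\uu) \le \prod_{i=1}^d \pi^{\Path{n}}_{\vv_i}(\uu_i).
\]

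Summing over $\vv \in V$ factors the resulting bound across coordinates:
\[
\sum_{\vv \in V} \pi^{}_{\vv}(\uu) \le \prod_{i=1}^d \sum_{\vv_i = 1}^n \pi^{\Path{n}}_{\vv_i}(\uu_i).
\]
Each inner sum is $O(\uu_i \log n)$ by splitting at $\uu_i$: the first $\uu_i$ terms are each at most $1$, while the remaining terms contribute $\sum_{\vv_i > \uu_i} \uu_i/\vv_i = O(\uu_i \log n)$ via the closed-form path potentials used in Lemma~\ref{lem:sumBound}. Multiplying by the $O(\log n)$ factor from the source-target swap produces the claimed bound $O\!\left(\log n \prod_{i=1}^d \uu_i \log n\right)$.

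The main technical point is the careful justification of the $d$-dimensional decoupling: once one conditions on the sequence of coordinate choices, the $\pm 1$ moves within each coordinate are independent, and whether the $i$-th projection hits $\vv_i$ before exiting $[1,n]$ depends only on the subsequence of moves in dimension $i$. Everything else is a direct generalization of the two-dimensional argument, so the primary cost of extending to higher dimensions is essentially bookkeeping through the product over $d$ coordinates rather than any new analytic difficulty.
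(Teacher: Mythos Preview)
Your proposal is correct and is precisely what the paper intends: its proof of this lemma consists solely of the sentence ``Follow the proof structure of \Cref{lem:sumBound},'' and you have faithfully carried out that extension, invoking \Cref{lem:swapSourceTargetHigher} in place of \Cref{lem:swapSourceTarget} and generalizing the decoupling and factorization from two to $d$ coordinates.
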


\begin{proof}
Follow the proof structure of \Cref{lem:sumBound}.
\end{proof}

We can also generalize our proof of Lemma~\ref{lem:nnIsMin} to higher
dimensions, because we work with each dimension independently.

\begin{lemma}\label{lem:cornerIsMinForHigherDimension}
If $\uu$ is a non-sink vertex of $\cube{n}$ such that
$1 \le \uu_i \leq \lceil n/2 \rceil$ for all $1 \le i \le d$, then
\[
  \ppi^{}_{\uu}\left(\vv\right) \geq
  \left(\frac{1}{2d}\right)^d \ppi^{}_{\uu}\left( \left(n,n,\dots,n\right) \right).
\]
\end{lemma}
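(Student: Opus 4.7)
The plan is to extend \Cref{lem:nnIsMin} from two to $d$ dimensions by applying its coordinate-by-coordinate argument $d$ times instead of twice. For $0 \le i \le d$, define the intermediate vertices
\[
\vv^{(i)} = \left(n, n, \ldots, n, \vv_{i+1}, \vv_{i+2}, \ldots, \vv_d\right),
\]
so that $\vv^{(0)} = \vv$ and $\vv^{(d)} = (n, n, \ldots, n)$. The heart of the proof is the single-coordinate comparison
\[
\pi_\uu\!\left(\vv^{(i-1)}\right) \ge \frac{1}{2d}\, \pi_\uu\!\left(\vv^{(i)}\right)
\quad \text{for each } i \in \{1, 2, \ldots, d\}.
\]
Composing these $d$ inequalities yields the stated bound $\pi_\uu(\vv) \ge (1/(2d))^d \pi_\uu((n, \ldots, n))$.

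To prove the single-coordinate claim, I interpret $\pi_\uu(\cdot)$ as a random-walk escape probability via \Cref{fact:voltageRandomWalk} and construct an injection from walks starting at $\vv^{(i)}$ and reaching $\uu$ before $\sink$ into walks starting at $\vv^{(i-1)}$ with the same property. The injection is the natural $d$-dimensional analogue of the axis-aligned map used for \Cref{lem:nnIsMin}: it acts nontrivially only on the $i$-th coordinate, reflecting that coordinate of each step about the same axis-aligned hyperplane that the 2D argument reflects about, while leaving all other $d-1$ coordinates unchanged. The assumption $1 \le \uu_i \le \lceil n/2 \rceil$ guarantees that the reflection moves $\vv_i$ in the correct direction and that the reflected walk remains inside $\{1, \ldots, n\}^d$, so it does not hit $\sink$ prematurely. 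The weight loss is exactly $1/(2d)$ per coordinate, corresponding to a single random step that must land in the chosen direction, which occurs with probability $1/(2d)$ on the hypercube (whereas the 2D version loses $1/4$ per coordinate, matching $1/(2 \cdot 2)$).

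The main obstacle is verifying that the axis-aligned reflection from the 2D proof extends cleanly to $d$ dimensions without incurring additional constant overhead, and in particular that the map remains an injection. Since the reflection acts on only one coordinate at a time, I would reduce to the two-dimensional case by restricting attention to the two-dimensional slice obtained by fixing all coordinates other than the $i$-th (and the auxiliary one, if any, used by the 2D reflection). The constraints $\uu_j \le \lceil n/2 \rceil$ for each $j$ carry over slice by slice, so the geometric hypotheses of \Cref{lem:nnIsMin} are satisfied in each slice. No new probabilistic ingredients beyond those already developed in \Cref{lem:nnIsMin} are needed; the only care required is bookkeeping the degree factor $2d$ in place of $4$ at each of the $d$ reflection steps.
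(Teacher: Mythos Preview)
Your approach matches the paper's: iterate the reflection argument of \Cref{lem:nnIsMin} one coordinate at a time, reflecting across the $(d-1)$-dimensional hyperplane perpendicular to the $i$-th axis at each step, and pick up a factor of $1/(2d)$ per coordinate in place of $1/4$.

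One correction worth flagging: the map $f$ in the proof of \Cref{lem:nnIsMin} is \emph{not} an injection, so ``verifying that the map remains an injection'' is the wrong target. When the image walk $w'$ never meets the reflecting hyperplane, its preimage $f^{-1}(w')$ is infinite---the original walk reaches the reflected point $\uu^*$ and may then wander arbitrarily before finally hitting $\uu$. What the 2D proof actually establishes is the weight inequality
\[
\frac{1}{4}\sum_{w\in f^{-1}(w')} 4^{-|w|}\;\le\;4^{-|w'|},
\]
handling the infinite-preimage case by summing the tails to $\ppi_{\uu}(\uu^*)\le 1$. In $d$ dimensions you need the analogous inequality with $4$ replaced by $2d$; the $1/(2d)$ factor accounts for the possible single parity-fixing step (so $|w|$ may be $|w'|-1$), not for injectivity. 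If you carry the 2D preimage analysis through with this in mind, the argument goes through exactly as you outline.
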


\begin{proof}
Extend the proof of \Cref{lem:nnIsMin} by reflecting walks across the
$(d-1)$-dimensional hyperplane perpendicular to the
chosen axis instead of a line.
\end{proof}

Lastly, we generalize Lemma~\ref{lem:voltageLower}, where the key
idea was to considers walks of length $\Theta(n^2)$ and show that there is a
constant fraction such that both dimensions have taken $\Theta(n^2)$ steps,
which allows us to apply Lemma~\ref{lem:probLeftRight} for each
possible walk.
To do this, we essentially
union bound Lemma~\ref{lem:chernoff} over $d$ dimensions, which shows that
$\Theta(n^2)$ walk lengths take $\Theta(n^2)$ steps in each direction with
probability at least $2^{-d}$. 

\begin{lemma}
	\label{lem:voltageLowerHigher}
  For $n \geq 10$ and $\uu \in V(\cube{n})$ such that 
	$1 \leq \uu_i \leq \lceil n/2 \rceil$ for $1 \le i \le d$, we have
	\[
    \ppi^{}_{(n,n,\dots,n)}\left( \uu \right)
	= \Omega \left(\frac{\prod_{i=1}^d\uu_i }{n^{3d - 2}} \right).
	\]
\end{lemma}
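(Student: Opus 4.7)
The plan is to generalize the proof of Lemma~\ref{lem:voltageLower} to $d$ dimensions, decoupling each walk in $\mathcal{W}(\uu \rightarrow (n,\ldots,n))$ into $d$ interleaved one-dimensional walks $\WW_1,\ldots,\WW_d$ on $\line$ and interpreting $\ppi_{(n,\ldots,n)}(\uu)$ as the probability that every $\WW_i$ is present at $n$ at a common global step while no $\WW_i$ leaves $[1,n]$ beforehand. Restricting to walks in which each $\WW_i$ visits $n$ exactly once and indexing $t$-step walks by the tuple $(t_1,\ldots,t_d)$ of per-dimension step counts, \Cref{lem:normalizingConstant} yields
\[
\ppi_{(n,\ldots,n)}(\uu) \;\geq\; \sum_{t_1,\ldots,t_d \geq 1} \frac{1}{d^{\,t_1+\cdots+t_d}} \binom{t_1+\cdots+t_d}{t_1,\ldots,t_d} \prod_{i=1}^{d} P_i(t_i),
\]
where $P_i(t_i)$ is the probability that a one-dimensional walk from $\uu_i$ first reaches $n$ at step $t_i$ without leaving $[1,n]$ beforehand, already incorporating the final $1/2$ step-selection factor used in Lemma~\ref{lem:voltageLower}.

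Next, I restrict the summation to walk lengths $t = t_1 + \cdots + t_d \in [n^2/5, n^2/4]$ with $t \equiv \sum_i (n-\uu_i) \pmod 2$ and to tuples with each $t_i \in [t/(2d), 3t/(2d)]$ and $t_i \equiv n-\uu_i \pmod 2$. On this range each $t_i = \Theta(n^2)$ lies in the regime of Lemma~\ref{lem:probLeftRight}, which gives $P_i(t_i) = \Omega(\uu_i / n^3)$ with a constant depending only on $d$. The multinomial mass $\binom{t}{t_1,\ldots,t_d}/d^t$ summed over tuples whose coordinates all lie in $[t/(2d), 3t/(2d)]$ is controlled by a $d$-coordinate analog of Lemma~\ref{lem:chernoff}: since each $t_i$ is marginally $\mathrm{Binomial}(t, 1/d)$, a standard Chernoff estimate combined with a union bound over the $d$ coordinates shows that this total mass is at least $1 - 2d\exp(-\Omega(t/d))$, which is $\Omega(1)$ whenever $t = \Theta(n^2)$ and $d$ is constant. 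Requiring the correct parity on the first $d-1$ coordinates (the last being forced by $\sum_i t_i = t$) costs only a further $2^{-(d-1)}$ factor, absorbed into the $d$-dependent constant.

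Combining these ingredients, for each admissible $t$ the inner sum over valid tuples contributes $\Omega\!\left(\prod_{i=1}^{d}\uu_i/n^{3d}\right)$, and summing over the $\Theta(n^2)$ admissible values of $t$ gives the target bound $\ppi_{(n,\ldots,n)}(\uu) = \Omega\!\left(\prod_{i=1}^{d} \uu_i / n^{3d-2}\right)$. The main obstacle is formalizing the multinomial concentration together with the parity counting uniformly: unlike the two-dimensional setting, where Lemma~\ref{lem:chernoff} gave an explicit $2/5$ lower bound, here one must verify that the combined Chernoff and parity loss depends only on $d$ (and not on $n$), which is precisely what lets \Cref{thm:DimD} absorb these factors into its implicit constant while treating $d$ as a fixed integer.
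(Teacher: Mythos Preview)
Your proposal is correct and follows essentially the same approach as the paper: decouple into $d$ one-dimensional walks, write the multinomial lower bound, restrict to $t=\Theta(n^2)$ with each $t_i=\Theta(n^2/d)$, apply Lemma~\ref{lem:probLeftRight} coordinatewise, and sum over $\Theta(n^2)$ values of $t$. The only notable difference is in the concentration step: you use a Chernoff bound plus a union bound over the $d$ coordinates to show the multinomial mass on the good window is $\Omega(1)$, whereas the paper uses a conditioning argument (fix $n^2/c$ steps in dimensions $2,\dots,d$, then argue the remaining steps give $t_1\ge n^2/c$ with probability at least $1/2$, and multiply over dimensions to get $2^{-d}$). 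Both arguments yield a constant depending only on $d$; your Chernoff route is arguably cleaner and makes the uniformity in $n$ more explicit, while the paper's version avoids invoking a tail bound at the cost of a slightly informal median-versus-mean step.
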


\begin{proof}
Decouple walks $\WW \in \mathcal{W}^{} ( \uu \rightarrow (n,n,\dots,n))$ into
one-dimensional walks $\WW_i \in \mathcal{W}^{\line}(\uu_i)$, and view
$\ppi^{}_{(n,n,\dots,n)}\left( \uu \right)$ as the probability that each walk
$\WW_i$ is present on $n$ at the same time before any leaves the interval $[1,n]$. If
each walk takes $t_1,t_2,\dots,t_d$ steps, respectively, then the total
number of possible interleavings of these walks is the multinomial
\[
  {t_1 + t_2 + \dots + t_d \choose t_1,t_2, \dots ,t_d}.
\]
Just as before, we can obtain the lower bound
\begin{align*}
  &\ppi^{}_{(n,n,\dots,n)}\left( \uu \right) \geq \sum_{t_1,t_2,\dots, t_d \geq 0} \frac{ {t_1 + t_2 + \dots + t_d \choose t_1,t_2, \dots ,t_d} }{d^{t_1 + t_2 + \dots + t_d}}
  \prod_{i=1}^d \frac{1}{2}
  \prob{}{\ww_i^{(t_i-1)} = n-1,
	\maP{t_i - 1} = n -1, 
	\miP{t_i-1} \geq 1 }.
\end{align*}

To apply Lemma~\ref{lem:probLeftRight} to each walk, we need each~$t_i$
to be in the interval $[n^2/c,n^2/4]$, for $c = 16d$.
Then we consider all walks of length
  \[ \frac{n^2}{8} \leq t \leq \frac{n^2}{4},\] where $t = t_1 + t_2 + \cdots + t_d$, and show that a constant
fraction of these walks satisfy $t_i \geq n^2/c$ with
$t_i$ having the correct parity. Note that we can ignore the parity
conditions by simply lower bounding the probability of all having correct
parity by $4^{-d}$. It then remains to show that all walks satisfy the
  inequality $t_i \geq n^2/c$ with constant probability. 

Consider the probability that $t_1 \geq n^2/c$.
The other dimensions follow identically. Letting each dimension take at least
$n^2/c$ steps introducing dependence, so we instead consider the probability
that $t_1 \geq n^2/c$ and condtion on $t_2,t_3,\dots,t_d \geq n^2/c$ (which
can only decrease the probability of the event $t_1 \geq n^2/c$). This is equivalent to
fixing $n^2/c$ steps in each of those directions and randomly choosing all
remaining steps with probability $1/d$ for each direction. The remaining number
of steps is then at least $dn^2/c$ by our assumption that $t \geq
n^2/8$. Therefore, the expected number of steps in the first
dimension is at least $n^2/c$, which implies $t_1 \geq n^2/c$
with probability at least $1/2$. Multiplying this probability over all
dimensions gives $t_i \geq n^2/c$ with probability at least $2^{-d}$.

Thus, there are $O(n^2)$ values of $t$ that we can decompose
into one-dimensional walks, each occurring with constant probability.
Applying Lemma~\ref{lem:probLeftRight} to each decomposition
and summing \[
  \Omega\left(\prod_{i=1}^d \frac{\uu_i}{n^{3}}\right)\] over
$O(n^2)$ possible walks proves the claim.
\end{proof}


Now we prove $\tcl(\cube{n}) = O(n^{3d-2}\log^{d+2}{n})$ using Theorem~\ref{thm:reduction}.
For any $\uu = (\uu_1, \uu_2,...,\uu_d)$ in the top-left orthant of $\cube{n}$,
it follows that
\begin{align*}
 \max_{\uu, \vv \in V\setminus\{\sink\}}
  \left( \sum_{\xx \in V}\pi^{}_{\uu}(\xx) \right)
  \pi^{}_{\uu}(\vv)^{-1} 
  &\le \max_{\uu \in V\setminus\{\sink\}} 
  \left(\sum_{\xx \in V} \ppi^{}_{\uu}\left( \xx \right)\right)
  \frac{\left(2d\right)^d}{\ppi^{}_{\uu}\left((n)^d \right)}\\
&= \max_{\uu \in V\setminus\{\sink\}}
  \left(\sum_{\xx \in V} \ppi^{}_{\uu}\left( \xx \right)\right)
  \frac{O\left(\log{n}\right)}{\ppi^{}_{(n)^d} \left( \uu\right) }\\
&= \max_{\uu \in V\setminus\{\sink\}}	
  O\left(\log{n} \prod_{i = 1}^d \uu_i \log{n}\right) 
O\left(\frac{ n^{3d - 2}\log{n}}{\prod_{i=1}^d\uu_i } \right)\\
&= O \left(n^{3d - 2}\log^{d+2}{n}\right).
\end{align*}
\noindent

\subsection{Lower Bounding the Transience Class}

Extending our lower bound to $d$-dimensional hypergrids is a simple
consequence of decoupling $d$-dimensional walks into one-dimensional walks,
because we only need to generalize the upper bound in
Lemma~\ref{lem:upperBoundMaxSum} to 
\begin{align*}
  \pi^{}_{(n)^d}\left((1)^d\right) &\leq
    \max_t{\left\{\prob{\WW \sim \mathcal{W}^{}(1)}{
      \ww^{(t)} = n, \maP{t} = n, \miP{t} \geq 1  } \right\}}\\
    & \hspace{0.42cm}\cdot d\sum_{t \geq 0} \prob{\WW \sim \mathcal{W}^{}(1)}{
      \ww^{(t)} = n, \maP{t} = n, \miP{t} \geq 1  },
	\end{align*}
by replacing the negative binomial distribution with the negative multinomial distribution.
	

\begin{fact}\label{fact:binSumInfHigherDimension}
For any nonnegative integer $t_1$, we have
\[
  \sum_{t_2,\dots, t_d \geq 0} {t_1 + t_2 + \dots + t_d \choose t_1,t_2, \dots ,t_d}\frac{1}{d^{t_1 + t_2 + \dots + t_d}} = d.
\]
\end{fact}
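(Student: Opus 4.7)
The plan is to mirror the strategy of Fact~\ref{fact:binSumInf} exactly, simply replacing the negative binomial distribution with its natural higher-dimensional analogue, the negative multinomial distribution. In the $d=2$ proof, the key observation was that $\binom{t_1+t_2}{t_2} 2^{-(t_1+1+t_2)}$ is the PMF of a negative binomial random variable (the number of tails before the $(t_1+1)$-th head in a sequence of fair coin flips), so its sum over $t_2 \ge 0$ equals $1$, giving $2$ after clearing the extra factor of $2$. The same idea works verbatim with $d$ outcomes instead of $2$.

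Concretely, I would set up the following probabilistic model. Consider an infinite sequence of IID trials, each producing one of $d$ outcomes with probability $1/d$. Let $N$ be the index of the trial at which the $(t_1+1)$-th outcome of type $1$ occurs, and let $t_i$ (for $i = 2, \dots, d$) be the number of type-$i$ outcomes among the first $N-1$ trials. The joint PMF is
\[
  \Pr[t_2, \dots, t_d]
  = \binom{t_1 + t_2 + \cdots + t_d}{t_1, t_2, \dots, t_d} \left(\frac{1}{d}\right)^{t_1+t_2+\cdots+t_d+1},
\]
where the multinomial coefficient counts arrangements of the first $N-1 = t_1 + t_2 + \cdots + t_d$ trials and the extra factor of $1/d$ accounts for the final type-$1$ trial at position $N$. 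Since the event "some such $N$ exists" has probability $1$, summing the PMF over all $(t_2, \dots, t_d) \in \Z_{\ge 0}^{d-1}$ gives $1$. Multiplying both sides by $d$ yields the claimed identity.

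If a purely algebraic proof is preferred, the same result falls out by grouping terms: write $k = t_2 + \cdots + t_d$, factor $\binom{t_1+k}{t_1, t_2, \dots, t_d} = \binom{t_1+k}{t_1} \binom{k}{t_2, \dots, t_d}$, apply the multinomial theorem $\sum_{t_2+\cdots+t_d=k} \binom{k}{t_2, \dots, t_d} = (d-1)^k$ to the inner sum, and finish with the generating-function identity $\sum_{k \ge 0} \binom{t_1+k}{t_1} x^k = (1-x)^{-(t_1+1)}$ at $x = (d-1)/d$, which evaluates to $d^{t_1+1}$ and cancels the leading $d^{-t_1}$ to leave $d$. There is no real obstacle here; the lemma is essentially a one-line corollary of the negative multinomial distribution, and the only thing to double-check is that the multinomial-coefficient bookkeeping matches the exponent of $1/d$ in the statement.
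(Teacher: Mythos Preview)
Your proposal is correct and takes exactly the same approach as the paper: the paper's proof consists of the single sentence ``Consider the proof of Fact~\ref{fact:binSumInf} using the negative multinomial distribution,'' which is precisely your primary argument. Your write-up is in fact more detailed than the paper's, and the alternative algebraic derivation you include is a nice bonus that the paper does not provide.
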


\begin{proof}
Consider the proof of Fact~\ref{fact:binSumInf} using the negative multinomial
distribution.
\end{proof}
Thus, we can apply Lemma~\ref{lem:upperProbLeftRight} and
Lemma~\ref{lem:upperBoundSum} to show
\begin{align*}
  \pi^{}_{(n)^d}\left((1)^d\right) &= O\left(\left(\frac{1}{n^3}\right)^{d-1}
  \frac{1}{n}\right)
  = O\left(n^{-3d + 2}\right).
\end{align*}
By Theorem~\ref{thm:reduction}, we have $\tcl(\cube{n}) = \Omega(n^{3d - 2})$.

\section*{Acknowledgments}
We thank Marcel Celaya, Wenhan Huang, Wenhao Li, Pinyan Lu, Richard Peng,
Dana Randall, and Chi Ho Yuen
for various helpful discussions. We also thank the anonymous reviewers for
their insightful comments, especially those that pointed out a major issue with
Lemma~\ref{lem:voltageLower} in a previous version of this manuscript.

\begin{appendix}
\section{Appendix for \Cref{sec:background}}
\label{sec:app_background}


In this section, we prove a relationship between voltage potentials and the
probability of a random walk escaping at the source instead of the sink.

\normalizingConstant*

\begin{proof}
By definition, we have
\[
  \ppi^{}_{\uu}\left(\vv\right) = \frac{\sum_{\WW \in \mathcal{W}^{}\left(\vv \rightarrow \uu\right)} 4^{-|\WW|}}{\sum_{\WW \in \mathcal{W}^{}(\vv\rightarrow \{\uu, \sink\})} 4^{-|\WW|}}.
\]
For any $\vv\in V(\square{n})$, let
\[
  f(\vv) = \sum_{\WW \in \mathcal{W}^{}(\vv\rightarrow \{\uu,\sink\})} 4^{-|\WW|}
\]
be the normalizing constant for $\ppi^{}_{\uu}\left(\vv\right)$.  It
follows that $f(\uu) = 1$ and $f(\sink) = 1$, because the only such walk for each
has length 0. 
For all other $\vv \in V(\square{n})\setminus\{\uu,\sink\}$, we have
\[
  f(\vv) = \frac{1}{4} \sum_{\xx\sim\vv} f(\xx).
\]
Therefore, $f(\vv)$ is a harmonic function with constant boundary values,
so $f(\vv) = 1$ for all vertices $\vv \in V(\square{n})$.
\end{proof}

%

We also verify that the effective resistance between $\sink$ and any internal vertex
is bounded between $\Omega(1)$ and $O(\log n)$ using a triangle
inequality for effective resistances and the fact that the effective resistance
between opposite corners
in an $n\times n$ resistor network is $\Theta(\log n)$.
This proof easily generalizes to any pair of vertices in $\square{n}$.

\begin{proposition}[{\cite{LevinPW09}}]\label{prop:cornerToCornerER}
Let $G$ be an $n\times n$ network of unit resistors.
If $u$ and $v$ are vertices at opposite corner vertices, then
$
  \log(n-1)/2 \le \er^{}\left(u,v\right) \le 2 \log n.
$
\end{proposition}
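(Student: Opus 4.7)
The plan is to use Rayleigh's monotonicity principle for the upper bound and the Nash-Williams inequality for the lower bound. Both arguments exploit the anti-diagonal structure of the grid: index the anti-diagonals by $d = 1, 2, \ldots, 2n-1$, where diagonal $d$ consists of the vertices $\{(i,j) : i+j = d+1\}$, so that $u$ sits on diagonal $1$ and $v$ on diagonal $2n-1$. A direct count shows that exactly $2d$ edges connect diagonal $d$ to diagonal $d+1$ when $1 \le d \le n-1$, and $2(2n-1-d)$ such edges exist when $n \le d \le 2n-2$. These edge sets form pairwise edge-disjoint cuts, each of which separates $u$ from $v$ in $G$.

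For the upper bound, I would short together all vertices lying on the same anti-diagonal; by Rayleigh's monotonicity, the effective resistance only decreases. The resulting quotient graph is a path of length $2n-2$ in which consecutive super-vertices are joined by $2d$ (or symmetrically $2(2n-1-d)$) parallel unit resistors, which combine to a single resistor of resistance equal to the reciprocal of the count. Summing along the path yields
\[
\er(u,v) \le \sum_{d=1}^{n-1}\frac{1}{2d} + \sum_{d=n}^{2n-2}\frac{1}{2(2n-1-d)} = \sum_{d=1}^{n-1}\frac{1}{d} = H_{n-1} \le 1 + \log(n-1) \le 2\log n
\]
for all $n \ge 2$.

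For the lower bound, I would apply Nash-Williams' inequality using the same anti-diagonal cuts. Since they are edge-disjoint and each separates $u$ from $v$, we obtain
\[
\er(u,v) \ge \sum_{d=1}^{n-1}\frac{1}{2d} + \sum_{d=n}^{2n-2}\frac{1}{2(2n-1-d)} = H_{n-1} \ge \log(n-1) \ge \frac{\log(n-1)}{2}.
\]
Both tools are standard in electrical network theory, so there is no deep conceptual obstacle; the main point to get right is the edge count on each diagonal cut (especially the transition between the growing and shrinking halves at $d = n-1, n$) and verifying that the elementary bounds on the harmonic number $H_{n-1}$ suffice for the stated constants $1/2$ and $2$.
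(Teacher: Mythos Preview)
Your Nash--Williams argument for the lower bound is correct and even yields the sharper estimate $\er(u,v) \ge H_{n-1} \ge \log(n-1)$. The upper-bound argument, however, has the inequality backwards. Shorting (identifying all vertices on a common anti-diagonal) can only \emph{decrease} effective resistance; Rayleigh's monotonicity therefore gives
\[
H_{n-1} \;=\; \er^{\text{shorted}}(u,v) \;\le\; \er^{G}(u,v),
\]
which is another \emph{lower} bound, not an upper bound. This is no accident: collapsing the slabs between consecutive edge-disjoint cutsets is precisely the mechanism behind the Nash--Williams inequality, so your two arguments are really the same computation stated twice, and neither controls $\er(u,v)$ from above.

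For the upper bound you need the dual variational principle. Thomson's principle gives $\er(u,v) \le \mathcal{E}(\theta)$ for any unit flow $\theta$ from $u$ to $v$, so it suffices to exhibit a flow of energy $O(\log n)$. The tempting assignment of $1/(2d)$ on every edge between diagonals $d$ and $d+1$ is \emph{not} a legal flow: an interior vertex on diagonal $d$ receives total inflow $1/(d-1)$ from diagonal $d-1$ but emits only $1/d$ to diagonal $d+1$, so Kirchhoff's current law fails. One must instead construct a genuine unit flow from $u$ to $v$ and bound its Dirichlet energy directly; this is the route taken in \cite{LevinPW09}. (The paper itself supplies no proof of this proposition; it is simply quoted from that reference.)
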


\boundedERGrid*

\begin{proof}
We first prove the lower bound
\[
  1/4 \le \er^{}(\sink, \uu).
\]
The effective resistance between $\sink$ and $\uu$ is the reciprocal of the
  total current flowing into the circuit when $\ppi^{}_{\uu}(\uu) = 1$
and $\ppi^{}_{\uu}(\sink) = 0$.
Since $\ppi^{}_{\uu}$ is a harmonic function,
  we have $\ppi^{}_{\uu}(\vv) \ge 0$ for all $\vv \in V(\square{n})$.
Moreover, $\deg{\uu}=4$, so
\[
  \er^{}(\sink, \uu) = \left(\sum_{\vv\sim\uu}
         \ppi^{}_{\uu}(\uu) - \ppi^{}_{\uu}(\vv) \right)^{-1}
  \ge \frac{1}{4}.
\]

For the upper bound, we use Rayleigh's monotonicity law,
\Cref{prop:cornerToCornerER}, and the triangle inequality for effective
resistances to show that
\[
  \er^{}(\sink, \uu) \le 2 \log n + 1,
\]
for $n$ sufficiently large.
Rayleigh's monotonicity law \cite{DoyleS84}
states that if the resistances of a circuit are increased, the effective
resistance between any two points can only increase.
The following triangle inequality for effective resistances is given in
\cite{Tetali91}:
\[
  \er^{}\left(u,v\right) \le \er\left(u,x\right) + \er\left(x,v\right).
\]

Define $H$ to be the subgraph of $\square{n}$ obtained by deleting $\sink$ and all
edges incident to $\sink$.
Let $m$ be the largest positive integer such that
$\uu_1 + i \le n$ and $\uu_2 + j \le n$ for all $0 \le i, j < m$, and
let $H(\uu)$ be the subgraph of $H$ induced by the vertex set
\[
  \{(\uu_1 + i, \uu_2 + j) : 0 \le i, j < m\}.
\]
We can view $H(\uu)$ as the largest square resistor network in $H$ such that
$\uu$ is the top-left vertex.
Let $\vv = [\uu_1 + m-1, \uu_2 + m-1]$ be the bottom-right vertex in $H(\uu)$.
Using infinite resistors to remove every edge in
$E(\square{n})\setminus E(H(\uu))$, we have
\[
  \er^{\square{n}} (\vv, \uu) \le \er^{H(\uu)} (\vv, \uu)
\]
by Rayleigh's monotonicity law.
\Cref{prop:cornerToCornerER} implies that
\[
  \er^{H(\uu)} (\vv, \uu) \le 2 \log n
\]
since $m \le n$. The vertex $\vv$ is incident to $\sink$ in $\square{n}$, so
Rayleigh's monotonicity law gives
\[
  \er^{\square{n}}(\sink, \vv) \le 1.
\]
By the triangle inequality for effective resistances, we have
\[
 \er^{}(\sink, \uu) \le
   \er^{}(\sink, \vv) + \er^{}(\vv, \uu)
   \le 2\log n + 1,
\]
which completes the proof.
\end{proof}

\section{Appendix for \Cref{sec:grid}}
\label{sec:app_grid}


We use the random walk interpretation of voltage to prove
\Cref{lem:nnIsMin}.
The key idea is that the voltage on the boundary opposite of $\uu$ along any
axis is less by a constant factor. This projection can be iterated along an
axis in each dimension.


\nnIsMin*

\begin{proof}
We use \Cref{lem:normalizingConstant} to decompose $\ppi^{}_{\uu}(\vv)$
as a sum of probabilities of walks, and
then construct maps for all $1 \le \vv_1, \vv_2 \le n$ to show
\begin{align*}
  \ppi^{}_{\uu} \left( \left(\vv_1, \vv_2 \right)\right)
    & \geq 
  \max\left\{
    \frac{1}{4} \ppi^{}_{\uu}\left( \left(n, \vv_2 \right)\right),~
  \frac{1}{4} \ppi^{}_{\uu}\left( \left(\vv_1, n \right)\right)
  \right\}.
\end{align*}

\noindent
We begin by considering the first dimension:
  \[\ppi^{}_{\uu} \left( (\vv_1, \vv_2) \right) \ge 
  \frac{\ppi^{}_{\uu} \left((n, \vv_2)\right)}{4}.
  \]
Let $\ell_{\text{hor}}$ be the horizontal line of reflection passing through
$(\lceil(\vv_1 + n)/2\rceil, 1)$ and $(\lceil(\vv_1 + n)/2\rceil, n)$
in $\mathbb{Z}^2$,
and let $\uu^*$ be the reflection of $\uu$ over $\ell_{\text{hor}}$.
Note that $\uu^*$ may be outside of the $n \times n$ grid.
Next, define the map
\[
  f : \mathcal{W}^{}((n,\vv_2) \rightarrow \uu) \rightarrow
      \mathcal{W}^{}((\vv_1,\vv_2) \rightarrow \uu)
\]
as follows.
For any walk $w \in \mathcal{W}^{}((n,\vv_2) \rightarrow \uu)$:
\begin{enumerate}
  \item Start the walk $f(w)$ at $(\vv_1,\vv_2)$, and if $n-\vv_1$ is odd
    move to $(\vv_1+1,\vv_2)$.
  \item Perform $w$ but make opposite vertical moves before the walk hits
    $\ell_{\text{hor}}$, so that the partial walk is a reflection over
    $\ell_{\text{hor}}$.
  \item After hitting $\ell_{\text{hor}}$ for the first time, continue
    performing $w$, but now use the original vertical moves.
  \item Terminate this walk when it first reaches $\uu$.
\end{enumerate}

\noindent
Denote the preimage of a walk
$w' \in \mathcal{W}^{}((\vv_1,\vv_2) \rightarrow \uu)$ under $f$ to be
\[
  f^{-1}\left(w'\right)
  = \left\{w \in \mathcal{W}^{}((n,\vv_2) \rightarrow \uu) : f(w) = w'\right\}.
\]
We claim that for any
$w' \in \mathcal{W}^{\square{n}}((\vv_1,\vv_2) \rightarrow \uu)$, 
\[
  \frac{1}{4} \sum_{w \in f^{-1}(w')} 4^{-|w|} \le 4^{-|w'|}.
\]
If $f^{-1}(w') = \emptyset$ the claim is true, so assume
$f^{-1}(w') \ne \emptyset$. We analyze two cases.
If $w'$ hits $\ell_{\text{hor}}$, then $f^{-1}(w')$ contains exactly one walk
$w$ of length $|w'|$ or $|w'|-1$. If $w'$ does not hit $\ell_{\text{hor}}$,
then
\begin{align*}
  f^{-1}(w') = \{w \in \mathcal{W}^{}((n,\vv_2) \rightarrow \uu) :
  \text{$w$ is a reflection of
  $w'$ over $\ell_{\text{hor}}$
  before $w$ hits $\uu^*$}\}.
\end{align*}
It follows that any walk $w \in f^{-1}(w')$ can be split into $w=w_1 w_2$,
where $w_1$ is the unique walk from $(n,\vv_2)$ to $\uu^*$ that is a reflection of 
$w'$, and $w_2$ is a walk from $\uu^*$ to $\uu$ that avoids $\sink$
and hits $\uu$ exactly once upon termination. Clearly $w_1$ has length
$|w'|$ or $|w'|-1$, and the set of admissible $w_2$ is
$\mathcal{W}^{}(\uu^* \rightarrow \uu)$. Therefore,
\begin{align*}
  \frac{1}{4} \sum_{w \in f^{-1}(w')} 4^{-|w|} &=
  4^{-|w_1|-1} \sum_{w_2 \in \mathcal{W}^{}(\uu^* \rightarrow \uu)}
  4^{-|w_2|}\\
  &= 4^{-|w_1|-1} \ppi^{}_{\uu}\left(\uu^*\right)\\
  &\le 4^{-|w'|},
\end{align*}
since $\ppi^{}_{\uu}(\uu^*)$ is an escape probability.
Summing over all
$w' \in \mathcal{W}^{}((\vv_1,\vv_2) \rightarrow \uu)$,
it follows from \Cref{lem:normalizingConstant} and the previous inequality that
\begin{align*}
  \ppi^{}_{\uu}\left((\vv_1,\vv_2)\right)
  &= \sum_{w' \in \mathcal{W}^{}((\vv_1,\vv_2) \rightarrow \uu)} 4^{-|w'|}\\
  &\ge \sum_{w' \in \mathcal{W}^{}((\vv_1,\vv_2) \rightarrow \uu)}
          \frac{1}{4} \sum_{w \in f^{-1}(w')} 4^{-|w|}\\
  &\ge \frac{1}{4}\ppi^{}_{\uu}\left((n,\vv_2)\right),
\end{align*}
because every $w \in \mathcal{W}^{}((n,\vv_2) \rightarrow \uu)$ is the
preimage of a $w' \in \mathcal{W}^{}((\vv_1,\vv_2) \rightarrow \uu)$.

Similarly, we can show that $\ppi^{}_{\uu}\left((\vv_1, \vv_2)\right)
\geq \ppi^{}_{\uu}\left((\vv_1,n)\right) / 4$ for all $1 \le \vv_1 \le n$
by reflecting walks over the vertical line from
$(1, \lceil (n+ \vv_2)/2 \rceil)$ to $(n, \lceil (n + \vv_2)/2 \rceil)$.
Combining inequalities proves the claim.
\end{proof}


Lastly, we give a constant lower bound for the probability of an $n$-step
simple symmetric walk being sufficiently close to its starting position by using
the recursive definition of binomial coefficients and a Chernoff bound for
symmetric random variables.

\chernoff*



\begin{proof}
First observe that for $n \ge 10$, we have
\[
	\frac{1}{2^n} \sum_{\substack{k = \left\lceil \frac{n}{4} \right\rceil \\ k 		\text{ odd}}}^{\left\lfloor \frac{3n}{4} \right\rfloor} \binom{n}{k}
	\ge
	\frac{1}{2^n} \sum_{k \in \left(\frac{n-1}{4}, \frac{3(n-1)}{4} \right) } \binom{n - 1}{k}
\]
and
\[
	\frac{1}{2^n} \sum_{\substack{k = \left\lceil \frac{n}{4} \right\rceil \\ k 
	\text{ even}}}^{\left\lfloor \frac{3n}{4} \right\rfloor} \binom{n}{k}
	\ge
	\frac{1}{2^n} \sum_{k \in \left(\frac{n-1}{4}, \frac{3(n-1)}{4} \right) } \binom{n - 1}{k}.
\]
To see this, use the parity restriction and expand the summands as
\[
  \binom{n}{k} = \binom{n-1}{k-1} + \binom{n-1}{k}.
\]
Let $X_1, X_2, \dots, X_{n-1}$ be independent Bernoulli random
variables such that $\prob{}{X_i = 0} = 1/2$ and
$\prob{}{X_i = 1} = 1/2$.
Let $S_{n-1} = X_1 + X_2 + \dots + X_{n-1}$ and $\mu =
E[S_{n-1}] = (n-1)/2$.
Using a Chernoff bound, we have
\begin{align*}
  \frac{1}{2^n} \sum_{k \in \left(\frac{n-1}{4}, \frac{3(n-1)}{4} \right) } \binom{n - 1}{k}
  &= \frac{1}{2}\left(1 -  \prob{}{\left|S_{n-1} - \mu \right| \ge \frac{1}{2} \mu}\right)\\
  &\ge \frac{1}{2} - e^{-(n-1)/24}\\
  &\ge \frac{2}{5},
\end{align*}
for $n \ge 60$. Checking the remaining cases numerically when $10 \le n < 60$
proves the claim.
\end{proof}

\end{appendix}

{
\bibliographystyle{alpha}
\bibliography{ref}

\newcommand{\etalchar}[1]{$^{#1}$}
\begin{thebibliography}{BdACA{\etalchar{+}}16}

\bibitem[Asc11]{Aschwanden11}
Markus Aschwanden.
\newblock {\em Self-Organized Criticality in Astrophysics: The Statistics of
  Nonlinear Processes in the Universe}.
\newblock Springer Science \& Business Media, 2011.

\bibitem[Bak96]{Bak96}
Per Bak.
\newblock {\em How Nature Works: The Science of Self-Organized Criticality}.
\newblock Copernicus, 1996.

\bibitem[BBD{\etalchar{+}}13]{BecchettiBDKM13}
Luca Becchetti, Vincenzo Bonifaci, Michael Dirnberger, Andreas Karrenbauer, and
  Kurt Mehlhorn.
\newblock Physarum can compute shortest paths: Convergence proofs and
  complexity bounds.
\newblock In {\em Proceedings of the 40th International Colloquium on Automata,
  Languages, and Programming \textnormal{(ICALP)}}, pages 472--483. Springer,
  2013.

\bibitem[BCFR17]{Bhakta17}
Prateek Bhakta, Ben Cousins, Matthew Fahrbach, and Dana Randall.
\newblock Approximately sampling elements with fixed rank in graded posets.
\newblock In {\em Proceedings of the Twenty-Eighth Annual ACM-SIAM Symposium on
  Discrete Algorithms \textnormal{(SODA)}}, pages 1828--1838. Society for
  Industrial and Applied Mathematics, 2017.

\bibitem[BdACA{\etalchar{+}}16]{BrochiniAAASK16}
Ludmila Brochini, Ariadne de~Andrade~Costa, Miguel Abadi, Ant{\^o}nio~C. Roque,
  Jorge Stolfi, and Osame Kinouchi.
\newblock Phase transitions and self-organized criticality in networks of
  stochastic spiking neurons.
\newblock {\em Scientific reports}, 6:35831, 2016.

\bibitem[BG07]{BabaiG07}
L{\'a}szl{\'o} Babai and Igor Gorodezky.
\newblock Sandpile transience on the grid is polynomially bounded.
\newblock In {\em Proceedings of the Eighteenth Annual ACM-SIAM Symposium on
  Discrete Algorithms \textnormal{(SODA)}}, pages 627--636. Society for
  Industrial and Applied Mathematics, 2007.

\bibitem[BLS91]{BjornerLS91chip}
Anders Bj{\"o}rner, L{\'a}szl{\'o} Lov{\'a}sz, and Peter~W. Shor.
\newblock Chip-firing games on graphs.
\newblock {\em European Journal of Combinatorics}, 12(4):283--291, 1991.

\bibitem[BMV12]{BonifaciMV12}
Vincenzo Bonifaci, Kurt Mehlhorn, and Girish Varma.
\newblock Physarum can compute shortest paths.
\newblock In {\em Proceedings of the Twenty-Third Annual ACM-SIAM Symposium on
  Discrete Algorithms \textnormal{(SODA)}}, pages 233--240. Society for
  Industrial and Applied Mathematics, 2012.

\bibitem[BPR15]{BiondoPR15}
Alessio~Emanuele Biondo, Alessandro Pluchino, and Andrea Rapisarda.
\newblock Modeling financial markets by self-organized criticality.
\newblock {\em Physical Review E}, 92(4):042814, 2015.

\bibitem[BTW87]{BakTW87}
Per Bak, Chao Tang, and Kurt Wiesenfeld.
\newblock Self-organized criticality: An explanation of the 1/$f$ noise.
\newblock {\em Physical Review Letters}, 59(4):381, 1987.

\bibitem[CKM{\etalchar{+}}11]{ChristianoKMST11}
Paul Christiano, Jonathan~A. Kelner, Aleksander Madry, Daniel~A. Spielman, and
  Shang-Hua Teng.
\newblock Electrical flows, {L}aplacian systems, and faster approximation of
  maximum flow in undirected graphs.
\newblock In {\em Proceedings of the Forty-Third {ACM} Symposium on Theory of
  Computing \textnormal{(STOC)}}, pages 273--282. Association for Computing
  Machinery, 2011.

\bibitem[CV12]{ChoureV12}
Ayush Choure and Sundar Vishwanathan.
\newblock Random walks, electric networks and the transience class problem of
  sandpiles.
\newblock In {\em Proceedings of the Twenty-Third Annual ACM-SIAM Symposium on
  Discrete Algorithms \textnormal{(SODA)}}, pages 1593--1611. Society of
  Industrial and Applied Mathematics, 2012.

\bibitem[Dha90]{Dhar90}
Deepak Dhar.
\newblock Self-organized critical state of sandpile automaton models.
\newblock {\em Physical Review Letters}, 64(14):1613, 1990.

\bibitem[Dha06]{Dhar06}
Deepak Dhar.
\newblock Theoretical studies of self-organized criticality.
\newblock {\em Physica A: Statistical Mechanics and its Applications},
  369(1):29--70, 2006.

\bibitem[DRSV95]{DharRSV95}
Deepak Dhar, Philippe Ruelle, Siddhartha Sen, and D.-N. Verma.
\newblock Algebraic aspects of abelian sandpile models.
\newblock {\em Journal of Physics A: Mathematical and General}, 28(4):805,
  1995.

\bibitem[DS84]{DoyleS84}
Peter~G. Doyle and J.~Laurie Snell.
\newblock {\em Random Walks and Electric Networks}.
\newblock Mathematical Association of America, 1984.

\bibitem[ES77]{Everett77}
C.J. Everett and P.R. Stein.
\newblock The combinatorics of random walk with absorbing barriers.
\newblock {\em Discrete Mathematics}, 17(1):27--45, 1977.

\bibitem[ESVM{\etalchar{+}}11]{EllensSVJK11}
W.~Ellens, F.M. Spieksma, P.~Van~Mieghem, A.~Jamakovic, and R.E. Kooij.
\newblock Effective graph resistance.
\newblock {\em Linear Algebra and its Applications}, 435(10):2491--2506, 2011.

\bibitem[HLM{\etalchar{+}}08]{HolroydLMPPW08}
Alexander~E. Holroyd, Lionel Levine, Karola M{\'e}sz{\'a}ros, Yuyal Peres,
  James Propp, and David~B. Wilson.
\newblock Chip-firing and rotor-routing on directed graphs.
\newblock {\em In and Out of Equilibrium 2}, pages 331--364, 2008.

\bibitem[JLP15]{JerisonLP15}
Daniel~C. Jerison, Lionel Levine, and John Pike.
\newblock Mixing time and eigenvalues of the abelian sandpile {M}arkov chain.
\newblock Preprint,
  \href{https://arxiv.org/abs/1511.00666v1}{\texttt{arXiv:1511.00666v1}}, 2015.

\bibitem[KG09]{KronG09}
Thomas Kron and Thomas Grund.
\newblock Society as a self-organized critical system.
\newblock {\em Cybernetics \& Human Knowing}, 16(1):65--82, 2009.

\bibitem[LHG07]{LevinaHG07}
Anna Levina, J.~Michael Herrmann, and Theo Geisel.
\newblock Dynamical synapses causing self-organized criticality in neural
  networks.
\newblock {\em Nature physics}, 3(12):857--860, 2007.

\bibitem[Lov93]{Lovasz93}
L{\'a}szl{\'o} Lov{\'a}sz.
\newblock Random walks on graphs: A survey.
\newblock {\em Combinatorics, Paul Erd\H{o}s is Eighty}, 2:1--46, 1993.

\bibitem[LPW09]{LevinPW09}
David~Asher Levin, Yuval Peres, and Elizabeth~Lee Wilmer.
\newblock {\em Markov Chains and Mixing Times}.
\newblock American Mathematical Society, 2009.

\bibitem[Mad13]{Madry13}
Aleksander Madry.
\newblock Navigating central path with electrical flows: From flows to
  matchings, and back.
\newblock In {\em Proceedings of the 54th Annual IEEE Symposium on Foundations
  of Computer Science \textnormal{(FOCS)}}, pages 253--262. Institute of
  Electrical and Electronics Engineers, 2013.

\bibitem[Man91]{Manna91}
S.S. Manna.
\newblock Two-state model of self-organized criticality.
\newblock {\em Journal of Physics A: Mathematical and General}, 24(7):L363,
  1991.

\bibitem[Meh13]{Mehlhorn13}
Kurt Mehlhorn.
\newblock Physarum computations.
\newblock In {\em Proceedings of the 30th International Symposium on
  Theoretical Aspects of Computer Science \textnormal{(STACS)}}, pages 5--6.
  Schloss Dagstuhl, 2013.

\bibitem[MTN94]{Mineshige94}
S.~Mineshige, M.~Takeuchi, and H.~Nishimori.
\newblock Is a black hole accretion disk in a self-organized critical state?
\newblock {\em The Astrophysical Journal}, 435:L125--L128, 1994.

\bibitem[Phi14]{Phillips14}
J.C. Phillips.
\newblock Fractals and self-organized criticality in proteins.
\newblock {\em Physica A: Statistical Mechanics and Its Applications},
  415:440--448, 2014.

\bibitem[RAM09]{RamosAM09}
O.~Ramos, Ernesto Altshuler, and K.J. Mal{\o}y.
\newblock Avalanche prediction in a self-organized pile of beads.
\newblock {\em Physical Review Letters}, 102(7):078701, 2009.

\bibitem[RB79]{Rota79}
Gian-Carlo Rota and Kenneth~A. Baclawski.
\newblock {\em Introduction to Probability and Random Processes}.
\newblock 1979.

\bibitem[RS17]{Ramachandran17}
Akshay Ramachandran and Aaron Schild.
\newblock Sandpile prediction on a tree in near linear time.
\newblock In {\em Proceedings of the Twenty-Eighth Annual ACM-SIAM Symposium on
  Discrete Algorithms \textnormal{(SODA)}}, pages 1115--1131. Society of
  Industrial and Applied Mathematics, 2017.

\bibitem[SMM14]{SabaMM14}
H.~Saba, J.G.V. Miranda, and M.A. Moret.
\newblock Self-organized critical phenomenon as a $q$-exponential
  decay--{A}valanche epidemiology of dengue.
\newblock {\em Physica A: Statistical Mechanics and its Applications},
  413:205--211, 2014.

\bibitem[SS89]{SornetteS89}
A.~Sornette and D.~Sornette.
\newblock Self-organized criticality and earthquakes.
\newblock {\em Europhysics Letters}, 9(3):197, 1989.

\bibitem[SV16a]{StraszakV16:arxiv}
Damian Straszak and Nisheeth~K. Vishnoi.
\newblock {IRLS} and slime mold: Equivalence and convergence.
\newblock Preprint,
  \href{https://arxiv.org/abs/1601.02712v1}{\texttt{arXiv:1601.02712v1}}, 2016.

\bibitem[SV16b]{StraszakV16a}
Damian Straszak and Nisheeth~K. Vishnoi.
\newblock Natural algorithms for flow problems.
\newblock In {\em Proceedings of the Twenty-Seventh Annual {ACM-SIAM} Symposium
  on Discrete Algorithms \textnormal{(SODA)}}, pages 1868--1883. Society of
  Industrial and Applied Mathematics, 2016.

\bibitem[SV16c]{StraszakV16b}
Damian Straszak and Nisheeth~K. Vishnoi.
\newblock On a natural dynamics for linear programming.
\newblock In {\em Proceedings of the 2016 {ACM} Conference on Innovations in
  Theoretical Computer Science \textnormal{(ITCS)}}, page 291. Association for
  Computing Machinery, 2016.

\bibitem[SW94]{ScheinkmanW94}
Jose~A. Scheinkman and Michael Woodford.
\newblock Self-organized criticality and economic fluctuations.
\newblock {\em The American Economic Review}, 84(2):417--421, 1994.

\bibitem[Tet91]{Tetali91}
Prasad Tetali.
\newblock Random walks and the effective resistance of networks.
\newblock {\em Journal of Theoretical Probability}, 4(1):101--109, 1991.

\bibitem[Wil10]{Wilson10}
David~B. Wilson.
\newblock Dimension of the loop-erased random walk in three dimensions.
\newblock {\em Physical Review E}, 82(6):062102, 2010.

\bibitem[WPC{\etalchar{+}}16]{WatkinsPCCJ16}
Nicholas~W. Watkins, Gunnar Pruessner, Sandra~C. Chapman, Norma~B. Crosby, and
  Henrik~J. Jensen.
\newblock 25 years of self-organized criticality: {C}oncepts and controversies.
\newblock {\em Space Science Reviews}, 198(1-4):3--44, 2016.

\bibitem[WWAM06]{WijngaardenWAM06}
Rinke~J. Wijngaarden, Marco~S. Welling, Christof~M. Aegerter, and Mariela
  Menghini.
\newblock Avalanches and self-organized criticality in superconductors.
\newblock {\em The European Physical Journal B--Condensed Matter and Complex
  Systems}, 50(1):117--122, 2006.

\end{thebibliography}
}

\end{document}